\newcommand{\jesus}[1]{}
\newcommand{\e}{\mathbb{E}}
\newcommand{\p}{\mathbb{P}}
\newcommand{\real}{\mathbb{R}}
\newtheorem{theorem}{Theorem}
\newtheorem{assumption}{Assumption}
\newtheorem{proposition}[theorem]{Proposition}
\newtheorem{lemma}[theorem]{Lemma}
\newtheorem{corollary}[theorem]{Corollary}
\theoremstyle{definition}
\newtheorem{definition}{Definition}
\theoremstyle{remark}
\newcommand{\bA}{\mathbf{A}}
\newcommand{\bB}{\mathbf{B}}
\newcommand{\bC}{\mathbf{C}}
\newcommand{\bD}{\mathbf{D}}
\newcommand{\bE}{\mathbf{E}}
\newcommand{\bF}{\mathbf{F}}
\newcommand{\bG}{\mathbf{G}}
\newcommand{\bH}{\mathbf{H}}
\newcommand{\bI}{\mathbf{I}}
\newcommand{\bM}{\mathbf{M}}
\newcommand{\bN}{\mathbf{N}}
\newcommand{\bV}{\mathbf{V}}
\newcommand{\bU}{\mathbf{U}}
\newcommand{\bX}{\mathbf{X}}
\newcommand{\bP}{\mathbf{P}}
\newcommand{\bQ}{\mathbf{Q}}
\newcommand{\bR}{\mathbf{R}}
\newcommand{\bS}{\mathbf{S}}
\newcommand{\bs}{\mathbf{s}}
\newcommand{\bW}{\mathbf{W}}
\newcommand{\by}{\mathbf{y}}
\newcommand{\bZ}{\mathbf{Z}}
\newcommand{\bPi}{\mathbf{\Pi}}
\newcommand{\bLambda}{\mathbf{\Lambda}}
\newcommand{\bXi}{\mathbf{\Xi}}
\newcommand{\bSigma}{\mathbf{\Sigma}}
\algnewcommand\algorithmicinput{\textbf{Input:}}
\algnewcommand\Input{\item[\algorithmicinput]}
\algnewcommand\algorithmiciterate{\textbf{Iterate:}}
\algnewcommand\Iterate{\item[\algorithmiciterate]}
\algnewcommand\algorithmicinitialize{\textbf{Initialize:}}
\algnewcommand\Initialize{\item[\algorithmicinitialize]}
\algnewcommand\algorithmicoutput{\textbf{Output:}}
\algnewcommand\Output{\item[\algorithmicoutput]}
\algnewcommand\RETURN{\State \algorithmicreturn}%
\DeclareMathOperator*{\argmin}{argmin}
\newcommand{\rdpg}[1]{\operatorname{RDPG}\left(#1\right)}
\newcommand{\sbm}[1]{\operatorname{SBM}\left(#1\right)}
\newcommand{\cosie}[1]{\operatorname{COSIE}\left(#1\right)}
\title{
Inference for Multiple Heterogeneous Networks with a Common Invariant Subspace
}
\author[1]{Jes\'us Arroyo\thanks{\href{mailto:jesus.arroyo@jhu.edu}{jesus.arroyo@jhu.edu}}}
\author[2]{Avanti Athreya}
\author[3]{Joshua Cape}
\author[2]{Guodong Chen}
\author[2]{Carey E.~Priebe}
\author[4]{Joshua T.~Vogelstein\thanks{\href{mailto:jovo@jhu.edu}{jovo@jhu.edu}}}
\affil[1]{Center for Imaging Science, Johns Hopkins University}
\affil[2]{Department of Applied Mathematics and Statistics, Johns Hopkins University}
\affil[3]{Department of Statistics, University of Pittsburgh}
\affil[4]{Department of Biomedical Engineering, Kavli Neuroscience Discovery Institute, Johns Hopkins University}
\begin{document}
	
	\maketitle

\begin{abstract}
{The development of models and methodology for the analysis of data from multiple heterogeneous networks is of importance both in statistical network theory and across a wide spectrum of application domains. Although single-graph inference is well-studied, multiple graph inference is largely unexplored, in part because of the challenges inherent in appropriately modeling graph differences and yet retaining sufficient model simplicity to render estimation feasible. This paper addresses exactly this gap, by introducing a new model, the common subspace independent-edge (COSIE) multiple random graph model, which describes a heterogeneous collection of networks with a shared latent structure on the vertices but potentially different connectivity patterns for each graph. The COSIE model encompasses many popular network representations, including the stochastic blockmodel. The COSIE model is both flexible enough to meaningfully account for important graph differences and tractable enough to allow for accurate inference in multiple networks. In particular, a joint spectral embedding of adjacency matrices---the multiple adjacency spectral embedding (MASE)---leads, in a COSIE model, to simultaneous consistent estimation of underlying parameters for each graph. Under mild additional assumptions, MASE estimates satisfy asymptotic normality and yield improvements for graph eigenvalue estimation. In both simulated and real data, the COSIE model and the MASE embedding can be deployed for a number of subsequent network inference tasks, including dimensionality reduction, classification, hypothesis testing and community detection. Specifically, when MASE is applied to a dataset of connectomes constructed through diffusion magnetic resonance imaging, the result is an accurate classification of brain scans by human subject and a meaningful determination of heterogeneity across scans of different subjects.}
\end{abstract}

	\section{Introduction}	
	
	Random graph inference has witnessed extraordinary growth over the last several decades \citep{Goldenberg2009,Kolaczyk2017,Abbe2017}. To date, the majority of work has focused primarily on inference for a single random graph, leaving largely unaddressed the problem of modeling the structure of multiple-graph data, including multi-layered and time-varying graphs \citep{Kivela2014,Boccaletti2014,Holme2012}. Such multiple graph data arises naturally in a wide swath of application domains, including neuroscience \citep{Bullmore2009}, biology \citep{Han2004,Li2011}, and the social sciences \citep{Szell2010,Mucha2010}. 
	
	Several existing models for multiple-graph data require strong assumptions that limit their flexibility \citep{Holland1983,Wang2017,Nielsen2018} or scalability to the size of real-world networks \citep{Durante2017,Durante2018}. Principled approaches to multiple-graph inference \citep{Tang2014,Tang2017,Levin2017,Ginestet2017,Arroyo-Relion2017,Kim2019}  reflect this challenge, in part precisely because of the challenges inherent in constructing a multiple-graph model that adequately captures real-world graph heterogeneity while remaining analytically tractable. The aim, here, is to address exactly this gap.
	
	In this paper, we resolve the following questions: first, can we construct a simple, flexible multiple random network model that can be used to approximate real-world data? Second, for such a model, can we leverage common network structure for accurate, scalable estimation of model parameters, while also being able to describe distributional properties of our estimates? Third, how can we use such estimators in subsequent inference tasks---for instance, multiple-network testing, community detection, or graph classification?
Fourth, how well do such modeling, estimation, and testing procedures perform on simulated and real data, in comparison to current state-of-the-art techniques?

    Towards this end, we present a semiparametric model for multiple graphs with aligned vertices that is based on a common subspace structure between their expected adjacency matrices, but with allowance for heterogeneity within and across the graphs. The common subspace structure has a meaningful interpretation and generalizes several existing models for multiple networks; moreover, the estimation of such a common subspace in a set of networks is an inherent part of well-known graph inference problems like community detection, graph classification, or eigenvalue estimation.

The paper is organized as follows. We start by briefly encapsulating the principal contributions of this paper, and give an overview of the related work. In Section~\ref{sec:model} and Section~\ref{sec:fit}, we give a formal treatment of the COSIE model and MASE procedure. In Section~\ref{sec:theory}, we study the theoretical statistical performance of MASE and provide a bound on the error of estimating the common subspace, and we establish asymptotic normality of the estimates of the individual graph parameters. In Section~\ref{sec:simulations}, we investigate the empirical performance of MASE for estimation and testing when compared to other benchmark procedures. In Section~\ref{sec:data}, we use MASE to analyze connectomic networks of human brain scans. Specifically, we show the ability of our model and estimation procedure to characterize and discern the differences in brain connectivity across subjects. In Section~\ref{sec:concl}, we conclude with a discussion of open problems for future work.

\subsection{Summary of contributions}

The model describes random graphs $G^{(1)}, \cdots, G^{(m)}$ with $n$ labeled vertices whose Bernoulli adjacency matrices $\bA^{(i)}\in\{0,1\}^{n\times n}, 1 \leq i \leq m$, have expectation of the form $\bV \bR^{(i)} \bV^\top$, where $\bR^{(i)}\in\real^{d\times d}$ is a low-dimensional matrix and $\bV\in\real^{n\times d}$ is a matrix with orthonormal columns. Because the invariant subspaces defined by $\bV$ are common to all the graphs, we call this model the common-subspace independent edge (COSIE) random graph model. The $\bR^{(i)}$ matrices are dimension $d \times d$, need not be diagonal, and can vary with each graph. In particular, despite the shared expectation matrix factor $\bV$, each graph $G^{(i)}$ can have a different distribution. 

	Now, to estimate $\bV$ and $\bR^{(i)}$, we rely on the low-rank structure of the expectation matrices in COSIE. Specifically, we first spectrally decompose each of the adjacency matrices $\bA^{(i)}$ to obtain the \emph{adjacency spectral embedding} \citep{Sussman2012} $\widehat{\bX}^{(i)}$, defined as $\widehat{\bX}^{(i)}= \widehat{\bV}^{(i)} |\widehat{\bD}^{(i)}|^{1/2}$, where $\widehat{\bD}^{(i)}$ is the $d \times d$ diagonal matrix of the top $d$ eigenvectors of $\bA^{(i)}$, sorted by magnitude, and $\widehat{\bV}^{(i)}$ the $n \times d$ matrix of associated orthonormal eigenvectors. In the COSIE model, we will leverage the common subspace structure and use these eigenvector estimates $\widehat{\bV}^{(i)}$ to obtain an improved estimate of the true common subspace $\bV$. In fact, we use $\widehat{\bV}^{(i)}$ to build the \emph{multiple adjacency spectral embedding} (MASE), as follows. We define the $n \times m d$ matrix $\widehat{\bU}$ by $\widehat{\bU}=(\widehat{\bV}^{(1)},\cdots,\widehat{\bV}^{(m)})$ 
	and we let $\widehat{\bV}$ be the matrix of the top $d$ leading left singular vectors of $\widehat{\bU}$. Figure~\ref{fig:alg1} provides a graphical illustration of the MASE procedure for a collection of stochastic blockmodels. We set $\widehat{\bR}^{(i)} = \widehat{\bV}^\top  \bA ^{(i)} \widehat{\bV}$.
The MASE algorithm, then, outputs $\widehat{\bV}$ and $\widehat{\bR}^{(i)},$ for $ 1 \leq i \leq m$.

	\begin{figure}
        \centering
        \includegraphics[width=\textwidth]{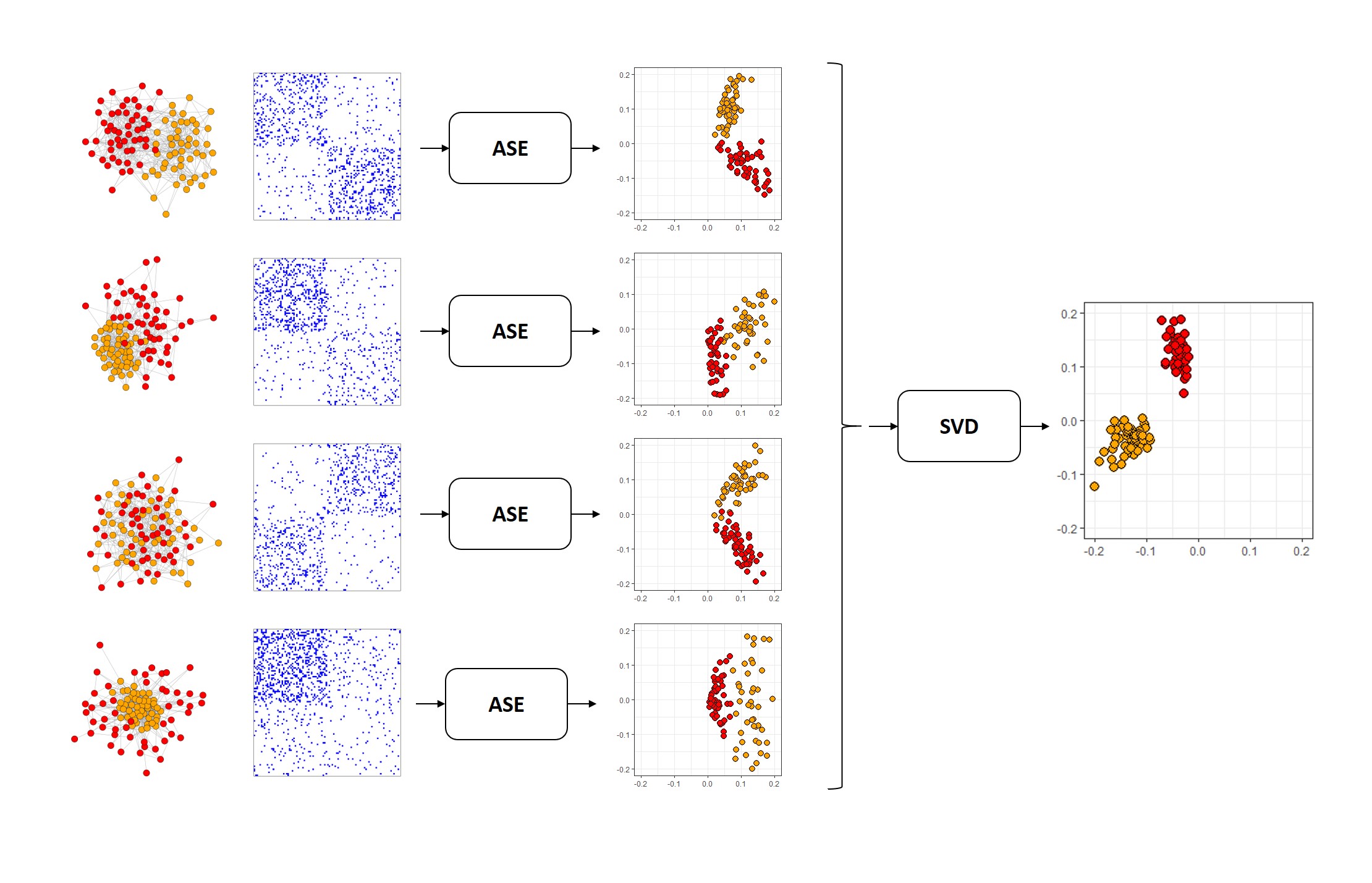}
        \caption{Graphical representation of Algorithm \ref{alg:mase} for estimating the common invariant subspace $\bV$ in a multilayer stochastic blockmodel (see Definition~\ref{definition:multilayerSBM}).}
        \label{fig:alg1}
    \end{figure}


The main contributions of the paper are listed as follows.

\begin{enumerate}
    \item We introduce the \emph{common subspace independent edge} (COSIE) model for multiple graphs. COSIE is a flexible, tractable model for random graphs that retains enough homogeneity---via the common subspace---for ease of estimation, and permits sufficient heterogeneity---via the potentially distinct score matrices $\bR^{(i)}$---to model important collections of different graphs. The COSIE model is appropriate for modeling a variety of real network phenomena. Indeed, the ubiquitous SBM is COSIE, and COSIE encompasses the important case of a collection of vertex-aligned SBMs \citep{Holland1983} whose block assignments are the same but whose block probability matrices may differ (Proposition~\ref{prop:sbm-is-cosie}). COSIE also provides a generalization to the multiple-graph setting of latent position models such as the \emph{random dot product graph} \citep{Young2007}.

    \item  The estimates obtained by MASE provide consistent estimates for the common subspace $\bV$ and the score matrices $\bR^{(i)}$. In particular, Theorem~\ref{thm:V-Vhat} shows that  under mild assumptions on the sparsity of the graphs, there exists a constant $C$ such that for all $n \geq n_0$, we have 
	\begin{equation*}
		\mathbb{E}\left[\min_{\bW\in\mathcal{O}_d}\|\widehat{\bV}-\bV\bW\|_F\right]  \leq C\left(\sqrt{\frac{1}{nm}} + \frac{1}{n}\right), 
	\end{equation*}
 	where $\mathcal{O}_d$ represents the set of $d \times d$ orthogonal matrices. This result employs bounds introduced by \cite{Fan2017} for subspace estimation in distributed covariance matrices.	Furthermore, under delocalization assumptions on $\bV$, we can show that the MASE estimates $\widehat{\bR}^{(i)}$ are asymptotically normal as the size of the graph $n$ increases, with a bias term that decays as $m$ increases. Namely, there exists a sequence of orthogonal matrices $\bW$ such that
	\begin{equation*}
	    \frac{1}{\sigma_{i,k,l}}(\widehat{\bR}^{(i)} - \bW^\top\bR^{(i)}\bW  + \bH)_{kl} \overset{d}{\longrightarrow} \mathcal{N}(0, 1), 
	\end{equation*}
    as $n \rightarrow \infty$, where $\bH$ is a random matrix that satisfies $\e[\|\bH\|_F]=O(\frac{d}{\sqrt{m}})$,  $\sigma^2_{i,k,l}=O(1)$, and $\mathcal{N}(0, 1)$ is the standard normal distribution. The norm of the score matrices typically grows with the size of the graphs, and in particular, our method requires that $\|\bR^{(i)}\|_F=\omega(\log n)$. 
    The asymptotic distribution result then
    shows that the estimates of the score matrices obtained by MASE are accurate, allowing their use in subsequent inference tasks. These results and simulation evidence (see Figure~\ref{fig:eigs-m-hist} in Section~\ref{sec:simulations}) also suggest that the multiplicity of graphs in a COSIE can  reduce the bias in eigenvalue estimation as observed in \cite{Tang2018b}.
   
    \item  The  embedding method MASE can be successfully deployed for a number of subsequent network inference tasks, including dimensionality reduction, classification, hypothesis testing and community detection, and it performs competitively with respect to (and in some cases significantly better than) a number of existing procedures. In Section~\ref{sec:community-detection} we show the ability of the method to perform community detection in multilayer stochastic blockmodels.
    Section \ref{sec:simulations}  examines the empirical performance of MASE in a number  of  graph inference tasks. In particular, hypothesis testing on populations of graphs is a relatively nascent area in which methodologically sound procedures for graph comparison are scarce. Here, we demonstrate a preview (with more details in Section \ref{sec:simulations}) of the transformative impact MASE can have on graph hypothesis testing.  COSIE also provides  foundations for the critical (and wide-open) question of multi-sample graph hypothesis testing, whereby we can use these estimates for $\bV$ and $\bR^{(i)}$ to build principled tests to compare different populations of networks.

     \item COSIE and MASE are robust to the challenges of real data, and provide domain-relevant insights for real data analysis. In particular, COSIE and MASE can be deployed effectively in a pressing real-data problem: that of identifying differences in brain connectivity in medical imaging on human subjects. In Section~\ref{sec:data}, we use COSIE to model a collection of brain networks, specifically data from the HNU1 study \citep{Zuo2014} consisting of diffusion magnetic resonance imaging (dMRI), and show that the MASE procedure elucidates differences across heterogeneous networks but also recognizes important similarities, and provides a rigorous statistical framework within which biologically relevant differences can be assessed.

\end{enumerate}

\subsection{Related work}

Latent space approaches for the multiple graph data setting have been presented before, but they either tend to limit the heterogeneity in the distributions across the graphs, or include a large number of parameters, which complicate the scalability and interpretation. \citet{Levin2017} presents a method to estimate the latent positions for a set of graphs. Although the method can in principle obtain different latent positions for each graph that do not require any further Procrustes alignment, the method is only studied under a joint RDPG model which assumes that the latent positions of all the graphs are the same. \cite{Wang2017} introduced a semiparametric model for graphs that is able to effectively handle heterogeneous distributions. However, their model usually requires a larger number of parameters to represent the same distributions than COSIE, which complicates the  interpretation, and the non-convexity of the problem makes estimation more difficult. In fact, to represent a multilayer SBM with $K$ communities this model requires to embed the graphs in a latent space of dimension $O(K^2)$, compared to $O(K)$ for COSIE (see Proposition~\ref{prop:sbm-is-cosie}). Tensor factorizations  \citep{zhang2018tensor} provide a way to model the structure of a collection of matrices, but current  approaches  that are based on tensor decompositions or other matrix factorizations \citep{Zhang2019,wang2019common,Wang2019}
present challenges on estimation due to the non-convexity of the problem, and interpretation because of the large number of parameters. In particular,  the model introduced in \citep{wang2019common} has a similar factorization structure to COSIE, but the non-convexity of the likelihood complicates the tractability of the model. More recently, \cite{Nielsen2018} studied a multiple RDPG model that is a special case of the model of \cite{Wang2017} by imposing further identifiability constraints.  These constraints limit the ability of their model to represent heterogeneous distributions, and, in fact, it is not possible to obtain equivalent statements to Proposition~\ref{prop:sbm-is-cosie} under this model.  
Bayesian formulations of this problem have also been introduced \citep{Durante2017, Durante2018}, but computational methods to fit these models limit their applicability to much smaller graphs, with vertices numbering only in the dozens. The COSIE model, on the other hand, is both flexible enough to account for important  differences in a collection of heterogeneous graphs, while being tractable enough to allow for accurate and principled inference.

In the single graph setting, \cite{Sussman2012, Lyzinski2014, Athreya2016}  showed that the ASE of a graph, $\widehat{\bX}$, is a consistent, asymptotically normal estimate of underlying graph parameters for the RDPG model.  In \cite{Levin2017}, it is shown that such a spectral embedding can be profitably deployed for estimation and testing in multiple independent graphs, when all the graphs are sampled from the same connection probability matrix $\bP$. Here, the COSIE model provides the ground for studying a multiple heterogeneous graph setting, in which each graph $\bA^{(i)}$ is sampled from a different connection probability $\bP^{(i)}$.
	
   The method for estimating the common invariant subspace is related to other similar methods in the literature.  \cite{Crainiceanu2011} proposed a population singular value decomposition method for representing a sample of rectangular matrices with the same dimensions, and use it to study arrays of images. In a different work, \cite{Fan2017} introduced a method for estimating the principal components of distributed data. Their method computes the leading eigenvectors of the covariance matrix on each server, and then obtains the leading eigenvectors of the average of the subspace projections, which is shown to converge to the solution of PCA with the same rate as if the data were not distributed. Both methods correspond to the unscaled ASE version of MASE. In our case, we show that this particular way of estimating the parameters of the COSIE model results in a consistent estimator of the common invariant subspace, and asymptotically normal estimators of the individual parameters.
   
   The MASE algorithm also provides a simple extension of spectral clustering to the multiple-graph setting when all graphs share the same community structure but possibly different interconnection matrices. This community detection method is scalable and can effectively handle the heterogeneity of the graphs. Other spectral approaches for this problem rely in averaging the adjacency matrices
	\citep{Han,Bhattacharyya2018,paul2020spectral} or a transformation of them \citep{Bhattacharyya2017}, which requires stronger assumptions on the connection probabilities matrices or can increase the computational cost. Other approaches for this problem, including likelihood and modularity maximization \citep{Han,Peixoto,Mucha2010,paul2020spectral} require to solve large combinatorial problems, which usually limits their scalability.

\section{The model: Common subspace independent-edge random graphs (COSIE)\label{sec:model}}
	
	We consider a sample of $m$ observed graphs $G^{(1)},\ldots, G^{(m)}$, with $G^{(i)} = (\mathcal{V},\mathcal{E}^{(i)})$, where $\mathcal{V}=\{1,\ldots,n\}$ denotes a set of $n$ labeled vertices, and $\mathcal{E}^{(i)}\subset \mathcal{V}\times \mathcal{V}$ is the set of edges corresponding to graph $i$. Assume the vertex sets are the same (or at least aligned). Assume the graphs are undirected, with no self-loops (it is worth emphasizing however, that these results can be easily extended to allow for loops, directed or weighted graphs).
	For each graph $G^{(i)}$, denote by $\bA^{(i)}$ to the $n\times n$ adjacency matrix that represents the edges; the matrix $\bA^{(i)}$ is binary, symmetric and hollow, and $\bA^{(i)}_{uv}=1$ if $(u,v)\in \mathcal{E}^{(i)}$. 
		Assume that the above-diagonal entries $\bA^{(i)}_{uv}$, $v>u$, of the adjacency matrix for graph $i$ are independent Bernoulli random variables with $\bP_{uv}$ the probability of an edge between vertex $u$ and vertex $v$. Consolidate these probabilities in the matrix $\bP$. 
	
	To model the graphs we consider an independent-edge random graph framework, in which the edges of a graph $\bA$ are conditionally independent given a probability matrix $\bP\in[0,1]^{n\times n}$, so that each entry of $\bP_{uv}$ denotes the probability of a Bernoulli random variable representing an edge between the corresponding vertices $u$ and $v$, and hence, the probability of observing a graph is given by
	\begin{equation*}
	    \p(\bA|\bP) = \prod_{u>v}\bP_{uv}^{\bA_{uv}}(1-\bP_{uv})^{1-\bA_{uv}}.
	\end{equation*}
	We also write $\bP=\e[\bA|\bP]$. This framework 
	encompasses many popular statistical models for graphs \citep{Holland1983,Hoff2002,Bickel2009} which differ in the way the structure of the matrix $\bP$ is defined. In the single graph setting, imposing some structure in the matrix $\bP$ is necessary to make the estimation problem feasible, as there is only one graph observation. Under this framework, we can characterize the distribution of a sample of graphs $\bA^{(1)},\ldots, \bA^{(k)}$ by modeling their expectations $\bP^{(1)}, \ldots, \bP^{(m)}$. As our goal is to model a heterogenous population of graphs with possible different distributions, we do not assume that the expected matrices are all equal, so as in the single graph setting, further assumptions on the structure of these matrices are necessary.
	
	To introduce our model, we start by reviewing some models for a single graph that motivate our approach. One of such models is the \emph{random dot product graph} (RDPG) defined below.
	
	\begin{definition}[Random dot product graph model \citep{Young2007}] 
	Let $\bX=(X_1, \ldots, X_n)^\top\in\real^{n\times d}$ be a matrix such that the inner product of any two rows satisfy $0\leq X_u^\top X_v\leq 1$. We say that a random adjacency matrix $\bA$ is distributed as a random dot product graph with latent positions $\bX$, and write $\bA \sim \rdpg{\bX}$, if the conditional distribution of $\bA$ given $\bX$ is
	\[\p(\bA|\bX) = \prod_{u>v} (X_u^\top X_v)^{\bA_{uv}}(1 - X_u^\top X_v)^{1-\bA_{uv}}.\]
	\end{definition}
	
	The RDPG is a type of latent space model \citep{Hoff2002} in which the rows of $\bX$ correspond to latent positions of the vertices, and the probability of an edge between a pair of vertices is proportional to the angle between their corresponding latent vectors, which gives an appealing interpretation of the matrix of latent positions $\bX$. Under the RDPG model, the matrix of edge probabilities satisfies $\bP=\bX\bX^\top $, so the RDPG framework contains the class of independent-edge graph distributions for which $\bP$ is a positive semidefinite matrix with rank at most $d$. More recently, 
	 \cite{Rubin-Delanchy2017} introduces the generalized RDPG (GRDPG) model, which extends this framework to the whole class of matrices $\bP$ with rank at most $d$, by introducing a diagonal matrix $\bI(p,q)$ to the model, such that $\bI(p,q)$ is of size $d\times d$ and there are $p$ diagonal entries equal to $1$ and $q$ entries equal to $-1$. Given $\bX$ and $\bI(p,q)$, the edge probabilities  are modeled as $\bP=\bX\bI(p,q)\bX^\top $, which keeps the interpretation of the latent space while extending the class of graphs that can be represented within this formulation.

	The stochastic blockmodel (SBM) \citep{Holland1983} is a particular example of a RDPG in which there are only $K<n$  different rows in $\bX$, aiming to model community structure in a graph. In a SBM, vertices are partitioned into $K$ different groups, so each vertex $u\in\mathcal{V}$ has a corresponding label $z_u\in\{1,\ldots,K\}$. The probability of an edge appearing between two vertices only
	depends on their corresponding labels, and these probabilities are encoded in a matrix $\bB\in\real^{K\times K}$ such that
	\[\p(\bA_{uv}=1|\bB, z_u,z_v) =\bP_{uv}= \bB_{z_uz_v}.\]
	To write the SBM as a RDPG, let $\bZ\in\{0,1\}^{n\times K}$ be a binary matrix that denotes the community memberships of the vertices,  with $\bZ_{uk} = 1$ if $z_u=k$, and 0 otherwise. Then, the edge probability matrix of the SBM is
	\begin{equation}
	    \bP = \bZ\bB\bZ^\top  \label{eq:sbm1},
	\end{equation}
	and denote $\bA\sim\sbm{\bZ, \bB}$. Let $\bB=\bW\bD\bW^\top $ be the eigendecomposition of $\bB$. From this representation, it is easy to see that if we write $\bX = \bZ \bW|\bD|^{1/2}$, then the distribution of a SBM corresponds to a GRPDG graph with latent positions $\bX$. In particular, if $\bB$ is a positive semidefinite matrix, this is also a RDPG. Multiple extensions to the SBM have been proposed in order to produce a more realistic and flexible model, 	including degree heterogeneity \citep{Karrer2011a}, multiple community membership \citep{Airoldi2007}, or hierarchical partitions \citep{Lyzinski2017}. These extensions usually fall within the same framework of Equation~\eqref{eq:sbm1} by placing different constraints on the matrix $\bZ$, and hence they can also be studied within the RDPG model.

    In defining a model for multiple graphs, we adopt a low-rank assumption on the expected adjacency matrices, as the RDPG model and all the other special cases do. To leverage the information of multiple graphs, a common structure among them is necessary. We thus assume that all the expected adjacency matrices of the independent edge graphs share a common invariant subspace, but allow each individual matrix to be different within that subspace. 
    
	\begin{definition}[Common Subspace Independent Edge graphs]
	 Let 
	 $$\bV=(V_1,\ldots,V_n)^\top \in\real^{n\times d}$$ 
	 be a matrix with orthonormal columns, and $\bR^{(1)},\ldots,\bR^{(m)}\in\real^{d\times d}$ be symmetric matrices such that $0\leq V_u^\top \bR^{(i)}V_v\leq 1$ for all $u,v\in[n]$, $i\in[m]$. We say that the random adjacency matrices $\bA^{(1)},\ldots, \bA^{(m)}$ are jointly distributed according to the \emph{common subspace independent-edge} graph model with bounded rank $d$ and parameters $\bV$ and $\bR^{(1)},\ldots,\bR^{(m)}$ if for each $i=1,\ldots,m$, 
	 given  $\bV$ and $\bR^{(i)}$ the entries of each $\bA^{(i)}$ are independent and distributed according to
	 \[\p(\bA^{(i)}|\bV, \bR^{(i)}) = \prod_{u<v} (V_u^\top \bR^{(i)}V_v)^{\bA^{(i)}_{uv}}(1-V_u^\top \bR^{(i)}V_v)^{1-\bA^{(i)}_{uv}}.\]
	 We denote by $(\bA^{(1)},\ldots,\bA^{(m)})\sim \cosie{\bV; \bR^{(1)},\ldots, \bR^{(m)}}$ to the joint model for the adjacency matrices.\label{definition:model}
	\end{definition}
	
	Under the COSIE graph model, the expected adjacency matrices of the graphs share a common invariant subspace $\bV$ that can be interpreted, upon scaling by the score matrices, as the common joint latent positions of the $m$ graphs. This invariant subspace can only be identified up to an orthogonal transformation, so the interpretation of the latent positions is preserved. Note that each graph is marginally distributed as a GRDPG. The score matrices $\bR^{(i)}$ can be expressed as $\bR^{(i)}= \bW^{(i)}\bD^{(i)}(\bW^{(i)})^\top $ as the eigendecomposition of $\bR^{(i)}$, such that $\bW^{(i)}$ is a orthogonal matrix of size $d\times d$, and $\bD^{(i)}$ is a diagonal matrix containing the eigenvalues. Then, the corresponding latent positions of graph $i$ in the GRDPG model are given by $\bX^{(i)}=\bV\bW^{(i)}|\bD^{(i)}|^{1/2}$, with $(|\bD|)_{uv} = |\bD_{uv}|$.

The model also introduces individual score matrices $\bR^{(1)}, \ldots, \bR^{(m)}$ that control the connection probabilities between the edges of each graph. When an $\bR^{(i)}$ is diagonal, its entries contain the eigenvalues of the corresponding probability matrix $\bP^{(i)}$, but in general $\bR^{(i)}$ does not have to be diagonal. Formal identifiability conditions of the model are discussed in Section~\ref{sec:identifiability}.

The COSIE model can capture any distribution of multiple independent-edge graphs with aligned vertices when the model is equipped with a distribution for the score matrices and $d$ is sufficiently large. That is, for any set of probability matrices $\bP^{(1)}, \ldots, \bP^{(m)},$ there exist an embedding dimension $d\leq n$ such that those matrices can be represented with the COSIE model. Indeed, it is trivial to note that if $d=n$, we can set $\bV=\bI$ and $\bR^{(i)}=\bP^{(i)}$. However, for many classes of interest, the embedding dimension $d$ necessary to exactly or approximately represent the graphs is usually much smaller than $n$. In those cases, the COSIE model can effectively reduce the dimensionality of the problem from $O(mn^2)$ different parameters to only $O(nd + md^2)$.

To motivate our model, we consider the \emph{multilayer stochastic blockmodel} for multiple graphs, originally introduced by \cite{Holland1983}. In this model, the community labels of the vertices remain fixed across all the graphs in the population, but the connection probability between and within communities can be different on each graph. The parsimony and simplicity of the model, while maintining heterogeneity across the graphs, has allowed its use in different statistical tasks, including community detection \citep{Han},
multiple graph inference \citep{Pavlovic2019,Kim2019} and modelling time-varying networks \citep{Matias2016}. In brain networks, communities are usually in agreement with  functional brain systems, which are common across individuals \citep{power2011functional}. Formally, the model is defined as follows.

	\begin{definition}[Multilayer stochastic blockmodel \citep{Holland1983}] Let $\bZ\in\{0,1\}^{n\times K}$ be a matrix such that $\sum_{k=1}^K\bZ_{uk}=1$ for each $u\in[n]$, and
		$\bB^{(1)},\ldots, \bB^{(m)}\in[0,1]^{K\times K}$ be symmetric matrices. The random adjacency matrices $\bA^{(1)},\ldots,\bA^{(m)}$ are jointly distributed as a multilayer SBM, denoted by $(\bA^{(1)}, \ldots, \bA^{(m)}) \sim\sbm{\bZ; \bB^{(1)}, \ldots, \bB^{(i)}}$, if each $\bA^{(i)}$ is independently distributed as $\sbm{\bZ, \bB^{(i)}}$.
		\label{definition:multilayerSBM}
	\end{definition}

    		The next proposition formalizes our intuition statement about the semiparametric aspect of the model. Namely, the COSIE graph model can represent a multilayer SBM with $K$ communities using a dimension $d$ that is at most $K$. The proof can be found in the appendix.

	\begin{proposition}
	Suppose that $\bZ\in\{0,1\}^{n\times K}$ and $\bB^{(1)},\ldots, \bB^{(m)}\in[0,1]^{K\times K}$ are the parameters of the multilayer SBM. Then, for some $d\leq K$, there exists a matrix with orthonormal columns $\bV\in\real^{n\times d}$ and symmetric matrices $\bR^{(1)},\ldots, \bR^{(m)}\in\real^{d\times d}$ such that 
	\begin{equation*}
	    \bZ\bB^{(i)}\bZ^\top  = \bV\bR^{(i)}\bV^\top .
	\end{equation*} \label{prop:sbm-is-cosie}
	\end{proposition}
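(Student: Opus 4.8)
The plan is to produce a single orthonormal factor $\bV$ whose column space coincides with that of $\bZ$, and then to absorb all the graph-specific information into the score matrices $\bR^{(i)}$. The key observation is that $\bZ$ does not depend on $i$, so one factorization of $\bZ$ serves all $m$ graphs simultaneously.

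First I would record the structure of $\bZ$. Since each row of $\bZ$ contains exactly one nonzero entry equal to $1$, its columns are the indicator vectors of the $K$ communities and are therefore mutually orthogonal; in particular $\bZ^\top\bZ=\bN$, where $\bN=\operatorname{diag}(n_1,\ldots,n_K)$ and $n_k$ is the size of community $k$. Let $d$ be the number of nonempty communities, so that $\operatorname{rank}(\bZ)=d\leq K$. Discarding the (zero) columns of $\bZ$ corresponding to empty communities, I may assume without loss of generality that $\bN$ is invertible and $\bZ\in\{0,1\}^{n\times d}$ has full column rank $d$.

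Next I would define the common subspace explicitly by normalizing the columns of $\bZ$: set $\bV=\bZ\bN^{-1/2}\in\real^{n\times d}$. Then $\bV^\top\bV=\bN^{-1/2}\bZ^\top\bZ\bN^{-1/2}=\bI_d$, so $\bV$ has orthonormal columns, and $\bZ=\bV\bN^{1/2}$. Substituting this into the expected adjacency matrix gives $\bZ\bB^{(i)}\bZ^\top=\bV(\bN^{1/2}\bB^{(i)}\bN^{1/2})\bV^\top$, so I would define $\bR^{(i)}=\bN^{1/2}\bB^{(i)}\bN^{1/2}\in\real^{d\times d}$. Each $\bR^{(i)}$ is symmetric because $\bB^{(i)}$ is symmetric and $\bN^{1/2}$ is diagonal, which is exactly the required conclusion $\bZ\bB^{(i)}\bZ^\top=\bV\bR^{(i)}\bV^\top$ with $d\leq K$.

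This argument is essentially a direct linear-algebra computation, so there is no serious obstacle; the only points requiring a little care are the bookkeeping for empty communities (which is what forces $d\leq K$ rather than $d=K$) and the verification that the resulting $\bR^{(i)}$ are symmetric. I would also note, although the proposition does not demand it, that the construction automatically yields a bona fide COSIE model: since $V_u^\top\bR^{(i)}V_v=(\bV\bR^{(i)}\bV^\top)_{uv}=(\bZ\bB^{(i)}\bZ^\top)_{uv}=\bB^{(i)}_{z_uz_v}\in[0,1]$, the entrywise constraints of Definition~\ref{definition:model} hold for free. A more basis-free alternative would be to take the thin singular value decomposition $\bZ=\bV\bSigma\bW^\top$ and set $\bR^{(i)}=\bSigma\bW^\top\bB^{(i)}\bW\bSigma$; this avoids the explicit appeal to the orthogonality of the columns of $\bZ$ but gives the same conclusion.
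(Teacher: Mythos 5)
Your proof is correct and takes essentially the same route as the paper: the paper likewise sets $\bV=\bZ(\bZ^\top\bZ)^{-1/2}$ and $\bR^{(i)}=(\bZ^\top\bZ)^{1/2}\bB^{(i)}(\bZ^\top\bZ)^{1/2}$, which is exactly your construction with $\bN=\bZ^\top\bZ$. Your explicit bookkeeping for empty communities is a small refinement the paper skips (it simply asserts $\bZ^\top\bZ$ is full rank), but the underlying argument is identical.
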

	The previous result shows that the multilayer SBM is a special case of the COSIE model.  Conversely, if $\bV\in\real^{n\times d}$ is the invariant subspace of the COSIE model, and $\bV$ has only $K$ different rows, then the COSIE model is equivalent to the multilayer SBM.
	Furthermore, several extensions of the single-graph SBM can be translated directly into the multilayer setting, and are also contained within the COSIE model. By allowing the matrix $\bZ$ to have more than one non-zero value on each row, overlapping memberships can be incorporated \citep{Latouche2011}, and if the rows of $\bZ$ are nonnegative real numbers such that $\sum_{k=1}^K\bZ_{uk}=1$ then we obtain an extension of the mixed membership model \citep{Airoldi2007}, which can further incorporate degree heterogeneity by multiplying the rows of $\bZ$ by a constant \citep{Zhang2014}. More broadly, if the rows of $\bV$ in a COSIE graph are characterized by a hierarchical structure, then the graphs correspond to a multilayer extension of the hierarchical SBM
	\citep{Lyzinski2017}. Some of these extensions have not been presented before, and hence our work provides an avenue for studying these models, which are  interesting in various applications.

	\subsection{Identifiability\label{sec:identifiability}}
	
	In the COSIE model, note that any orthogonal transformation $\bW\in\real^{d\times d}$ of the parameters keeps the probability matrix of the model unchanged. Indeed, observe that
	\[\bP^{(i)} = \bV\bR^{(i)}\bV^\top  = (\bV\bW)(\bW^\top \bR^{(i)}\bW)(\bV\bW)^\top ,\]
	and therefore the most we can hope is that the parameters of the model are identifiable within the equivalence class
	\[\mathcal{L}(\bV, \{\bR^{(i)}\}_{i=1}^m) = \{\bU,\{\bS^{(i)}\}_{i=1}^m| \ \bU =\bV\bW, \bS^{(i)}=\bW^\top \bR^{(i)}\bW, i\in[m]\text{ for some }\bW\in\mathcal{O}_{d}\}.\]
	This non-identifiability is unavoidable in many latent space models, including the RDPG. However, with multiple graphs the situation is more nuanced, as we do not require the probability matrix of each graph to have the same rank. Note that Definition \ref{definition:model} does not restrict the matrices $\bR^{(1)}, \ldots, \bR^{(m)}$ to be full rank, so the rank of each individual graph can be smaller than the dimension of the joint model.

	The following proposition characterizes the identifiability of the model. Recall that for a given matrix $\bR\in\real^{d\times d}$ with singular value decomposition $\bR=\bU_1\bS\bU_2$, where $\bU_1,\bU_2\in\real^{d\times d}$ are orthogonal matrices and $\bS$ is a non-negative diagonal matrix, the spectral norm is given by  $\|\bR\|:=\max_{i\in[d]} \bS_{ii}$, and the Frobenius norm is defined as $\|\bR\|_F:=\left(\sum_{i=1}^d\sum_{j=1}^d\bR_{ij}^2\right)^{1/2}$.
	The proof of the next result is given on the appendix.
	
	\begin{proposition}[Model identifiability.] \label{prop:identifiability} 
	Let $\bV\in\real^{n\times d}$ be a matrix with orthonormal columns and $\bR^{(1)},\ldots, \bR^{(m)}$ be symmetric matrices such that these are the parameters of the bounded rank $d$ COSIE model.
	\begin{enumerate}[a)]
	    \item For any for any pair of indices $i,j\in[m]$, the pairwise spectral distance $\|\bR^{(i)}-\bR^{(j)}\|$ and the Frobenius distance  $\|\bR^{(i)}-\bR^{(j)}\|_F$ are identifiable. 
	    
        \item Define $\widetilde\bR=\left(\bR^{(1)}, \ldots, \bR^{(m)}\right)\in\real^{d\times dm}$. If the matrix $\widetilde\bR$ has a full rank, then $\bV$ is identifiable up to an orthogonal transformation.
	    \item Given  $\bV$, the matrices $\bR^{(1)},\ldots,\bR^{(m)}$ are identifiable. 
	\end{enumerate}
	\end{proposition}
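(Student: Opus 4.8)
The plan is to treat the three parts separately, with the one structural fact underlying all of them being that $\bV$ has orthonormal columns, so that the conjugation map $\bM\mapsto\bV\bM\bV^\top$ is a linear map that preserves both the Frobenius and the spectral norms: writing an SVD $\bM=\bU_1\bS\bU_2^\top$ gives $\bV\bM\bV^\top=(\bV\bU_1)\bS(\bV\bU_2)^\top$ with $\bV\bU_1,\bV\bU_2$ again having orthonormal columns, so the nonzero singular values of $\bV\bM\bV^\top$ coincide with those of $\bM$, and $\|\bV\bM\bV^\top\|_F^2=\mathrm{tr}(\bM^\top\bM\,\bV^\top\bV)=\|\bM\|_F^2$. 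For part a) I would apply this to $\bM=\bR^{(i)}-\bR^{(j)}$. Since $\bP^{(i)}-\bP^{(j)}=\bV(\bR^{(i)}-\bR^{(j)})\bV^\top$ and the left-hand side is determined by the distribution of the data, the identities $\|\bP^{(i)}-\bP^{(j)}\|=\|\bR^{(i)}-\bR^{(j)}\|$ and $\|\bP^{(i)}-\bP^{(j)}\|_F=\|\bR^{(i)}-\bR^{(j)}\|_F$ exhibit both pairwise distances as functions of observable quantities, which is exactly identifiability.

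For part c) the argument is a one-line computation: given the fixed $\bV$, left- and right-multiplying $\bP^{(i)}=\bV\bR^{(i)}\bV^\top$ by $\bV^\top$ and $\bV$ and using $\bV^\top\bV=\bI_d$ yields $\bV^\top\bP^{(i)}\bV=\bR^{(i)}$, so each $\bR^{(i)}$ is uniquely recovered from the observable $\bP^{(i)}$.

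The substantive part is b), where the plan is to recover the column space of $\bV$ directly from the data. I would form the concatenated matrix $\widetilde\bP=(\bP^{(1)},\ldots,\bP^{(m)})\in\real^{n\times nm}$, which is determined by the distribution, and factor it as $\widetilde\bP=\bV(\bR^{(1)}\bV^\top,\ldots,\bR^{(m)}\bV^\top)$. The column space of $\widetilde\bP$ is always contained in that of $\bV$, with equality precisely when $\mathrm{rank}(\widetilde\bP)=d$; and since $\bV$ has full column rank, $\mathrm{rank}(\widetilde\bP)$ equals the rank of the $d\times nm$ matrix $(\bR^{(1)}\bV^\top,\ldots,\bR^{(m)}\bV^\top)$. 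To verify this rank is $d$ under the hypothesis, suppose $a\in\real^d$ satisfies $a^\top\bR^{(i)}\bV^\top=0$ for every $i$; right-multiplying by $\bV$ and using $\bV^\top\bV=\bI_d$ gives $a^\top\bR^{(i)}=0$ for all $i$, i.e. $a^\top\widetilde\bR=0$, which by the full-rank assumption on $\widetilde\bR$ forces $a=0$. Hence $\widetilde\bP$ and $\bV$ share the same $d$-dimensional column space. Since any two matrices with orthonormal columns spanning the same subspace differ by right-multiplication by a $d\times d$ orthogonal matrix, $\bV$ is identified up to an element of $\mathcal{O}_d$, as claimed.

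The main obstacle is the rank bookkeeping in part b): the content of the statement is precisely the link between the full-rank condition on $\widetilde\bR$ and the rank of the observable $\widetilde\bP$. Without the full-rank hypothesis the column space of $\widetilde\bP$ could be a proper subspace of that of $\bV$ — the individual $\bR^{(i)}$ may be rank-deficient and their column spaces may fail to jointly span $\real^d$ — and $\bV$ would then not be recoverable, so the crux is to see that the hypothesis is exactly what rules this out. Parts a) and c) are, by contrast, immediate algebraic consequences of the orthonormality of the columns of $\bV$.
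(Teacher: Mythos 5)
Your proofs of parts a) and c) are essentially the paper's: the paper likewise exploits that conjugation by a matrix with orthonormal columns preserves the spectral and Frobenius norms for a), and recovers $\bR^{(i)}$ by multiplying $\bP^{(i)}$ on the left by $\bV^\top$ and on the right by $\bV$ for c). Part b), however, is where you genuinely diverge, and your argument is correct. The paper works inside the two-representation framework: assuming $\bV\bR^{(i)}\bV^\top = \bU\bS^{(i)}\bU^\top$ for all $i$, it squares and sums these identities to get $\bU\bigl(\sum_i (\bS^{(i)})^2\bigr)\bU^\top = \bV(\widetilde\bR\widetilde\bR^\top)\bV^\top$, uses the full-rank hypothesis to invert $\widetilde\bR\widetilde\bR^\top$, and then \emph{explicitly constructs} the aligning matrix $\bW=\bigl(\sum_i (\bS^{(i)})^2\bigr)\bU^\top\bV(\widetilde\bR\widetilde\bR^\top)^{-1}$, verifying by hand that $\bU\bW=\bV$ and that $\bW$ is orthogonal. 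You instead identify the column space of $\bV$ directly as an observable: concatenating $\widetilde\bP = (\bP^{(1)},\ldots,\bP^{(m)}) = \bV(\bR^{(1)}\bV^\top,\ldots,\bR^{(m)}\bV^\top)$, you show via the left-null-space argument ($a^\top\bR^{(i)}\bV^\top=0$ for all $i$ forces $a^\top\widetilde\bR=0$, hence $a=0$) that $\mathrm{rank}(\widetilde\bP)=d$, so the column space of $\widetilde\bP$ equals that of $\bV$, and you then invoke the standard fact that two orthonormal bases of the same subspace differ by an element of $\mathcal{O}_d$. Both routes use the full-rank hypothesis at exactly the analogous point; yours is more transparent about \emph{why} the hypothesis is the right one (it is precisely what prevents the observable column space from collapsing to a proper subspace of $\mathrm{span}(\bV)$), and it sidesteps the orthogonality verification entirely, while the paper's construction has the minor advantage of producing an explicit formula for the transformation relating any two valid parameterizations.
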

	
	The previous results present identifiability characterizations at different levels. At the weakest level, the first part of Proposition~\ref{prop:identifiability} ensures that even if the parameters are not identifiable, the Frobenius or spectral distances between the score matrices of the graphs are unique. This property allows the use of distance-based methods for any subsequent inference in multiple graph problems, such as multidimensional scaling, $k$-means or $k$-nearest neighbors; some examples are shown in Section~\ref{sec:simulations} and \ref{sec:data}. 
	
	Proposition~\ref{prop:identifiability} also provides an identifiability condition for the invariant subspace $\bV$. The matrices $\bR^{(i)}$ do not need to have the same rank, but the joint model may require a larger dimension to represent all the graphs, which is given by the rank of $\bR$. To illustrate this scenario, consider the multilayer SBM with 3 communities. Let $\bB^{(1)}, \bB^{(2)}\in\real^{3\times 3}$ be real matrices, $\bZ\in\{0,1\}^{n\times 3}$ a community membership matrix, and $a>b$ some constants so that 
	\[\bB^{(1)} = \left(\begin{array}{ccc}
	     a & b & b\\
	     b & a & a\\
	     b & a & a
	\end{array}\right), \quad\quad\quad \bB^{(2)} = \left(\begin{array}{ccc}
	     a & a & b\\
	     a & a & b\\
	     b & b & a
	\end{array}\right).\]
Although the joint model contains three communities, each matrix $\bB^{(i)}$ is rank 2, so each graph individually  only contains two communities. However, since the concatenated matrix whose columns are given by  $\left(\bB^{(1)},\bB^{(2)}\right)$ has full rank (i.e rank 3),
the three communities of the model can be identified in the joint model, and thus this can be represented as a rank-3 COSIE model.
	
	 The last part of Proposition~\ref{prop:identifiability} shows that the  individual parameters of a graph $\bR^{(i)}$ are only identifiable with respect to a given basis of the eigenspace $\bV$, and hence, all the interpretations that can be derived from depending only on $\bV$. Some instances of the COSIE model, including the multilayer SBM, provide specific characterizations of $\bR^{(i)}$ that can facilitate further interpretation of the parameters.

\section{Fitting COSIE by spectral embedding of multiple adjacency matrices \label{sec:fit}}

This section presents a method to fit the
model provided in Definition~\ref{definition:model} to a sample of $m$ adjacency matrices $(\bA^{(1)},\ldots, \bA^{(m)})\sim\cosie{\bV; \bR^{(1)}, \ldots, \bR^{(m)}}$. Fitting the model requires an estimator $\widehat{\bV}$ for the common subspace $\bV$, for which we present a spectral approach. Given an estimated common subspace, we estimate the individual score matrices $\bR^{(i)}$ for each graph by least squares, which has a simple solution in out setting. In all of our analysis, we assume that the dimension of the model $d$ is known in advance, but we present a method to estimate it in practice.
	
	We start by defining the \emph{adjacency spectral embedding} (ASE) of an adjacency matrix $\bA$, which is a standard tool for estimating the latent positions of a RDPG.
	
	\begin{definition}[Adjacency spectral embedding \citep{Sussman2012,Rubin-Delanchy2017}]
	For an adjacency matrix $\bA$, let $\bA=\widehat{\bV}\widehat{\bD}\widehat{\bV}^\top + \widehat{\bV}_\perp\widehat{\bD}_\perp\widehat{\bV}^\top _\perp$ be the eigendecomposition of $\bA$ such that $(\widehat{\bV}, \widehat{\bV}_\perp)$ is the $n\times n$ orthogonal matrix of eigenvectors, with $\widehat{\bV}\in\real^{n\times d}$, $\widehat{\bV}_\perp\in\real^{n\times(n- d)}$, and $\widehat{\bD}$ is a diagonal matrix containing the $d$ largest eigenvalues in magnitude. The \emph{scaled adjacency spectral embedding}  of $\bA$ is defined as $\widehat{\bX}=\widehat{\bV}|\widehat{\bD}|^{1/2}$. We refer to $\widehat{\bV}$ as the \emph{unscaled adjacency spectral embedding}, or simply as the leading eigenvectors of $\bA$.
	\end{definition}

    The ASE provides a consistent and asymptotically normal estimator of the corresponding latent positions under the RDPG and GRDPG models as the number of vertices $n$ increases \citep{Sussman2014,Athreya2017,Rubin-Delanchy2017}. Therefore, under the COSIE model for which $\bP=\bV\bR\bV^\top $, the matrix $\widehat{\bV}$ obtained by ASE is a consistent estimator of $\bV$, up to an orthogonal transformation, provided that the corresponding $\bR$ has full rank. This suggests that given a sample of adjacency matrices, one can  simply use the unscaled ASE of any single graph to obtain an estimator of $\bV$. However, this method is not leveraging the information about $\bV$ on all the graphs. 
    
	To give an intuition in how to fit the joint model for the data,  we first consider working with the expected probability matrix of each graph $\bP^{(i)}$. For simplicity in all our analysis here, we assume that all the matrices $\bR^{(1)}, \ldots, \bR^{(m)}$ have full rank, so each matrix $\bP^{(i)}$ has exactly $d$ non-zero eigenvalues. Let $\bV^{(i)}$ and $\bX^{(i)}$ be the unscaled and scaled ASE of the matrix $\bP^{(i)}$. Then
	\begin{align*}
	  \bV^{(i)} & = \bV\bW^{(i)},  \\
	  \bX^{(i)} & = \bV\bW^{(i)}|\bD^{(i)}|^{1/2},
	\end{align*}
	for some orthogonal matrix $\bW^{(i)}\in\real^{d\times d}$, and a diagonal matrix $\bD^{(i)}$ containing the eigenvalues of $\bP^{(i)}$. Note that the matrices $\bW^{(i)}$ are possibly different for each graph, which is problematic when we want to leverage the information of all the graphs. Consider the matrix
	\begin{align*}
		\bU & = \left(\bV^{(1)} \ \cdots \  \bV^{(m)}\right),
	\end{align*}
	that is formed by concatenating the unscaled ASEs of each graph, resulting in matrices of size $n\times (dm)$. Alternatively, consider the same matrix but concatenating the scaled ASEs, given by
	\begin{align*}
        \bU' & =  \left(\bX^{(1)} \ \cdots \  \bX^{(m)}\right).
	\end{align*}
	  Note that both matrices $\bU$ and $\bU'$ are rank $d$,  and the $d$ left singular vectors of either of them corresponds to the common subspace $\bV$, up to some orthogonal transformation. The SVD step effectively aligns the multiple ASEs to a common spectral embedding.	  In the scaled ASE, the SVD also eliminates the effect of the eigenvalues of each graph, which are nuisance parameters in the common subspace estimation problem.

	In practice, we do not have access to the expected value of the matrices $\bP^{(i)}$, but we use the procedure described above with the adjacency matrices to obtain an estimator $\widehat{\bV}$ of the common subspace. Given $\widehat{\bV}$, we proceed to estimate the individual parameters of the graphs by minimizing a least squares function, 
	\begin{equation*}
	    \widehat{\bR}^{(i)} = \argmin_{\bR\in\real^{d\times d}}\|\bA^{(i)} - \widehat{\bV}\bR\widehat{\bV}^\top \|_F^2 = \argmin_{\bR\in\real^{d\times d}}\|\bR - \widehat{\bV}^\top \bA^{(i)}\widehat{\bV}\|_F^2,
	\end{equation*}
	and so, the estimator has a closed form given by 
	\[\widehat{\bR}^{(i)}= \widehat{\bV}^\top \bA^{(i)}\widehat{\bV}.\]
	Once these parameters are known, one can also estimate the matrix of edge probabilities as $$\widehat{\bP}^{(i)}=\widehat{\bV}\widehat{\bR}^{(i)}\widehat{\bV}^\top .$$
	We call this procedure the \emph{multiple adjacency spectral embedding} (MASE), and it is summarized in Algorithm~\ref{alg:mase}.

 	\begin{algorithm}
 		\caption{Multiple adjacency spectral embedding (MASE)}
 		\begin{algorithmic} 
 			\Input Sample of graphs $\bA^{(1)},\ldots,\bA^{(m)}$; embedding dimensions $d$ and $\{d_i\}_{i=1}^m$.
 			\begin{enumerate}
 				\item For each $i\in[m]$, obtain the adjacency spectral embedding of $\bA^{(i)}$  on $d_i$ dimensions, and denote it by $\widehat{\bV}^{(i)}\in\mathbb{R}^{n\times d_i}$.
 				\item Let $\widehat{\bU}= \left( \widehat{\bV}^{(1)} \ \cdots \ \widehat{\bV}^{(m)}\right)$ be the $n\times\left(\sum_{i=1}^m d_i\right)$ matrix of concatenated spectral embeddings.
 				\item Define $\widehat{\bV}\in\mathbb{\bR}^{n\times d}$ as the matrix containing the $d$ leading left singular values of $\widehat{\bU}$.
 				\item For each $i\in[m]$, set $\widehat{\bR}^{(i)} = \widehat{\bV}^\top \bA^{(i)}\widehat{\bV}$.
 			\end{enumerate}
 			\Output $\widehat{\bV}, \{\widehat{\bR}^{(i)}\}_{i=1}^m$. 
 		\end{algorithmic}
 		\label{alg:mase}
 	\end{algorithm}
	
	The MASE method presented for estimating the parameters of the COSIE model only relies on singular value decompositions, and hence it is simple and computationally scalable. This method extends the ideas of a single spectral graph embedding to a multiple graph setting.  The first step of Algorithm~\ref{alg:mase}, which is typically the major burden of the method, can be carried in parallel. The estimation of the score matrices only requires to know an estimate for $\bV$, and hence allows to easily compute an out-of-sample-embedding for a new graph $\bA^{(m+1)}$ without having to update all the parameter estimates. 
    
	Figure \ref{fig:alg1} shows a graphical representation of MASE for estimating the common subspace of a set of four graphs. The graphs correspond to the multilayer SBM with two communities, and the connection matrices $\bB^{(i)}$ are selected in a way that the graphs have an heterogeneous structure within the sample. Note that after step 1 of Algorithm~\ref{alg:mase} the latent positions obtained by ASE are slightly rotated relatively to each graph. After estimating the common subspace by SVD, a common set of latent positions is found, which look  tightly clustered within their community, showing that MASE is able to leverage the information of all the graphs in this example.
	
    
	The first step of Algorithm~\ref{alg:mase} corresponds to a separate ASE of each graph. We stated the algorithm using the unscaled version of the ASE, and later in Section~\ref{sec:theory}, we show that this version of the method is consistent in estimating the common subspace $\bV$. In practice, this step can be replaced with other spectral embeddings of $\bA^{(i)}$, including the scaled version described above. Note that the scaled ASE also makes use of the eigenvalues, and thus it puts more weight onto the columns of the matrix $\widehat{\bU}'$ that correspond to the eigenvectors associated with the largest eigenvalues, which can be convenient, especially if the dimension $d$ is overestimated. Other spectral embeddings that also aim to estimate the eigenspace of the adjacency matrices, including the Laplacian spectral embedding or a regularized version of it \citep{Priebe2018,Le2017}, might be preferred in some circumstances, but we do not explore this direction further.

	\subsection{Choice of embedding dimension}
    
    In practice, the dimension of the common subspace $d$ is usually unknown. Moreover, each individual $\bR^{(i)}$ might not be full rank, and so estimating a different embedding dimension $d_i$ for each graph might be necessary. The actual values of $d$ and $\{d_i\}$ correspond to the ranks of $\bU$ and $\{\bP^{(i)}\}$ respectively, and so they can be approximated by estimating the ranks of $\{\widehat{\bA}^{(i)}\}$ and $\widehat{\bU}$. The scree plot method, which consists in looking for an elbow in the plot of ordered singular values of a matrices, provides an estimator of these quantities, and can be automatically performed using the method proposed by \cite{Zhu2006}. We use this method to fit the model to real data in Section \ref{sec:data}.
	
	
	
	\section{Theoretical results\label{sec:theory}}
	
In this section, we study the statistical performance of Algorithm~\ref{alg:mase} in estimating the parameters of the COSIE model when a sample of graphs is given. 
We first study the expected error in estimating the common subspace, and show that this error decreases as a function of the number of graphs. This result demonstrates that our method is able to leverage the information of the multiple graphs to improve the estimation of $\bV$. 
    
	To study the estimation error of our method, we consider a sequence of parameters of the COSIE model $\{\bR^{(1, n)}, \ldots, \bR^{(m, n)}\in\real^{d\times d}, \bV^{(n)}\in\real^{n\times d}\}_{n=n_0}^\infty$, and present non-asymptotic error bounds on the estimation of $\bV^{(n)}$, as well as a result on the asymptotic distribution of the estimators of the score matrices as $n$ goes to infinity.
	The magnitude of the entries of the parameters typically changes with $n$, and in order to obtain consistent estimators we will require that $\|\bR^{(i,n)}\|\rightarrow \infty$. This is a natural assumption considering the fact that $\bR^{(i,n)}$ contains the eigenvalues of the expected adjacency matrix of a graph, and requiring these eigenvalues to grow as $n$ increases is necessary in order to control the error of the ASE \citep{Athreya2017}. In the following, we omit the dependence of the parameters in $n$ to ease the notation. To simplify the analysis, we also assume that all the score matrices $\bR^{(1)}, \ldots, \bR^{(m)}$ have full rank.
	
	Our next theorem introduces a bound on the expected error in estimating the common subspace of the COSIE model, which is the basis of our theoretical results. The proof of this result relies on bounds from \cite{Fan2017} for studying the eigenvector estimation error in distributed PCA. However, our setting presents some substantial differences, including the distribution of the data and the fact that the expected adjacency matrices are not jointly diagonalizable.  Moreover, the scaled version of the algorithm adds the complication of working with eigenvalues, and extending the theory to this case is more challenging. We thus work with the unscaled version of the ASE in Algorithm \ref{alg:mase}. Before stating the theorem, we introduce some notation. For a given square symmetric matrix $\bM\in\real^{r\times r}$, denote by $\lambda_1(\bM)\geq \ldots \geq \lambda_r(\bM)$ to the ordered eigenvalues of $\bM$, and define
	$\lambda_{\min}(\bM)$ and $\lambda_{\max}(\bM)$ as the smallest and largest eigenvalue in magnitude of $\bM$, and $\delta(\bM)=\max_{u\in[r]}\sum_{v=1}^r\bM_{uv}$ the largest sum of the rows of $\bM$. Given a pair of sequences $\{a_n\}_{n=1}^\infty$ and $\{b_n\}_{n=1}^\infty$, denote by $a_n\lesssim b_n$ if there exists some constant $C>0,n_0>0$ such that $a_n\leq Cb_n$ for all $n\geq n_0$.
	

    \begin{theorem}
    \label{thrm:COSIE Expectation Bound} Let $\bR^{(1)},\ldots,\bR^{(m)}$ be a collection of full rank symmetric matrices of size $d\times d$, $\bV\in\real^{n\times d}$ a matrix with orthonormal columns, and $(\bA^{(1)},\ldots, \bA^{(m)})\sim\cosie{\bV, \bR^{(1)}, \ldots, \bR^{(m)}}$ a sample of $m$ random adjacency matrices, and set $\bP^{(i)} = \bV\bR^{(i)}\bV^\top $. Define
		\begin{equation}
		\varepsilon = \sqrt{\frac{1}{m}\sum_{i=1}^m \frac{\delta(\bP^{(i)})}{\lambda_{\min}^2(\bR^{(i)})}}.\label{eq:AVPVerror}
		\end{equation}
		Let $\widehat{\bV}$ be the estimator of $\bV$ obtained by Algorithm~\ref{alg:mase}. Suppose that $\min_{i\in[m]} \delta(\bP^{(i)})=\omega(\log n)$ and $\varepsilon=o(1)$ as $n\rightarrow\infty$.  Then,
		\begin{equation}
			\mathbb{E}\left[\min_{\bW\in\mathcal{O}_d}\|\widehat{\bV}-\bV\bW\|_F\right] \lesssim \sqrt{\frac{d}{m}}\varepsilon + \sqrt{d}\varepsilon^2. \label{eq:theorem-bound}
		\end{equation}
		\label{thm:V-Vhat}
	\end{theorem}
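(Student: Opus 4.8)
The plan is to recognize the MASE estimator as the aggregation step of distributed PCA and then import the machinery of \cite{Fan2017}. The crucial first observation is that $\widehat{\bV}$, the top $d$ left singular vectors of $\widehat{\bU}=(\widehat{\bV}^{(1)},\ldots,\widehat{\bV}^{(m)})$, coincides with the top $d$ eigenvectors of $\widehat{\bSigma}:=\frac1m\sum_{i=1}^m\widehat{\bV}^{(i)}(\widehat{\bV}^{(i)})^\top$, since $\widehat{\bU}\widehat{\bU}^\top=\sum_i\widehat{\bV}^{(i)}(\widehat{\bV}^{(i)})^\top$. Writing $\bV^{(i)}$ for the unscaled ASE of the population matrix $\bP^{(i)}=\bV\bR^{(i)}\bV^\top$ and using that $\bR^{(i)}$ is full rank, the column space of $\bP^{(i)}$ equals that of $\bV$, so $\bV^{(i)}(\bV^{(i)})^\top=\bV\bV^\top$ for every $i$. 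Hence the population aggregate $\bSigma:=\frac1m\sum_i\bV^{(i)}(\bV^{(i)})^\top$ equals $\bV\bV^\top$ exactly, a rank-$d$ projection whose gap between its nonzero and zero eigenvalues is $1$. This is precisely the shared-eigenspace setting of distributed PCA, even though the individual graphs are not jointly diagonalizable (the eigenbases $\bW^{(i)}$ and the eigenvalues differ); it is what makes the \cite{Fan2017} analysis applicable.

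I would then invoke the perturbation decomposition underlying their bound, which controls $\|\widehat{\bV}\widehat{\bV}^\top-\bV\bV^\top\|_F$ by (i) a \emph{variance} term, the Frobenius norm of the averaged first-order perturbation $\frac1m\sum_i\bE^{(i)}_{\mathrm{lin}}$, and (ii) a \emph{bias} term, the average of the squared individual subspace errors. The individual errors are handled by single-graph theory: Davis--Kahan applied to $\bA^{(i)}$ versus $\bP^{(i)}$ gives $\|\widehat{\bV}^{(i)}(\widehat{\bV}^{(i)})^\top-\bV\bV^\top\|\lesssim\|\bN^{(i)}\|/\lambda_{\min}(\bR^{(i)})$ with $\bN^{(i)}:=\bA^{(i)}-\bP^{(i)}$, and a standard concentration inequality for random adjacency matrices gives $\|\bN^{(i)}\|\lesssim\sqrt{\delta(\bP^{(i)})}$ on a high-probability event, valid because $\delta(\bP^{(i)})=\omega(\log n)$. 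Thus each individual error is of order $\sqrt{\delta(\bP^{(i)})}/\lambda_{\min}(\bR^{(i)})$, the average of whose squares is exactly $\varepsilon^2$; this yields the bias contribution $\sqrt{d}\,\varepsilon^2$ after the Frobenius conversion, and the assumption $\varepsilon=o(1)$ guarantees the whole expansion is in its valid regime.

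The heart of the argument is the variance term, where the gain from multiple graphs appears. Here I would use the explicit first-order form $\bE^{(i)}_{\mathrm{lin}}=(\bI-\bV\bV^\top)\bN^{(i)}\bV(\bR^{(i)})^{-1}\bV^\top+(\cdot)^\top$, which follows from the clean population structure above. Since the $\bA^{(i)}$ are independent and $\e[\bN^{(i)}]=0$, the cross terms vanish and $\e\big[\|\frac1m\sum_i\bN^{(i)}\bV(\bR^{(i)})^{-1}\|_F^2\big]=\frac1{m^2}\sum_i\e[\|\bN^{(i)}\bV(\bR^{(i)})^{-1}\|_F^2]$. Bounding $\e[(\bN^{(i)})^\top\bN^{(i)}]\preceq\delta(\bP^{(i)})\bI$ (each diagonal entry is a sum of Bernoulli variances bounded by a row sum of $\bP^{(i)}$, and the off-diagonals vanish in expectation) and using $\|(\bR^{(i)})^{-1}\|_F^2\le d/\lambda_{\min}^2(\bR^{(i)})$ gives $\frac{d}{m^2}\sum_i\delta(\bP^{(i)})/\lambda_{\min}^2(\bR^{(i)})=\frac dm\varepsilon^2$, hence $\e[\|\tfrac1m\sum_i\bE^{(i)}_{\mathrm{lin}}\|_F]\lesssim\sqrt{d/m}\,\varepsilon$. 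Combining the two contributions and passing from the projection distance to $\min_{\bW\in\mathcal{O}_d}\|\widehat{\bV}-\bV\bW\|_F$ via the standard inequality $\min_{\bW}\|\widehat{\bV}-\bV\bW\|_F\lesssim\|\widehat{\bV}\widehat{\bV}^\top-\bV\bV^\top\|_F$ yields \eqref{eq:theorem-bound}.

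Finally, to produce a bound in expectation rather than with high probability, I would truncate on the concentration event: off this event $\min_{\bW}\|\widehat{\bV}-\bV\bW\|_F\le\sqrt{2d}$ deterministically, and the failure probability is polynomially small (again thanks to $\delta(\bP^{(i)})=\omega(\log n)$), so its contribution is negligible. The main obstacle I anticipate is adapting the \cite{Fan2017} bounds, stated for i.i.d.\ sub-Gaussian samples with a common covariance, to the graph setting: the perturbations $\bN^{(i)}$ are bounded but heteroskedastic and the $\bA^{(i)}$ are not simultaneously diagonalizable, so the reduction to a shared rank-$d$ population projection (the fact that $\bV^{(i)}(\bV^{(i)})^\top=\bV\bV^\top$ for all $i$) is what must carry the argument, together with the careful second-moment computation that extracts the $d/m$ factor in the variance term.
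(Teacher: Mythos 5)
Your proposal is correct, and it sits inside the same Fan-style bias--variance framework the paper uses, but the execution differs in two substantive ways that are worth recording. The paper introduces an intermediate object: $\widetilde{\bV}$, the top-$d$ eigenvectors of $\widetilde{\bPi}=\frac1m\sum_i\e\big[\widehat{\bV}^{(i)}(\widehat{\bV}^{(i)})^\top\big]$, and splits the error as $\|\widehat{\bV}\widehat{\bV}^\top-\widetilde{\bV}\widetilde{\bV}^\top\|_F+\|\widetilde{\bV}\widetilde{\bV}^\top-\bV\bV^\top\|_F$; the variance piece is then handled by Davis--Kahan against the eigengap of $\widetilde{\bPi}$ (which must itself be controlled by Weyl's inequality using $\varepsilon=o(1)$), with $\e\|\widehat{\bPi}-\widetilde{\bPi}\|_F$ bounded by feeding sub-Gaussian moment bounds on $\|\widehat{\bPi}^{(i)}-\widetilde{\bPi}^{(i)}\|_F$ (derived from concentration of $\|\bA^{(i)}-\bP^{(i)}\|$) into Lemma 4 of \cite{Fan2017}. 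You instead decompose $\widehat{\bPi}-\bPi$ directly into the averaged linear term plus averaged quadratic remainders and apply Davis--Kahan once against the population projection $\bV\bV^\top$, whose gap is exactly $1$, so no Weyl step is needed; and you replace the paper's moment-machinery bound on the variance by an explicit second-moment computation, using independence, $\e[(\bN^{(i)})^\top\bN^{(i)}]\preceq\delta(\bP^{(i)})\bI$, and $\|(\bR^{(i)})^{-1}\|_F^2\le d/\lambda_{\min}^2(\bR^{(i)})$, which is more elementary and self-contained while recovering the same rate $\sqrt{d/m}\,\varepsilon$. Note that your linear term $\bE^{(i)}_{\mathrm{lin}}$ is exactly the paper's $\bH^{(i)}+\bH^{(i)\top}$ (since $\bU^{(i)}(\bLambda^{(i)})^{-1}\bU^{(i)\top}=\bV(\bR^{(i)})^{-1}\bV^\top$), which the paper uses only for the bias term; you use it for both, which is the source of the simplification.

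One point to tighten: the quadratic remainder bound $\|\mathrm{Rem}^{(i)}\|_F\lesssim\sqrt{d}\,\|\bN^{(i)}\|^2/\lambda_{\min}^2(\bR^{(i)})$ (Lemma 2 of \cite{Fan2017}) is only valid on the event $\|\bN^{(i)}\|\le\lambda_{\min}(\bR^{(i)})/10$, so taking expectations of the remainder requires splitting on this event \emph{per graph}---as the paper does with its event $\mathcal{D}$, using that off the event $1\le 10\|\bN^{(i)}\|/|\lambda_{\min}(\bR^{(i)})|$ so the crude bound is absorbed into the same $\delta(\bP^{(i)})/\lambda_{\min}^2(\bR^{(i)})$ rate. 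Your single global truncation at the end would implicitly require a union bound over all $m$ graphs, which degrades when $m$ is very large relative to $n$; the per-graph treatment avoids this entirely and is a routine adjustment to your argument.
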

    
    \noindent The above theorem requires two conditions on the expected adjacency matrices, that control the estimation error of each ASE for a single graph. First, the largest expected vertex degree $\delta(\bP^{(i)})$ needs to grow at rate $\omega(\log n )$ to ensure the concentration in spectral norm of the adjacency matrix to its expectation (see Theorem 20 of \cite{Athreya2017}). Second, the condition $\varepsilon=o(1)$ is required to control the average estimation error of each ASE. For this condition to hold, it is enough to have that $\delta(\bP^{(i)})=o\left(\lambda^2_{\min}(\bR^{(i)})\right)$ for each graph $i\in[m]$. In particular, observe that
    $$\delta(\bP^{(i)})= \|\bP^{(i)}\|_1 \leq \sqrt{n}\lambda_{\max}(\bR^{(i)}),$$
    so it is sufficient that $\max_{i\in[m]}\frac{\lambda_{\max}(\bR^{(i)})}{\lambda_{\min}^2(\bR^{(i)})} = o\left(\frac{1}{\sqrt{n}}\right)$. Specific conditions for the multilayer SBM that relate the community sizes and the eigenvalues of the connection probability matrices are presented in Section~\ref{sec:community-detection}.
    
Theorem~\ref{thm:V-Vhat} theorem bounds the expected error in estimating the subspace of $\bV$ by the sum of two terms in Equation~\eqref{eq:theorem-bound}. To understand these terms, note that the error can be partitioned as
    \begin{align}
	\min_{\bW\in\mathcal{O}_d}\|\widehat{\bV}-\bV\bW\|_F & \leq  \|\widehat{\bV}\widehat{\bV}^\top -\bV\bV^\top \|_F \nonumber\\
	& \leq \left\|\widehat{\bV}\widehat{\bV}^\top -\widetilde{\bV}\widetilde{\bV}^\top \right\|_F + \left\|\widetilde{\bV}\widetilde{\bV}^\top -\bV\bV^\top \right\|_F,
	\label{eq:theory-bias-variance}
	\end{align}
	 where $\widetilde{\bV}$ is the matrix containing the $d$ leading eigenvectors of $\frac{1}{m}\sum_{i=1}^m\e\big[\widehat{\bV}^{(i)}(\widehat{\bV}^{(i)})^\top \big]$.
	The first term of the right hand side of Equation~\eqref{eq:theory-bias-variance} measures the variance of the estimated projection matrix, which  corresponds to the first term in Equation~\eqref{eq:theorem-bound}. 
	Thus, this term converges to zero as the number of graphs increases. The estimated projection is not unbiased, and therefore the second term of Equation~\ref{eq:theory-bias-variance} is always positive, but its order is usually smaller than the variance. This bound is a function of $\varepsilon$, which quantifies the average error in estimating $\bV$ by using the leading eigenvectors of $\bA^{(i)}$ on each network. Note that the inclusion of an orthogonal transformation $\bW$ has to do with the fact that the common subspace is only identifiable up to such transformation.
    
    The above result establishes that when the sample size $m$ is small ($m\lesssim 1/\varepsilon$), the variance term dominates the error as observed in the left hand side of Equation \eqref{eq:theorem-bound}, but as the sample size increases this term vanishes. The second term  is the bias of estimating $\bV$ by using the leading eigenvectors of each network, and this term dominates when the sample size is large. These type of results are common in settings when a global estimator is obtained from the average of local estimators \citep{Lee2017,Arroyo2016,Fan2017}.
	
	To illustrate the result of Theorem \ref{thm:V-Vhat}, consider the following example on the Erd\"os-R\'enyi (ER) model \citep{Erdos1959,Gilbert1959}. In the ER random graph model $G(n,p)$, the edges between any pair of the $n$ vertices are formed independently with a constant probability $p$. We consider a sample of adjacency matrices $\bA^{(1)}, \ldots,\bA^{(m)}$, each of them having a different edge probability $p_1,\ldots, p_m$ so that $\bA^{(i)}\sim G(n, p_i)$. By defining $\bV=\frac{1}{\sqrt{n}}\textbf{1}_n$, where $\textbf{1}_n$ is the $n$-dimensional vector with all entries equal to one, we can represent these graphs under the COSIE model by setting $\bP^{(i)} = p_in\bV\bV^\top $. Hence, $\delta(\bP^{(i)}) = p_in$ and
	$$\varepsilon = \sqrt{\frac{1}{nm}\sum_{i=1}^m \frac{1}{p_i}}.$$
	Therefore, if  $p_i=\omega(\log(n)/n)$, Theorem \ref{thm:V-Vhat} implies that
	\begin{align*}
		\mathbb{E}\left[\min_{\bW\in\mathcal{O}_d}\|\widehat{\bV}-\bV\bW\|_F\right]  & \lesssim \sqrt{\frac{1}{nm^2}\sum_{i=1}^m \frac{1}{p_i}} + {\frac{1}{nm}\sum_{i=1}^m \frac{1}{p_i}} 
	\end{align*}
	In the dense regime when $p_i=\omega(1)$, the error of estimating the common subspace is of order $O\left(\frac{1}{\sqrt{nm}}+\frac{1}{n}\right)$. An alternative estimator for the common subspace $\bV$ is given by the ASE of the mean adjacency matrix, $\operatorname{ASE}\left(\frac{1}{m}\sum_{i=1}^m\bA^{(i)}\right)$, for which the expected subspace estimation error is $O\left(\frac{1}{\sqrt{mn}}\right)$  \citep{Tang2018}. When $m\lesssim n$, the error rate of both estimators coincide. However, the estimator of \cite{Tang2018} is only valid when all the graphs have the same expected adjacency matrices, and can perform poorly in  heterogeneous populations of graphs (see Section \ref{sec:simulations}).
	The ER model described above can be regarded as a special case of the multilayer SBM with only one community, which we consider next.

	\subsection{Perfect recovery in community detection \label{sec:community-detection}}
	
	Estimation of the common subspace in the multilayer SBM is of particular interest due to its relation with the community detection problem \citep{Girvan2002,Abbe2017}. In the multilayer SBM with $K$ communities, the matrix of the common subspace basis $\bV$ has $K$ different rows, each of them corresponding to a different community. An accurate estimator $\widehat{\bV}$ will then reflect this structure, and when the community labels of the vertices are not known, clustering the rows of $\widehat{\bV}$ into $K$ groups can reveal these labels. Different clustering procedures can be used for this goal, and in this section we focus on $K$-means clustering defined next. Suppose that $\mathcal{Z}$ is the set of valid membership matrices for $n$ vertices and $K$ communities, that is,
	\[\mathcal{Z} = \left\{\bZ\in\{0,1\}^{n\times K}:\ \sum_{v=1}^K\bZ_{uv}=1,\ u\in[n] \right\}.\]
	The community assignments are obtained by solving the $K$-means problem
	\begin{equation}
	    (\widehat{\bZ}, \widehat{\bC}) = \argmin_{\widetilde{\bZ}\in\mathcal{Z}, \widetilde{\bC}\in\real^{K\times K}}\|\widetilde{\bZ}\widetilde{\bC}-\widehat{\bV}\|_F.
	    \label{eq:k-means}
	\end{equation}
	The matrix $\widehat{\bZ}$ thus contains the estimated community assignments, and $\widehat{\bC}$ their corresponding centroids.
	
	To understand the performance in community detection of the procedure described above, consider a multilayer SBM with parameters $\bB^{(1)}, \ldots, \bB^{(m)}\in[0,1]^{K\times K}$ and $\bZ\in\{0,1\}^{n\times K}$. Denote the community sizes by $n_1,\ldots,n_K$, such that $\sum_{j=1}^k n_j=n$, and define the quantities $\bXi=\operatorname{diag}(n_1,\ldots, n_K)$, $n_{\min}=\min_{j\in[K]}n_j$, $n_{\max}=\max_{j\in[K]}n_j$ and $\bP^{(i)}=\bZ\bB^{(i)}\bZ^\top $. To obtain a consistent estimator of the common invariant subspace, the following assumption is required.
	
	\begin{assumption} \label{assumption:multilayer-SBM-parameters}
	There exist some absolute constants $\kappa\in(0,1], \gamma >0$ such that for all $i\in[m]$,
    \begin{equation*}
        \frac{\sqrt{K} \left(\sum_{j=1}^Kn_j^2\right)^{1/2}}{n_{\min}^{2-\kappa}}\frac{\lambda_1(\bB^{(i)})}{\lambda_{\min}^2(\bB^{(i)})} \leq \gamma.
    \end{equation*}
	\end{assumption}
	
	The previous assumption requires a uniform control on ratios of the community sizes and the eigenvalues of the connectivity matrices. When all communities have comparable sizes (i.e., $n_{\min}\geq cn_{\max}$ for some absolute constant $c>0$), then the smallest eigenvalue of each $\bB^{(i)}$ is allowed to approach zero at a rate
	\[|\lambda_{\min}(\bB^{(i)})| = \Omega\left(\frac{\sqrt{K}\lambda^{1/2}_1(\bB^{(i)})}{n^{1/2-\kappa/2}}\right).\]
	 On the other hand, if the smallest eigenvalue of each  $\bB^{(i)}$ is bounded away from zero by an absolute constant, then the size of the largest community can be at most $n_{\max}= O\left(\frac{n_{\min}^{2-\kappa}}{K}\right)$.
	
	The next Corollary is a consequence of Theorem~\ref{thm:V-Vhat} for the multilayer SBM setting. The proof is given on the Appendix.

    \begin{corollary}
    Consider a sample of $m$ adjacency matrices from the multilayer SBM 
    \[(\bA^{(1)}, \ldots, \bA^{(m)})\sim\sbm{\bZ; \bB^{(1)}, \ldots, \bB^{(m)}},\]
    and let $\widehat{\bV}$ the estimated subspace from Algorithm \ref{alg:mase}. 
    Suppose that $n_{\min}=\omega(1)$, $\delta(\bP^{(i)})=\omega(\log n)$, and that Assumption~\ref{assumption:multilayer-SBM-parameters} holds. Then, for sufficiently large $n$,
    \begin{equation}
			\mathbb{E}\left[\min_{\bW\in\mathcal{O}_d}\|\widehat{\bV}-\bZ\bXi^{-1/2}\bW\|_F\right] \lesssim \sqrt{\frac{\gamma K}{mn_{\min}^\kappa}}+ \frac{\gamma\sqrt{K}}{n_{\min}^\kappa}. 
		\end{equation}\label{cor:sbm-V-Z}
    \end{corollary}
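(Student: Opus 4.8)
The plan is to instantiate Theorem~\ref{thm:V-Vhat} by writing the multilayer SBM explicitly as a COSIE model and then controlling the quantity $\varepsilon$ of~\eqref{eq:AVPVerror} through Assumption~\ref{assumption:multilayer-SBM-parameters}. First I would check that $\bZ\bXi^{-1/2}$ is a legitimate common-subspace basis: since $\bZ^\top\bZ=\bXi$, we have $(\bZ\bXi^{-1/2})^\top(\bZ\bXi^{-1/2})=\bXi^{-1/2}\bXi\bXi^{-1/2}=\bI$, so its columns are orthonormal. Setting $\bV=\bZ\bXi^{-1/2}$ and $\bR^{(i)}=\bXi^{1/2}\bB^{(i)}\bXi^{1/2}$ then gives $\bV\bR^{(i)}\bV^\top=\bZ\bB^{(i)}\bZ^\top=\bP^{(i)}$, so the sample is exactly a COSIE model with these parameters and embedding dimension $d=K$ (using that each $\bB^{(i)}$, hence each $\bR^{(i)}$, is full rank, as inherited from the hypotheses of Theorem~\ref{thm:V-Vhat}). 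Consequently Theorem~\ref{thm:V-Vhat} applies verbatim with $\bV=\bZ\bXi^{-1/2}$, and the only remaining task is to bound $\varepsilon$.

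The core of the argument is a pair of spectral estimates translating the SBM parameters into the numerator and denominator of $\varepsilon$. For the numerator, the row sums of $\bP^{(i)}$ are the entries of $\bB^{(i)}\bs$ with $\bs=(n_1,\ldots,n_K)^\top$, so by Cauchy--Schwarz and submultiplicativity
\[
\delta(\bP^{(i)})=\|\bB^{(i)}\bs\|_\infty\le\|\bB^{(i)}\bs\|_2\le\lambda_1(\bB^{(i)})\Bigl(\textstyle\sum_{j=1}^K n_j^2\Bigr)^{1/2}.
\]
For the denominator I would apply Ostrowski's theorem to the congruence $\bR^{(i)}=\bXi^{1/2}\bB^{(i)}\bXi^{1/2}$: each eigenvalue of $\bR^{(i)}$ equals the correspondingly-ordered eigenvalue of $\bB^{(i)}$ times a factor lying in $[n_{\min},n_{\max}]$, whence $|\lambda_k(\bR^{(i)})|\ge n_{\min}|\lambda_k(\bB^{(i)})|$ for every $k$ and therefore $\lambda_{\min}^2(\bR^{(i)})\ge n_{\min}^2\lambda_{\min}^2(\bB^{(i)})$.

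Combining the two estimates yields
\[
\frac{\delta(\bP^{(i)})}{\lambda_{\min}^2(\bR^{(i)})}\le\frac{1}{n_{\min}^2}\cdot\frac{\bigl(\sum_{j=1}^K n_j^2\bigr)^{1/2}\lambda_1(\bB^{(i)})}{\lambda_{\min}^2(\bB^{(i)})}\le\frac{\gamma}{\sqrt{K}\,n_{\min}^{\kappa}}\le\frac{\gamma}{n_{\min}^{\kappa}},
\]
where the middle inequality is Assumption~\ref{assumption:multilayer-SBM-parameters} after dividing through by $n_{\min}^2$, and the last uses $K\ge 1$. Averaging over $i\in[m]$ gives $\varepsilon^2\lesssim\gamma/n_{\min}^{\kappa}$. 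Since $n_{\min}=\omega(1)$ and $\kappa>0$ this forces $\varepsilon=o(1)$, and $\delta(\bP^{(i)})=\omega(\log n)$ is assumed directly, so the hypotheses of Theorem~\ref{thm:V-Vhat} hold; substituting $d=K$ and $\varepsilon^2\lesssim\gamma/n_{\min}^{\kappa}$ into $\sqrt{d/m}\,\varepsilon+\sqrt{d}\,\varepsilon^2$ produces the two advertised terms.

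The step I expect to demand the most care is the lower bound on $\lambda_{\min}(\bR^{(i)})$. Because $\bB^{(i)}$ is generally indefinite, a variational (positive-definiteness) argument does not directly control the smallest-\emph{magnitude} eigenvalue, and the clean way to transfer eigenvalue magnitudes through the congruence by $\bXi^{1/2}$ is Ostrowski's inequality; the delicate point is that its index-preserving correspondence bounds every $|\lambda_k(\bR^{(i)})|$ from below, and not merely the extreme eigenvalues, which is what licenses passing to $\lambda_{\min}$. Everything else---the orthonormality check, the $\delta$ bound, and the final assembly---is routine.
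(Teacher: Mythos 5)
Your proposal is correct and follows essentially the same route as the paper's proof: instantiate the COSIE parameters $\bV=\bZ\bXi^{-1/2}$, $\bR^{(i)}=\bXi^{1/2}\bB^{(i)}\bXi^{1/2}$, bound $\delta(\bP^{(i)})$ above by $\sqrt{K}\bigl(\sum_{j=1}^K n_j^2\bigr)^{1/2}\lambda_1(\bB^{(i)})$ and $|\lambda_{\min}(\bR^{(i)})|$ below by $n_{\min}|\lambda_{\min}(\bB^{(i)})|$ (the paper asserts this congruence bound directly, without naming Ostrowski's theorem), and then feed $\varepsilon^2\lesssim\gamma/n_{\min}^{\kappa}$ into Theorem~\ref{thm:V-Vhat}. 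The only cosmetic difference is your Cauchy--Schwarz chain for $\delta(\bP^{(i)})$, which avoids the paper's extra $\sqrt{K}$ factor but, after invoking Assumption~\ref{assumption:multilayer-SBM-parameters}, yields the same final bound.
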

	
	The above corollary presents sufficient conditions for consistent estimation of the common invariant subspace, which contains the information of the community labels encoded in the matrix $\bZ$. The corollary requires the size of the smallest community to grow as $n$ increases, which is something commonly assumed in the literature \citep{Rohe2011,Lyzinski2014}. 
	
	The following result provides a bound on the expected number of misclustered vertices after clustering the rows of $\widehat{\bV}$ according to Equation~\eqref{eq:k-means}. Due to the non-identifiability of the community labels, these are recovered up to a permutation $\bQ\in\mathcal{P}_K$, where $\mathcal{P}_K$ is the set of $K\times K$ permutation matrices.
	
	\begin{theorem} \label{thm:community-detection}
	Suppose that $\widehat{\bZ}$ is the membership matrix obtained by Equation~\eqref{eq:k-means}. Under the conditions of Corollary~\ref{cor:sbm-V-Z}, there exist some absolute constants $\gamma>0$, $\kappa\in(0,1]$ such that the expected number of misclustered vertices satisfies
	\begin{equation}\label{eq:thm-communitydetection}
	    \e\left[\min_{\bQ\in\mathcal{P}_K}\|\widehat{\bZ}-\bZ\bQ\|_F\right] \lesssim
	    \sqrt{\frac{\gamma Kn_{\max}}{mn_{\min}^\kappa}} + \frac{\gamma\sqrt{Kn_{\max}}}{n_{\min}^{\kappa}}.
	\end{equation}
	\end{theorem}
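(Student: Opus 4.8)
The plan is to convert the subspace guarantee of Corollary~\ref{cor:sbm-V-Z} into a bound on the number of misclustered vertices via a standard $K$-means perturbation argument (here the embedding dimension equals the number of communities, $d=K$). First I would exploit the orthogonal invariance of the objective in Equation~\eqref{eq:k-means}: clustering the rows of $\widehat{\bV}$ yields exactly the same partition as clustering the rows of $\widehat{\bV}\bW^\top$ for any $\bW\in\mathcal{O}_d$, since right multiplication by an orthogonal matrix preserves all pairwise distances. Hence, taking $\bW$ to be the minimizer in Corollary~\ref{cor:sbm-V-Z}, I may assume without loss of generality that the population target of $\widehat{\bV}$ is exactly $\bZ\bXi^{-1/2}$ and that $\|\widehat{\bV}-\bZ\bXi^{-1/2}\|_F=\min_{\bW\in\mathcal{O}_d}\|\widehat{\bV}-\bZ\bXi^{-1/2}\bW\|_F$. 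The rows of $\bZ\bXi^{-1/2}$ take only $K$ distinct values, the centroids $\mu_k=n_k^{-1/2}e_k$ (with $e_k$ the $k$-th standard basis vector), which are mutually orthogonal and satisfy $\|\mu_j-\mu_k\|^2=n_j^{-1}+n_k^{-1}\geq 2/n_{\max}$ for $j\neq k$; writing $\delta_{\min}$ for the minimum pairwise centroid distance, this gives $\delta_{\min}^2\geq 2/n_{\max}$.

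Next I would establish the counting lemma. Since $(\bZ,\bXi^{-1/2})$ is feasible in Equation~\eqref{eq:k-means}, optimality of $(\widehat{\bZ},\widehat{\bC})$ gives $\|\widehat{\bZ}\widehat{\bC}-\widehat{\bV}\|_F\leq\|\bZ\bXi^{-1/2}-\widehat{\bV}\|_F$, and the triangle inequality yields $\|\widehat{\bZ}\widehat{\bC}-\bZ\bXi^{-1/2}\|_F\leq 2\|\widehat{\bV}-\bZ\bXi^{-1/2}\|_F$. Writing $\widehat{\mu}_u=(\widehat{\bZ}\widehat{\bC})_u$ for the estimated centroid assigned to vertex $u$, this controls $\sum_u\|\widehat{\mu}_u-\mu_{z_u}\|^2$. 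Declaring $u$ misclustered when $\|\widehat{\mu}_u-\mu_{z_u}\|\geq\delta_{\min}/2$ and letting $M$ denote the set of such vertices, a Markov-type counting bound gives
\begin{equation*}
|M|\leq\frac{4}{\delta_{\min}^2}\sum_u\|\widehat{\mu}_u-\mu_{z_u}\|^2\leq\frac{16}{\delta_{\min}^2}\|\widehat{\bV}-\bZ\bXi^{-1/2}\|_F^2\lesssim n_{\max}\|\widehat{\bV}-\bZ\bXi^{-1/2}\|_F^2.
\end{equation*}
On the complement of $M$, the separation of the $\mu_k$ forces all correctly handled vertices of a common true community to receive a common estimated label and forces distinct true communities to receive distinct labels, producing a permutation $\bQ\in\mathcal{P}_K$ with $\widehat{\bZ}_u=(\bZ\bQ)_u$ for all $u\notin M$, provided each community retains a vertex outside $M$, i.e.\ $|M|<n_{\min}$.

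Finally I would assemble the pieces. On the event $|M|<n_{\min}$, only the rows indexed by $M$ can differ between $\widehat{\bZ}$ and $\bZ\bQ$, each contributing at most $2$ to the squared Frobenius norm, so $\min_{\bQ}\|\widehat{\bZ}-\bZ\bQ\|_F\leq\sqrt{2|M|}\lesssim\sqrt{n_{\max}}\,\|\widehat{\bV}-\bZ\bXi^{-1/2}\|_F$. Taking expectations and invoking Corollary~\ref{cor:sbm-V-Z} gives
\begin{equation*}
\e\left[\min_{\bQ\in\mathcal{P}_K}\|\widehat{\bZ}-\bZ\bQ\|_F\right]\lesssim\sqrt{n_{\max}}\,\e\left[\min_{\bW\in\mathcal{O}_d}\|\widehat{\bV}-\bZ\bXi^{-1/2}\bW\|_F\right]\lesssim\sqrt{\frac{\gamma Kn_{\max}}{mn_{\min}^\kappa}}+\frac{\gamma\sqrt{Kn_{\max}}}{n_{\min}^\kappa},
\end{equation*}
which is the claim. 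I expect the main obstacle to be the counting lemma and, specifically, the passage to the permutation $\bQ$: one must argue carefully that well-separated population centroids prevent two estimated clusters from collapsing onto a single true community, and one must separately dispose of the low-probability regime $|M|\geq n_{\min}$ where the construction breaks down. The latter is handled by the crude deterministic bound $\|\widehat{\bZ}-\bZ\bQ\|_F\leq\sqrt{2n}$ together with the fact that, under Assumption~\ref{assumption:multilayer-SBM-parameters} and the stated growth conditions, this regime forces $\|\widehat{\bV}-\bZ\bXi^{-1/2}\|_F$ to be atypically large, so its contribution to the expectation is absorbed into the stated constant.
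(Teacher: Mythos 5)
Your proposal follows essentially the same route as the paper's proof: $K$-means optimality plus the triangle inequality to control $\|\widehat{\bZ}\widehat{\bC}-\bZ\bXi^{-1/2}\bW\|_F$ by $2\|\widehat{\bV}-\bZ\bXi^{-1/2}\bW\|_F$, a counting argument driven by the minimum separation $\sqrt{2/n_{\max}}$ among the distinct rows of $\bZ\bXi^{-1/2}$, and finally an application of Corollary~\ref{cor:sbm-V-Z}; the paper simply outsources your ``counting lemma + permutation'' step to Lemma 3.2 of \cite{Rohe2011} (with exactly your threshold, half the centroid gap $1/\sqrt{2n_{\max}}$) rather than reproving it. The one place you go beyond the paper---the regime $|M|\ge n_{\min}$ where the permutation construction breaks---is also your only shaky step, since the Markov-type dismissal you sketch actually incurs an extra factor of order $\sqrt{n/n_{\min}}$ rather than an absolute constant; but as the paper's own citation-based argument silently sidesteps this case altogether, this is a refinement of, not a departure from, the published proof.
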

	
Similar to the estimation error of the the common subspace in Theorem \ref{thm:V-Vhat}, the bound on the expected clustering error depends on two terms in Equation~\eqref{eq:thm-communitydetection}. When the number of graphs $m$ increases, the first term vanishes. The second term also decays to zero as long as $Kn_{\max}=o(n_{\min}^{2\kappa})$, and therefore the algorithm recovers the community labels correctly  as $n$ goes to infinity, provided that $m=\Omega(n_{min}^\kappa)$. On the other hand, our theory does not guarantee perfect community recovery for $m=O(1)$ because the first term in Equation~\eqref{eq:thm-communitydetection} has at least a constant order. We remark that this fact is a technical consequence of our analysis of subspace estimation error, and it is analogous to similar results obtained in spectral clustering on a single graph \citep{Rohe2011}. In contrast, previous work by \cite{Lyzinski2014} shows, via a careful analysis of the row-wise error in subspace estimation, that spectral methods can produce perfect recovery in community detection in the single-graph setting. Developing a similar result in our setting requires further investigation that we leave as future work.
	
	
	   
	   \subsection{Asymptotic normality of the estimated score matrices}
	   Now, we study the asymptotic distribution of the individual estimates of the score matrices. 
	      We show not only that the entries of $\widehat{\bR}^{(i)}$ accurately approximate the entries of $\bR^{(i)}$ up to some orthogonal transformation, but that under delocalization assumptions on $\bV$, the entries $\widehat{\bR}_{kl}^{(i)}$, where $1 \leq k \leq d, 1 \leq l \leq d$, exhibit asymptotic normality. In effect, we prove a central limit with a bias term, but this bias term decays as the number of graphs $m$ increases. The central limit theorem depends, in essence, on 
	      the ability to estimate the common subspace $\bV$ to sufficient precision. 

	    To make such the central limit theorem precise, we require the following delocalization and edge variance assumptions.
	    
	    \begin{assumption}[Delocalization of $\bV$]\label{assump:Delocalization} There exist constants $c_1, c_2>0$, and an orthogonal matrix $\bW\in\mathcal{O}_d$ such that each entry of $\bV\bW$ satisfies
	    $$\frac{c_1}{\sqrt{n}} < (\bV\bW)_{kl} < \frac{c_2}{\sqrt{n}}, \quad\quad\forall k\in[n], l\in[d].$$
	    \end{assumption}
	    
\noindent The delocalization assumption requires that the score matrix $\bR^{(i)}$ influence the connectivity of enough edges in the graph. Because the common subspace $\bV$ is invariant to orthogonal transformations, the assumption only requires the existence of a particular orthogonal matrix  $\bW$ that  satisfies the entrywise inequalities.
This delocalization assumption is satisfied by a wide variety of graphs; for example, if we consider $\bV$ as the eigenvectors of an Erd\H{o}s-R\'enyi graph, its entries are all of order $1/\sqrt{n}$, and similarly if $\bV$ are the eigenvectors of the probability matrix of a stochastic blockmodel all of whose block sizes grow linearly with $n$ (see Proposition~\ref{proposition:sbm-delocalization} below). 

	    \begin{assumption}[Edge variance]\label{assump:variance_P} The sum of the variance of the edges satisfies
	    \begin{equation}\label{eq:s_n-definition}
	    s^2(\bP^{(i)}):=\sum_{s=1}^n\sum_{t=1}^n \bP^{(i)}_{st}(1-\bP^{(i)}_{st}) = \omega(1)
	    \end{equation}
	    for all $i\in[m]$. 
	    \end{assumption}

	    Assumptions \ref{assump:Delocalization} and \ref{assump:variance_P} hold for a wide variety of graphs. In particular, when the community sizes of an SBM are balanced and the connection probabilities do not decay very fast, the next proposition shows that these assumptions hold. The proof is included on the Appendix.

	    \begin{proposition}\label{proposition:sbm-delocalization}
	    Let $\bZ\in\{0,1\}^{n\times K}$ be a membership matrix with community sizes $n_1, \ldots, n_K$, and $\bB^{(i)}\in[0,1]^{K\times K}$ a connectivity matrix . 
	    \begin{enumerate}[a)]
        \item If $\min_{k\in[K]}n_k=\Omega(n)$, then Assumption~\ref{assump:Delocalization} holds for the common invariant subspace of a multilayer SBM with parameter $\bZ$.
	        \item If in addition to a), $\min_{i\in[m]}\sum_{v=1}^K\bB^{(i)}_{uv}(1-\bB^{(i)}_{uv})=\omega(1/n^2)$, then Assumption~\ref{assump:variance_P} also holds.
	    \end{enumerate}
	    \end{proposition}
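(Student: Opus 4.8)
The plan is to work with the explicit form of the common invariant subspace of a multilayer SBM. By Proposition~\ref{prop:sbm-is-cosie} the expected adjacency matrices $\bP^{(i)}=\bZ\bB^{(i)}\bZ^\top$ all have column space contained in $\operatorname{col}(\bZ)$, and the canonical orthonormal basis of this $K$-dimensional space is $\bV=\bZ\bXi^{-1/2}$ --- exactly the subspace appearing in Corollary~\ref{cor:sbm-V-Z} --- since its columns have disjoint supports and unit norm. The rows of $\bV$ take only $K$ distinct values: a vertex $u$ in community $k=z_u$ has row $(1/\sqrt{n_k})\,e_k^\top$. Taking $d=K$, everything then reduces to understanding how an orthogonal rotation $\bW$ acts on these $K$ row-types, namely $(\bV\bW)_{ul}=(1/\sqrt{n_{z_u}})\,\bW_{z_u,l}$.

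For part a), the balancedness hypothesis $\min_k n_k=\Omega(n)$ together with $n_k\le n$ gives $1/\sqrt{n_k}\asymp 1/\sqrt n$ uniformly in $k$, so it suffices to exhibit a single orthogonal $\bW\in\mathcal{O}_K$ all of whose entries are of order one, i.e. bounded away from $0$ and from above by constants depending only on the fixed $K$. I would produce such a $\bW$ by a genericity argument: the set of $K\times K$ orthogonal matrices with every entry nonzero is open and dense in $\mathcal{O}_K$, hence nonempty, and for any fixed member the minimum entry magnitude is a positive constant; alternatively one can write an explicit spread-out rotation (for $K$ a power of two, a normalized Hadamard matrix has all entries $\pm 1/\sqrt K$). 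Combining the two scalings yields $|(\bV\bW)_{ul}|\asymp 1/\sqrt n$ for all $u,l$, which is the delocalization required by Assumption~\ref{assump:Delocalization}.

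For part b), I would compute the edge-variance sum directly. Since $\bP^{(i)}_{st}=\bB^{(i)}_{z_sz_t}$, grouping the double sum by the community pair $(z_s,z_t)$ gives
$$s^2(\bP^{(i)}) = \sum_{k=1}^K\sum_{l=1}^K n_k n_l\,\bB^{(i)}_{kl}\bigl(1-\bB^{(i)}_{kl}\bigr)$$
up to a self-loop correction of order $O(n)$ that only lowers a diagonal contribution which itself scales like $n_{\min}^2$, hence is harmless. Retaining only the terms in the single row $u$ from the hypothesis and using $n_kn_l\ge n_{\min}^2=\Omega(n^2)$ (from part a) yields
$$s^2(\bP^{(i)}) \;\ge\; n_{\min}^2\sum_{l=1}^K \bB^{(i)}_{ul}\bigl(1-\bB^{(i)}_{ul}\bigr) \;=\; \Omega(n^2)\cdot\omega\!\left(\tfrac{1}{n^2}\right) \;=\; \omega(1),$$
uniformly over $i\in[m]$ by the added assumption $\min_i\sum_v \bB^{(i)}_{uv}(1-\bB^{(i)}_{uv})=\omega(1/n^2)$. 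This is precisely Assumption~\ref{assump:variance_P}.

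The hard part will be part a), and specifically the construction of $\bW$. The difficulty is that the canonical basis $\bZ\bXi^{-1/2}$ is highly \emph{localized} (most entries are zero), so the content of the claim is that one can rotate this sparse basis into one whose entries are all comparable to $1/\sqrt n$. Care is needed because an all-positive orthogonal matrix need not exist (already for $K=2$), so the argument really controls the magnitudes $|(\bV\bW)_{kl}|$, and the genericity (or explicit-construction) step is where the work lies; this step is only meaningful when $K$, and hence $d$, is treated as a constant, so the constants $c_1,c_2$ in Assumption~\ref{assump:Delocalization} are permitted to depend on $K$.
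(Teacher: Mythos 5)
Your proof is correct and follows essentially the same route as the paper: both identify the common invariant subspace as $\bZ\bXi^{-1/2}$, rotate this localized basis by an orthogonal matrix $\bW$ all of whose entries have comparable magnitude (so that $|(\bZ\bXi^{-1/2}\bW)_{ul}| \asymp 1/\sqrt{n}$ under balancedness), and prove part b) by grouping the variance sum over community pairs to get $s^2(\bP^{(i)}) = \sum_{k,l} n_k n_l \bB^{(i)}_{kl}(1-\bB^{(i)}_{kl}) \geq n_{\min}^2\,\omega(1/n^2) = \omega(1)$. The only difference is how the spread-out $\bW$ is produced---the paper cites an explicit construction (Jaming, 2015) with entries of magnitude $\asymp 1/\sqrt{K}$ uniformly in $K$, while your genericity/Hadamard argument yields constants depending on $K$---but this is immaterial here since $\min_k n_k = \Omega(n)$ forces $K = O(1)$, and your observation that the delocalization bounds must be read on $|(\bV\bW)_{kl}|$ (no all-positive orthogonal matrix exists) matches how the paper actually uses Assumption~\ref{assump:Delocalization}.
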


The score matrices are symmetric, so we focus on their upper triangular and  diagonal elements, and for a symmetric matrix $\bR\in\real^{d\times d}$, we define $\operatorname{vec}(\bR)\in\real^r$ as the vector of dimension $r=\frac{d(d+1)}{2}$ that contains those elements of  $\bR$, so that for $k,l\in[d], k\leq l,$
$$[\operatorname{vec}(\bR)]_{\frac{2k+l(l-1)}{2}}:=\bR_{kl}.$$

Given  the matrices $\bV$ and $\bP^{(i)}=\bV\bR^{(i)}\bV^\top $ we  define  $\bSigma^{(i)}\in\real^{r\times r}$ as
\begin{equation}
\bSigma^{(i)}_{\frac{2k + l(l-1)}{2}, \frac{2k'+l'(l'-1)}{2}}  := \sum_{s=1}^{n-1}\sum_{t=s+1}^n\bP^{(i)}_{st}(1-\bP^{(i)}_{st})\left[(\bV_{sk}\bV_{tl} + \bV_{tk}\bV_{sl})(\bV_{sk'}\bV_{tl'} + \bV_{sk'}\bV_{tl'})\right].\label{eq:covarianceR}
\end{equation}
The above  matrix  can change with $n$ since it is a function of $\bV$ and $\bR^{(i)}$. As it will be formalized in Theorem~\ref{thm:COSIE_CLT}, when  $n$ is large enough the covariance of the upper triangular and diagonal entries of $\widehat{\bR}^{(i)}$ (after some proper alignment) is approximated by $\bSigma^{(i)}$. The following assumption provides a sufficient condition to derive the asymptotic joint distribution of such entries of $\widehat{\bR}^{(i)}$.

 \begin{assumption}[Score matrix covariance]\label{assump:score_lambdamin} The magnitude of the smallest eigenvalue of $\bSigma^{(i)}$ satisfies $|\lambda_{\min}(\bSigma^{(i)})|=\omega(n^{-2})$.
\end{assumption}

The next theorem presents the asymptotic distribution of the estimated score matrices as the size of the graphs increases. The proof is given on the Appendix.
\begin{theorem}\label{thm:COSIE_CLT} Suppose $(\bA^{(1)}, \ldots,\bA^{(m)})\sim\cosie{\bV;\bR^{(1)}, \ldots, \bR^{(m)}}$ are a sample of adjacency matrices from the COSIE model such that the parameters satisfy $\min_{i\in[m]}\delta(\bP^{(i)})=\omega(\log n)$, $\varepsilon= O\left(\frac{1}{\max\sqrt{\delta(\bP^{(i)})}}\right)$, as well as the delocalization requirements given in Assumption~\ref{assump:Delocalization}.  
	\begin{itemize}
		\item[a)] Suppose that Assumption~\ref{assump:variance_P} holds. Let $\bW\in\mathcal{O}_d$ be an orthogonal matrix such that $\bW=\argmin_{\bW\in\mathcal{O}_d}\|\widehat{\bV} - \bV\bW\|_F$. Then, 
		the  estimates of $\{\bR^{(i)}\}_{i=1}^m$ provided by MASE (Algorithm~\ref{alg:mase}), $\{\widehat{\bR}^{(i)}\}_{i=1}^m$, satisfy
		\[\left(\bSigma^{(i)}_{\frac{2k + l(l-1)}{2}, \frac{2k+l(l-1)}{2}}\right)^{-1/2}\left(\bW \widehat{\bR}^{(i)}\bW^\top  - \bR^{(i)}  + \bH_{m}^{(i)}\right)_{kl} \overset{d}{\longrightarrow} \mathcal{N}(0,1), \]
		as $n\rightarrow\infty$, where $\mathcal{N}(0,1)$ is a univariate standard normal distribution, $\bH_{m}^{(i)}$ is a random matrix that satisfies $\e[\|\bH_m^{(i)}\|_F]=O(\frac{d}{\sqrt{m}})$, and 
		and $\bSigma^{(i)}_{\frac{2k + l(l-1)}{2}, \frac{2k+l(l-1)}{2}} \sim \frac{s^2(\bP^{(i)})}{n^2}$.
		
		\item[b)] If in addition to the requirements in a), Assumption~\ref{assump:score_lambdamin}  holds, then 
		\[\left(\bSigma^{(i)}\right)^{-1/2}\operatorname{vec}\left(\bW\widehat{\bR}^{(i)}\bW^\top - \bR^{(i)}  + \bH_{m}^{(i)}\right) \overset{d}{\longrightarrow} \mathcal{N}(\mathbf{0}_r,\mathbf{I}_r), \]
		where $\mathcal{N}(\mathbf{0}_r,\mathbf{I}_r)$ is an $r$-dimensional standard normal distribution.
	\end{itemize}
\end{theorem}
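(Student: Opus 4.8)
The plan is to write the aligned estimation error as an oracle quadratic form plus a bias that is folded into $\bH_m^{(i)}$. Let $\bW=\argmin_{\bW\in\mathcal{O}_d}\|\widehat{\bV}-\bV\bW\|_F$, set $\widetilde{\bV}=\widehat{\bV}\bW^\top$ and $\bE:=\widetilde{\bV}-\bV$, so that $\bW\widehat{\bR}^{(i)}\bW^\top=\widetilde{\bV}^\top\bA^{(i)}\widetilde{\bV}$. Splitting $\bA^{(i)}=\bP^{(i)}+\bN^{(i)}$ with $\bN^{(i)}=\bA^{(i)}-\bP^{(i)}$ centered, and using $\bP^{(i)}=\bV\bR^{(i)}\bV^\top$ with $\bM:=\bV^\top\widetilde{\bV}$, I would obtain
\begin{align*}
\bW\widehat{\bR}^{(i)}\bW^\top - \bR^{(i)}
&= \underbrace{\bV^\top\bN^{(i)}\bV}_{\text{oracle}}
 + \underbrace{\left(\bM^\top\bR^{(i)}\bM - \bR^{(i)}\right)}_{\text{signal bias}} \\
&\quad + \underbrace{\bE^\top\bN^{(i)}\bV + \bV^\top\bN^{(i)}\bE + \bE^\top\bN^{(i)}\bE}_{\text{cross terms}}.
\end{align*}
A short preliminary lemma I would prove is that, since $\bW$ is the orthogonal Procrustes solution, $\bM=\bV^\top\widehat{\bV}\bW^\top$ is symmetric; combined with $\widetilde{\bV}^\top\widetilde{\bV}=\bI$ this forces $\bV^\top\bE=-\tfrac12\bE^\top\bE$, hence $\bM=\bI-\tfrac12\bE^\top\bE$ and the signal bias is of order $\|\bR^{(i)}\|\,\|\bE\|_F^2$. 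I would then \emph{define} $\bH_m^{(i)}:=\bV^\top\bA^{(i)}\bV-\widetilde{\bV}^\top\bA^{(i)}\widetilde{\bV}$ (i.e.\ the negative of the signal-bias and cross terms), so that $\bW\widehat{\bR}^{(i)}\bW^\top-\bR^{(i)}+\bH_m^{(i)}=\bV^\top\bN^{(i)}\bV$ \emph{exactly}. The two remaining tasks are then the oracle CLT and the bound on $\e[\|\bH_m^{(i)}\|_F]$.

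For the oracle term, its $(k,l)$ entry is $\sum_{s<t}(\bV_{sk}\bV_{tl}+\bV_{tk}\bV_{sl})\bN^{(i)}_{st}$, a sum of independent, centered, bounded random variables whose variance is exactly $\bSigma^{(i)}_{\frac{2k+l(l-1)}{2},\frac{2k+l(l-1)}{2}}$ of Equation~\eqref{eq:covarianceR}. I would establish part a) by a Lindeberg (or Lyapunov) argument: under Assumption~\ref{assump:Delocalization} each summand is $O(1/n)$ almost surely, while Assumption~\ref{assump:variance_P} gives the diagonal variance $\sim s^2(\bP^{(i)})/n^2=\omega(n^{-2})$, so each normalized summand is uniformly $o(1)$ and the Lindeberg condition holds. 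For part b) I would apply the Cram\'er--Wold device: every fixed linear combination of the upper-triangular entries is again a sum of independent bounded variables with variance given by the corresponding quadratic form in $\bSigma^{(i)}$. Here Assumption~\ref{assump:score_lambdamin} is the crucial ingredient, since $\lambda_{\min}(\bSigma^{(i)})=\omega(n^{-2})$ makes $\|(\bSigma^{(i)})^{-1/2}\|=o(n)$, so that after the whitening normalization each per-edge contribution remains uniformly $o(1)$ and the multivariate Lindeberg condition and the identity limiting covariance both follow.

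To bound the bias, I would work on a high-probability event on which $\|\bN^{(i)}\|\lesssim\sqrt{\delta(\bP^{(i)})}$ simultaneously (cf.\ Theorem~20 of \cite{Athreya2017}), controlling the complement crudely via $\|\bE\|_F\le 2\sqrt d$. On the good event the dominant cross term obeys $\|\bV^\top\bN^{(i)}\bE\|_F\le\|\bN^{(i)}\|\,\|\bE\|_F\lesssim\sqrt{\delta(\bP^{(i)})}\,\|\bE\|_F$, while the signal bias is $\lesssim\|\bR^{(i)}\|\,\|\bE\|_F^2$. Invoking Theorem~\ref{thm:V-Vhat} (upgraded to control $\e[\|\bE\|_F]$ and $\e[\|\bE\|_F^2]$) together with the hypothesis $\varepsilon=O\!\left(1/\max_i\sqrt{\delta(\bP^{(i)})}\right)$, which yields $\sqrt{\delta(\bP^{(i)})}\,\varepsilon=O(1)$ and $\|\bR^{(i)}\|\varepsilon^2=O(1)$, the factor $\sqrt{1/m}$ carried by $\e[\|\bE\|_F]$ then produces the stated $\e[\|\bH_m^{(i)}\|_F]=O(d/\sqrt m)$, with the quadratic term $\bE^\top\bN^{(i)}\bE$ being of strictly smaller order.

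I expect the main obstacle to be the bias bound rather than the CLT itself: the subspace error $\bE$ depends on \emph{all} the graphs, including $\bA^{(i)}$, so $\bE$ and $\bN^{(i)}$ are dependent, and one must combine the aggregate bound of Theorem~\ref{thm:V-Vhat} (which is naturally a first-moment, all-graphs statement) with the per-graph spectral concentration of $\bN^{(i)}$ without assuming independence. A Cauchy--Schwarz split of $\e[\|\bN^{(i)}\|\,\|\bE\|_F]$ suffices for the crude $O(\sqrt{d/m})\subseteq O(d/\sqrt m)$ rate, but making the constants and the $\varepsilon=O(1/\max\sqrt{\delta})$ calibration line up—so that the signal bias, the first-order cross term, and the quadratic remainder all land below the required order—is where the bulk of the bookkeeping lies; a leave-one-out comparison of $\widehat{\bV}$ with the estimate formed from the other $m-1$ graphs is the natural fallback if a sharper, genuinely decoupled bound is needed. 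The second delicate point, already isolated above, is that the degeneracy scale $\lambda_{\min}(\bSigma^{(i)})\asymp n^{-2}$ is exactly the borderline at which whitening destroys Lindeberg, so Assumption~\ref{assump:score_lambdamin} must be used precisely to keep the normalized summands uniformly negligible.
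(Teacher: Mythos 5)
Your proposal is correct and follows essentially the same route as the paper's proof (Lemmas~\ref{lem:COSIE_R_decomp} and \ref{lem:VEV_normality}): the same exact decomposition of $\bW\widehat{\bR}^{(i)}\bW^\top-\bR^{(i)}$ into the oracle term $\bV^\top(\bA^{(i)}-\bP^{(i)})\bV$ plus a remainder absorbed into $\bH_m^{(i)}$, the same entrywise Lindeberg--Feller argument and Cram\'er--Wold device for the oracle term under Assumptions~\ref{assump:Delocalization}--\ref{assump:score_lambdamin}, and the same calibration $\sqrt{\delta(\bP^{(i)})}\,\varepsilon=O(1)$ to force $\e[\|\bH_m^{(i)}\|_F]=O(d/\sqrt{m})$. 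The only (harmless) local differences are that your Procrustes-symmetry identity $\bV^\top\bE=-\tfrac12\bE^\top\bE$ replaces the paper's $\sin\Theta$ inequality \eqref{eq:Frob_norm_sin_theta_2} for bounding the signal-bias term, and your Cauchy--Schwarz treatment of the dependence between the subspace error and $\bN^{(i)}$ addresses explicitly a step the paper performs by factoring that expectation.
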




    In addition to Assumptions~\ref{assump:Delocalization}, \ref{assump:variance_P} and \ref{assump:score_lambdamin}, Theorem~\ref{thm:COSIE_CLT} imposes slightly stronger assumptions on $\varepsilon$ than Theorem~\ref{thm:V-Vhat} to obtain an asymptotically normal distribution for the entries of $\bR^{(i)}$. 
    
    Due to the non-identifiability of the specific matrix $\bV$, the results in Theorem~\ref{thm:COSIE_CLT} about each estimated score matrix $\widehat{\bR}^{(i)}$ are stated in terms of an  orthogonal matrix $\bW$. The next corollary shows that this unknown matrix $\bW$ can be removed and each individual entry of the matrix $\widehat{\bR}^{(i)}$ also has an asymptotic normal distribution with a unspecified but bounded variance.
    
     \begin{corollary} \label{corollary:variance-VEV}
Suppose that the assumptions of Theorem~\ref{thm:COSIE_CLT} part b) hold. 
Then, there exists a sequence of orthogonal matrices $\bW\in\mathcal{O}_d$ such that
$$\frac{1}{\sigma_{i,k,l}}(\widehat{\bR}^{(i)} - \bW^\top \bR^{(i)} \bW+\bH_m^{(i)})_{kl}\overset{d}{\rightarrow}\mathcal{N}(0,1),$$
where $\e[\|\bH^{(i)}_m\|_f]=O\left(\frac{d}{\sqrt{m}}\right)$ and $\sigma_{i,k,l}^2=O\left( \frac{d^2s^2(\bP^{(i)})}{n^2}\right)=O(1)$.
\end{corollary}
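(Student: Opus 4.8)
The plan is to deduce the corollary from part b) of Theorem~\ref{thm:COSIE_CLT} by an orthogonal change of basis, turning the two-sided rotation $\bW\widehat{\bR}^{(i)}\bW^\top$ appearing there into the un-rotated centering $\bW^\top\bR^{(i)}\bW$ of the corollary, and then reading off marginal normality of a single entry from the joint normality of the vectorization. First I would set up the conjugation. Let $\bW\in\mathcal{O}_d$ be the Procrustes aligner from Theorem~\ref{thm:COSIE_CLT}, and write $\bM^{(i)}=\bW\widehat{\bR}^{(i)}\bW^\top-\bR^{(i)}+\bH_m^{(i)}$ for the matrix appearing there, so that part b) gives $(\bSigma^{(i)})^{-1/2}\operatorname{vec}(\bM^{(i)})\overset{d}{\to}\mathcal{N}(\mathbf{0}_r,\mathbf{I}_r)$. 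Conjugating by $\bW^\top$ and using $\bW\bW^\top=\mathbf{I}_d$ gives $\bW^\top\bM^{(i)}\bW=\widehat{\bR}^{(i)}-\bW^\top\bR^{(i)}\bW+\bW^\top\bH_m^{(i)}\bW$. I would therefore take the corollary's bias term to be $\bW^\top\bH_m^{(i)}\bW$; since the Frobenius norm is invariant under orthogonal conjugation, $\e[\|\bW^\top\bH_m^{(i)}\bW\|_F]=\e[\|\bH_m^{(i)}\|_F]=O(d/\sqrt{m})$, so the stated bias bound is preserved, and the corollary's centered quantity is exactly $\bW^\top\bM^{(i)}\bW$.

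Next I would express a single entry as a linear functional of $\operatorname{vec}(\bM^{(i)})$. Because $\bM^{(i)}$ is symmetric, $(\bW^\top\bM^{(i)}\bW)_{kl}=(\bW e_k)^\top\bM^{(i)}(\bW e_l)=\langle \mathbf{c}^{(kl)},\operatorname{vec}(\bM^{(i)})\rangle$, where the coefficient vector $\mathbf{c}^{(kl)}$ is read off from the entries of $\bW$ (with $c^{(kl)}_{aa}=W_{ak}W_{al}$ and $c^{(kl)}_{ab}=W_{ak}W_{bl}+W_{bk}W_{al}$ for $a<b$). Since part b) states that $\operatorname{vec}(\bM^{(i)})$ is asymptotically $\mathcal{N}(\mathbf{0},\bSigma^{(i)})$, the Cram\'er--Wold device (a linear image of an asymptotically Gaussian vector is asymptotically Gaussian) yields $(\bW^\top\bM^{(i)}\bW)_{kl}\big/\sigma_{i,k,l}\overset{d}{\to}\mathcal{N}(0,1)$ with $\sigma_{i,k,l}^2=(\mathbf{c}^{(kl)})^\top\bSigma^{(i)}\mathbf{c}^{(kl)}$. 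Strict positivity of $\sigma_{i,k,l}$ holds because $\mathbf{c}^{(kl)}\neq\mathbf{0}$ (the symmetric matrix $(\bW e_k)(\bW e_l)^\top+(\bW e_l)(\bW e_k)^\top$ is nonzero since the columns of $\bW$ are orthonormal) together with $\bSigma^{(i)}\succ 0$, which is guaranteed by Assumption~\ref{assump:score_lambdamin}.

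Finally I would bound the variance via $\sigma_{i,k,l}^2\le\|\bSigma^{(i)}\|\,\|\mathbf{c}^{(kl)}\|^2$. Orthonormality of the columns of $\bW$ gives $\|\mathbf{c}^{(kl)}\|^2=O(1)$ uniformly over $\mathcal{O}_d$, while the diagonal order $\bSigma^{(i)}_{jj}\sim s^2(\bP^{(i)})/n^2$ from Theorem~\ref{thm:COSIE_CLT}(a) (off-diagonal entries being controlled by the diagonal ones since $\bSigma^{(i)}$ is positive semidefinite) together with $r=d(d+1)/2$ yields $\|\bSigma^{(i)}\|\le r\max_{j,j'}|\bSigma^{(i)}_{jj'}|=O(d^2 s^2(\bP^{(i)})/n^2)$. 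Hence $\sigma_{i,k,l}^2=O(d^2 s^2(\bP^{(i)})/n^2)$, which is $O(1)$ since $s^2(\bP^{(i)})\le n^2/4$.

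The main obstacle is that the aligner $\bW$ is data-dependent, so the coefficient vector $\mathbf{c}^{(kl)}$ is itself random and not independent of $\operatorname{vec}(\bM^{(i)})$; a naive ``linear combination of jointly normal entries'' argument is therefore not immediate, since a random direction applied to a vector converging to $\mathcal{N}(\mathbf{0},\mathbf{I}_r)$ need not stay Gaussian. I would resolve this by returning to the proof of Theorem~\ref{thm:COSIE_CLT}, where $\bM^{(i)}$ is shown to equal, up to a remainder that is $o_P$ of the standard deviation, the edge-sum $\bV^\top(\bA^{(i)}-\bP^{(i)})\bV$. Conjugating and interchanging summation then rewrites $(\bW^\top\bM^{(i)}\bW)_{kl}$ as $\sum_{s<t}(\bA^{(i)}-\bP^{(i)})_{st}\big[(\bV\bW)_{sk}(\bV\bW)_{tl}+(\bV\bW)_{tk}(\bV\bW)_{sl}\big]$ plus a negligible diagonal contribution, a sum of independent mean-zero edge terms whose coefficients are delocalized of order $1/\sqrt{n}$ by Assumption~\ref{assump:Delocalization}. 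A Lindeberg central limit theorem applied to this sum (exactly as in the proof of the theorem) delivers the asymptotic normality directly, with the data dependence of $\bW$ entering only the $o_P$ remainder, and simultaneously reproduces the variance order $s^2(\bP^{(i)})/n^2$.
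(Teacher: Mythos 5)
Your final argument is essentially the paper's own proof: the paper likewise takes the decomposition $\bW\widehat{\bR}^{(i)}\bW^\top - \bR^{(i)} + \bH_m^{(i)} = \bV^{\top}\bE^{(i)}\bV$ from Lemma~\ref{lem:COSIE_R_decomp}, conjugates it by $\bW$ (renaming the bias, whose Frobenius norm is invariant under the conjugation), invokes the Lindeberg central limit argument of Lemma~\ref{lem:VEV_normality} for the entries of $(\bV\bW)^\top\bE^{(i)}(\bV\bW)$, and bounds the variance exactly as in your last paragraph, via the delocalization bound $|(\bV\bW)_{sk}|\le\|\bV_{s\cdot}\|_2\|\bW_{\cdot k}\|_2\le\sqrt{d/n}$, yielding $\sigma^2_{i,k,l}=O(d^2 s^2(\bP^{(i)})/n^2)$. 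The obstacle you flag---that $\bW$ is the data-dependent Procrustes aligner, so neither the linear-functional reduction from part b) nor, strictly speaking, a direct Lindeberg application with coefficients $(\bV\bW)_{sk}$ is immediate---is a genuine subtlety, but the paper glosses over it in precisely the same way (it applies Lemma~\ref{lem:VEV_normality} with the random basis $\bV\bW$ in place of the deterministic $\bV$ without comment), so your write-up is at the same level of rigor as, and follows the same route as, the published proof.
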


    Using Corollary~\ref{corollary:variance-VEV} in combination with a tail bounds of a normal distribution, for sufficiently large $n$, the mean square error between the estimated score matrix and its expectation satisfies
    \begin{equation}
        \| \widehat{\bR}^{(i)} - \bW^\top \bR^{(i)}\bW  + \bH_m\|_F = O_P\left( \frac{d^2\sqrt{s^2(\bP^{(i)})}}{n} \right).
        \label{eq:R-Rhat-bigO}
    \end{equation}
    Because $s^2(\bP^{(i)})=O(n^2)$, the right hand side of Equation~\eqref{eq:R-Rhat-bigO} is at most of an order $O_P(d^2)$. On the other hand, a lower bound for the the norm of the score matrices can be obtained by observing that
    $$\|\bR^{(i)}\|_F^2 = \|\bP^{(i)}\|_F^2 =  \sum_{k=1}^d\left(\lambda_k(\bP^{(i)})\right)^2.$$
    The eigenvalues of $\bP^{(i)}$ are of order $\delta(\bP^{(i)})$ (see Remark 24 in \cite{Athreya2016}), and hence, under the assumptions of Theorem~\ref{thm:COSIE_CLT}, $\|\bR^{(i)}\|_F=\omega(\log n)$. Therefore, the relative mean square error of the estimated score matrices with respect to its expectation goes to zero as $n$ increases, and if the sample size $m$ is also growing, the estimation bias $\bH_m$ vanishes.

        
        
    To illustrate the result, we consider the problem of estimating the eigenvalues of a random graph under the COSIE model. For that goal, we generate a sample of $m$ adjacency matrices from the COSIE model, and use the eigenvalues of $\widehat{\bR}^{(1)}$ as an estimate of the eigenvalues of $\bP^{(1)}$. According to Theorem \ref{thm:COSIE_CLT}, the entries of $\widehat{\bR}^{(1)}$ are asymptotically normally distributed and centered around a bias term $\bH$ that vanishes as the sample size grows. We conjecture that there is a similar phenomenon in the eigenvalues of $\bR^{(1)}$; namely, that the eigenvalues of $\widehat{\bR}^{(i)}$ are asymptotically normally distributed about the eigenvalues of $\bP$, albeit with some bias. More formally, we conjecture that 
    $$\lambda_k(\widehat{\bR}^{(i)}) - \lambda_k(\bP^{(i)}) - \eta_m \overset{d}{\rightarrow} \mathcal{N}(0, \sigma^2)$$
    where $\eta_m=O_P(1/\sqrt{m})$.
    These results dovetail nicely the theory of \cite{Tang2018b}, which presents a normality result, with bias, for the eigenvalues of stochastic blockmodel graphs. To illustrate the bias-reduction impact of multiple graphs in our case, consider the problem of eigenvalue estimation in both the single and multiple-graph setting. Figure~\ref{fig:eigs-m-hist} shows that as the sample size increases, the distribution of the two leading eigenvalues of a SBM graph estimated with MASE approaches the true eigenvalues. When the number of vertices increases but the sample size remains constant, this is not the case, as observed in Figure~\ref{fig:eigs-n-hist}.

    \begin{figure}
        \centering
        \includegraphics[width=\textwidth]{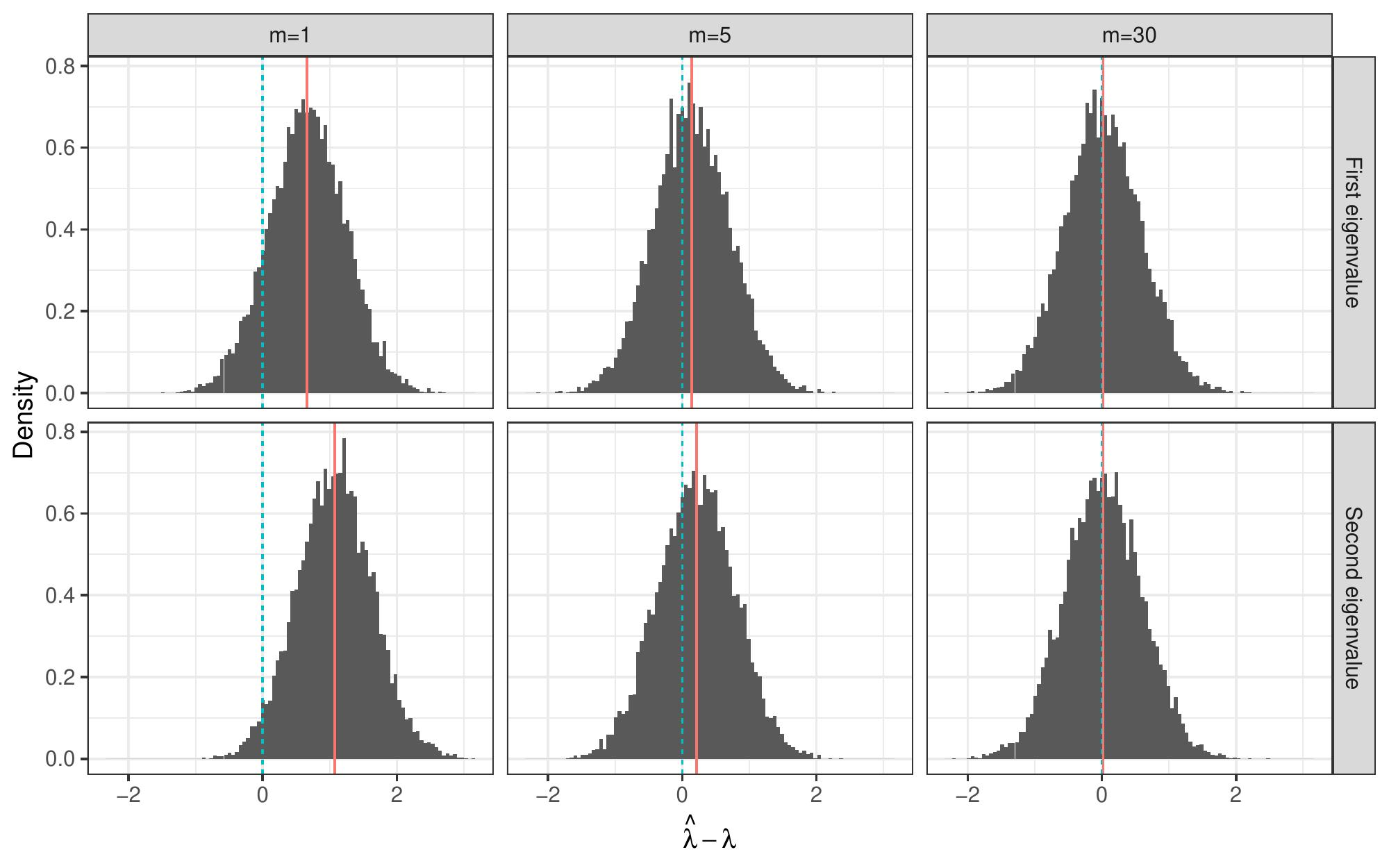}
        \includegraphics[width=\textwidth]{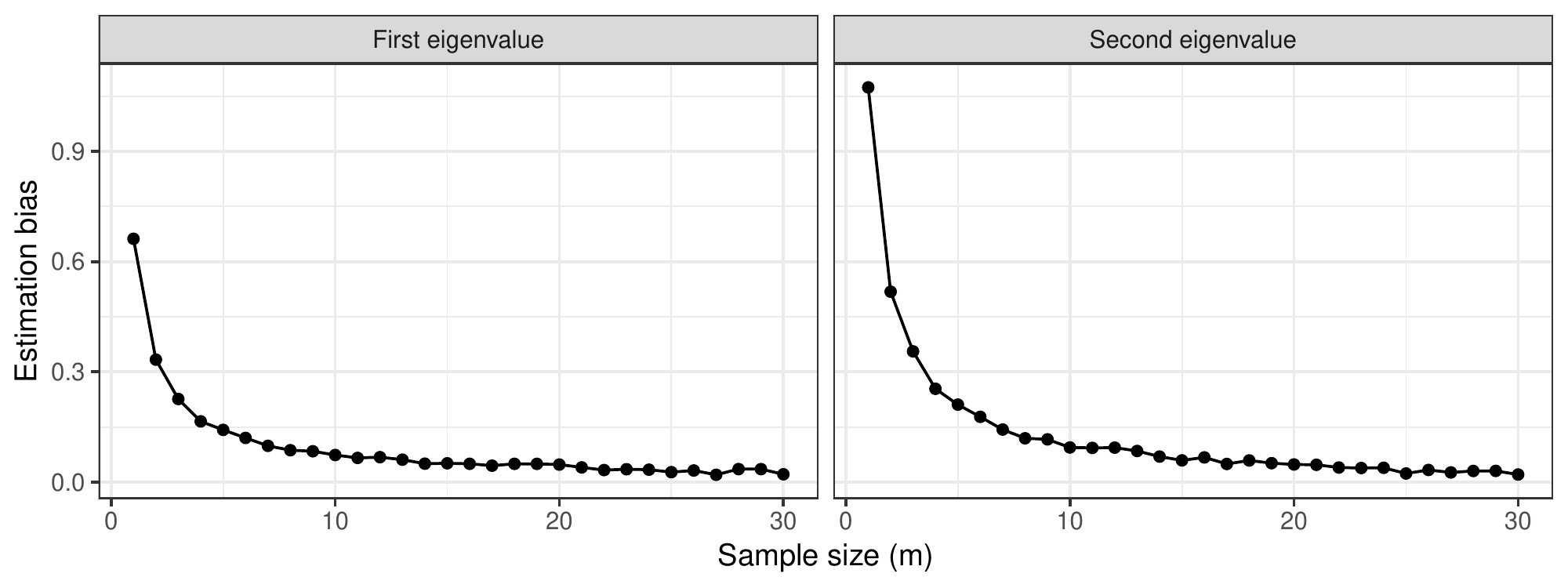}
        \caption{The top panel shows the distribution of the difference between the MASE estimates and true eigenvalues of a single graph. The graphs in the sample are distributed according to a two-block multilayer SBM with $n=300$ vertices, block connection probability matrix $\bB^{(i)}=0.3\bI + 0.1\textbf{1}\textbf{1}^\top $. The distributions appear to be gaussian, and as the sample size $m$ increases, the mean of the distribution (red solid line) approaches to zero (blue dashed line). This phenomenon is also observed in the bottom panel, which shows that the estimation bias decreases with the sample size.}
        \label{fig:eigs-m-hist}
    \end{figure}
    
    \begin{figure}
        \centering
        \includegraphics[width=\textwidth]{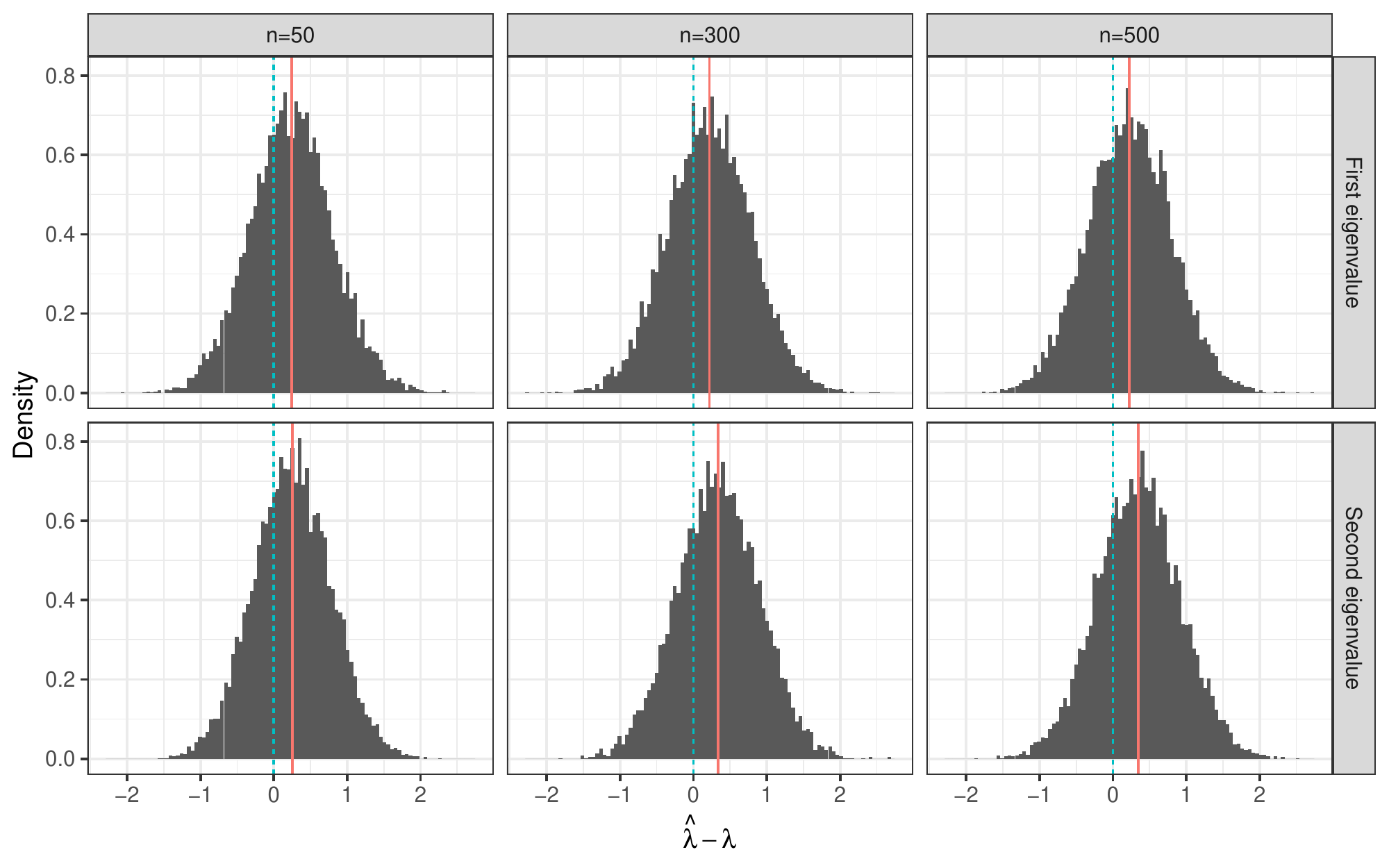}
        \includegraphics[width=\textwidth]{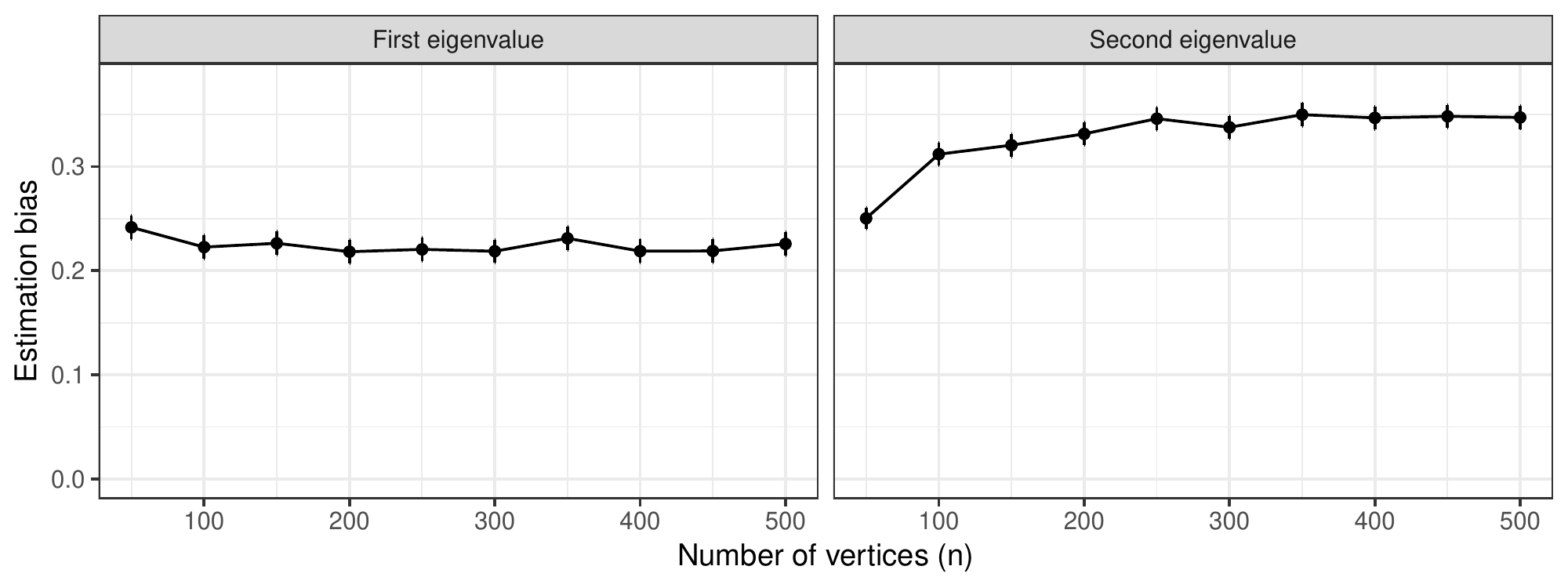}
        \caption{The top panel shows the distribution of the difference between the MASE estimates and true eigenvalues of a single graph. The graphs in the sample are distributed according to a two-block multilayer SBM, block connection probability matrix $\bB^{(i)}=0.3\bI + 0.1\textbf{1}\textbf{1}^\top $, and a fixed sample size $m=3$ but different number of vertices $n$. The estimation bias remains positive even with a large graph size (bottom panel), and for any fixed $n$ the mean of the distributions (red solid line) is away from zero (blue dashed line), but the distributions appear to be gaussian.}
        \label{fig:eigs-n-hist}
    \end{figure}
	


	\section{Simulations\label{sec:simulations}}
	In this section, we study the empirical performance of MASE for estimating the parameters of the COSIE model, as well the performance of this method in subsequent inference tasks, including community detection, graph classification and hypothesis testing. In all cases, we compare with state-of-the-art models and methods for multiple-graph data as baselines. The results show that the COSIE model is able to  effectively handle heterogeneous distributions of graphs and leverage the information across all of them, and demonstrates a good empirical performance in different settings.
	
	 \subsection{Subspace estimation error}
	 We first study the performance in estimating the common subspace $\bV$ by using the estimator $\widehat{\bV}$ obtained by MASE in a setting where the number of graphs $m$ increases. Given a pair of matrices with orthonormal columns $\bV$ and $\widehat{\bV}$, we measure the distance between their invariant subspaces via the spectral norm of the difference between the projections, given by $\|\widehat{\bV}  \widehat{\bV}^\top  - \bV  \bV^\top \|$. This distance is zero only when there exist an orthogonal matrix $W\in\mathcal{O}_d$ such that $\widehat{\bV}=\bV\bW$.
	 
	 Given a sample of graphs $\bA^{(1)},\ldots, \bA^{(m)}$, the ASE of the average of the graphs, given by
	 $$\bar{\bA} = \frac{1}{m}\sum_{i=1}^m\bA^{(i)},$$
	 provides a good estimator of the invariant subspace when all the graphs have the same expectation matrix \citep{Tang2018}.
	 However, when each graph has a different expected value, $\bar{\bA}$ approaches to the population average, which might not contain useful information about the structure of the invariant subspace. MASE, on the other hand, is able to handle heterogeneous structure, and does not require such strict assumptions. We compare this method with the performance of MASE these two settings, i.e.,~equal and different distribution of the graphs. 
	 
	 We simulate graphs from a three-block multilayer stochastic blockmodel with membership matrix $\bZ\in\{0,1\}^{n\times 3}$ and connectivity matrices $\bB^{(1)}, \ldots, \bB^{(m)}$. The number of vertices is fixed to $3^6=729$, and equal-sized communities. We simulate two scenarios. In the first scenario, all graphs have the same connectivity,  given by
	    \begin{align*}
            \bB^{(i)} = \left( \begin{array}{ccc}
                .4 & .1 & .1  \\
                .1 & .4 & .2  \\
                .1 & .2  & .3
            \end{array}\right), \quad\quad \forall i\in[m].
        \end{align*} 
        In the second scenario, each matrix is generated randomly by independently sampling the entries as  $\bB^{(i)}_{uv} \sim U (0,1)$, for each $1\leq u\leq v \leq 3$ and $i\in[m]$.
        
        We then compare the performance of the different methods as the number of graphs increases. We consider both the unscaled and scaled versions of MASE, and ASE on the mean of the adjacency matrices (ASE(mean)). Figure~\ref{fig:sim-subspace-error} shows the average subspace estimation error of several Monte Carlo experiments (25 simulations in the first scenario, and 100 in the second). In the first scenario, the three  methods exhibit a similar performance, decreasing the error as a function of the sample size, and although ASE(mean) has a slightly better performance, the performance of MASE is statistically not worse than ASE(mean). In the second scenario, when all the graphs have different expected values, ASE(mean)  performs poorly, even with a large number of graphs, while MASE mantains a small error and is still able to improve the performance with larger sample size. These results corroborate our theory that MASE is effective in leveraging the information across multiple heterogeneous graphs, and it performs similarly to the ASE on the average of the graphs when the sample has a homogeneous distribution. 
        
        The results of Figure~\ref{fig:sim-subspace-error} also show that in some circumstances the scaled MASE can perform significantly better than its unscaled counterpart in estimating the common invariant subspace. When some eigenvalues of a matrix $\bB^{(i)}$ are close to zero, the estimation of the eigenvectors corresponding to the smallest non-zero eigenvalues is difficult.
        The eigenvalue scaling in ASE can help MASE in discerning the dimensions of $\widehat{\bV}^{(i)}$ that are more accurately estimated, which might explain the improved performance of the scaled MASE in this simulation. Our current theoretical results only address the unscaled MASE, but the simulations encourage extensions of the analysis to the scaled MASE.

	 \begin{figure}
	     \centering
	     \includegraphics[width=\textwidth]{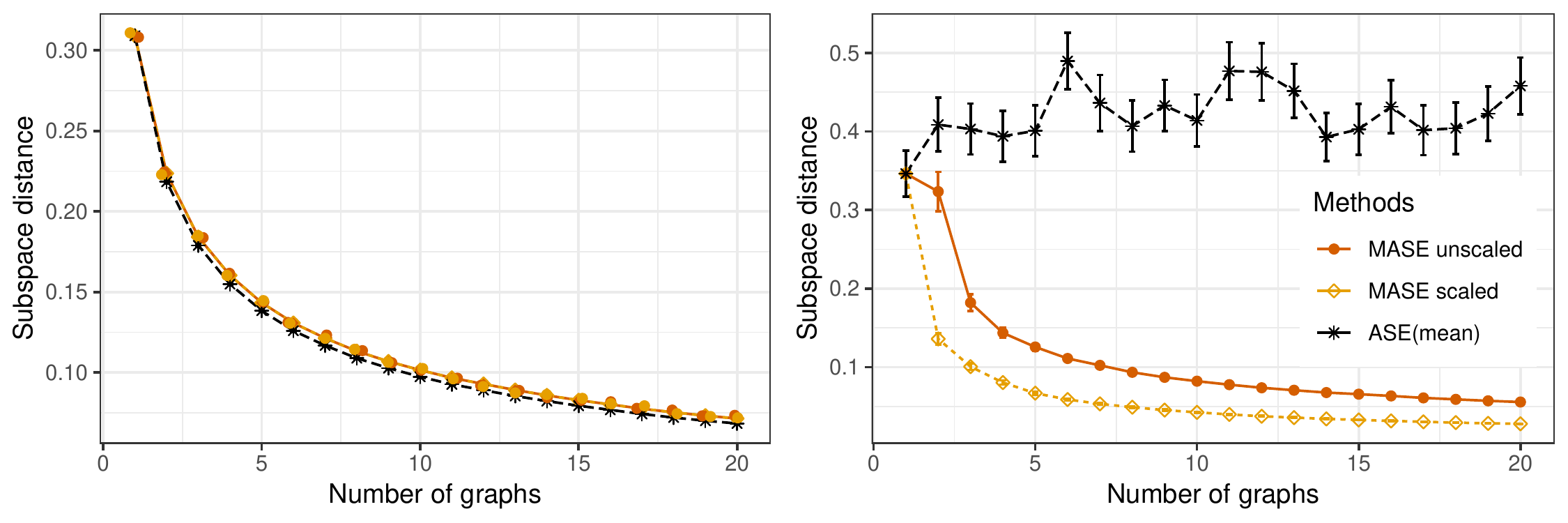}
	     \caption{Distance between the true and estimated invariant subspaces computed with MASE (scaled and unscaled ASE) and ASE of the mean of the graphs for a sample of graphs distributed according to a multilayer SBM with 3 communities. On the left panel, 
	     all graphs have the same expected matrix. The three methods perform almost the same, reducing the error as the sample size increases. On the right panel, the connection probabilities of the SBM are chosen uniformly at random for each graph. ASE on the average adjacency matrix performs poorly, while MASE still improves with a larger sample size.}
	     \label{fig:sim-subspace-error}
	 \end{figure}
   
	 \subsection{Modeling heterogeneous graphs\label{sec:sim-classif}}
    Now, we study the performance of MASE in modeling the structure of graphs with different distributions. First, simulated graphs are distributed according to the  multilayer SBM with two blocks, but now each graph $i$ has an associated class label $y_i\in\{1,2,3,4\}$. The connection matrices $\bB^{(i)}$ have four different classes depending on the label, so if $y_{i}=k$ then
    \begin{equation*}
        \bB^{(i)} = 0.25(\mathbf{1}\mathbf{1}^\top ) + \alpha \bC^{(k)},
    \end{equation*}
    where $\alpha\geq 0$ is a constant that controls the separation between the classes, and the matrices $\bC^{(k)}$ are defined as
        \begin{align*}
            \bC^{(1)} = \left( \begin{array}{cc}
                .1 & 0 \\
                0 & .1
            \end{array}\right)   & \quad   & \bC^{(2)} = -\left( \begin{array}{cc}
                .1 & 0 \\
                0 & .1
            \end{array}\right)\\
            \bC^{(3)} = \left( \begin{array}{cc}
                .1 & 0 \\
                0 & 0
            \end{array}\right)   & \quad   & \bC^{(4)} = \left( \begin{array}{cc}
                0 & 0 \\
                0 & .1
            \end{array}\right).
        \end{align*}
        This choice of graphs allows to simultaneously study the effects on the difference between the classes and the smallest eigenvalue of the score matrices. When $\alpha=0$, all score matrices are the same, but also the smallest eigenvalue is zero, so both subspace estimation and graph classification are hard problems, and as $\alpha$ increases, the signal for these problems improves.
        
        We simulate 40 graphs, with 10 on each class, all with $n=256$ vertices and equal-sized communities. We compare the performance of MASE with respect to other latent space approaches for modeling multiple graphs: joint embedding of graphs (JE) \citep{Wang2017}, multiple RDPG (MRDPG) \citep{Nielsen2018}, and the omnibus embedding (OMNI) \citep{Levin2017}. The first two methods, JE and MRDPG, are based on a model related to COSIE. In particular, the expected value of each adjacency matrix is described as
        \[\bP^{(i)} = \bH\bLambda^{(i)}\bH^\top ,\]
        where $\bH$ is a $n\times d$ matrix, and $\bLambda$ is a $d\times d$ diagonal matrix. In JE, the matrix $\bH$ is only restricted to have columns with norm 1, which makes the model non-identifiable, while MRDPG imposes further restrictions on $\bH$ to be orthogonal and $\bLambda$ to have nonnegative entries. Both models are fitted by optimizing a least squares loss function to obtain estimates $\widehat{\bH}$ and $\{\widehat{\bLambda}^{(i)}\}_{i=1}^m$. The omnibus embedding obtains estimates for individual latent positions for each graph by jointly embedding the graphs into the same space. Given a sample of $m$ adjacency matrices, the omnibus embedding obtains an omnibus matrix $\bM\in\real^{mn\times mn}$, in which each off-diagonal submatrix is the average of each pair of graphs in the sample, so that
        \[\bM_{u + (i-1)m, v+ (j-1)m} = \frac{\left(\bA^{(j)}_{uv} + \bA^{(i)}_{uv}\right)}{2}.\]
        The omnibus embedding is generated by the ASE of $\bM$, and each of the $i$th blocks of $n$ rows of the embedding corresponds to the estimated latent positions of the $i$th graph, such that
        \[\left((\widehat{\bX}^{(1,\text{OMNI})})^\top ,\ldots, (\widehat{\bX}^{(m,\text{OMNI})})^\top   \right)^\top  = \text{ASE}(\bM).\]
        For MASE and OMNI, we use an embedding dimension $d=2$; JRDPG and MRDPG represent the scores of each graph with a diagonal matrix, so in order to capture all the variability in the data we use $d=3$ for the embedding dimension.
        In all comparisons, the two versions of MASE (scaled and unscaled ASE) performed very similarly, and therefore we only report the results of unscaled MASE.
        
        \subsubsection{Semi-supervised graph classification}

        First, we measure the accuracy of these methods in graph classification. We use the simulated graphs as training data; for test data, we generate a second set of graphs with the same characteristics as the first.
        In order to determine classification accuracy, we first use each method to obtain an embedding of all the graphs, including training and test data.  We then use the Frobenius distance between the individual score matrices of each graph to build a 1-nearest neighbor classifier for the test data, and use the labels of the training data to classify the test data. The individual score matrices correspond to $\{\widehat{\bR}^{(i)}\}_{i=1}^m$
        in MASE, $\{\widehat{\bH}^{(i)}\}_{i=1}^m$ in JE and MRDPG, and $\{\widehat{\bX}^{(i,\text{OMNI})}\}_{i=1}^m$ in OMNI. Note that both training and test data are used to generate the estimates of the score matrices; in other words, to produce an {\em unsupervised} joint embedding for all the graphs. Such a joint embedding allows to avoid cumbersome Procrustes alignments that are required by the OMNI and the MRDPG procedures when embedding test and training data separately; neither OMNI and MRDPG have established out-of-sample extensions for their respective embeddings. MASE can generate the embedding without using the test data by constructing the invariant subspace matrix $\widehat{\bV}$ only using the training data, but we use the semi-supervised procedure described before for consistency between the different methods.
        
        Figure~\ref{sim:class-acc} shows the average classification accuracy as a function of $\alpha$ for 50 Monte Carlo simulations. For all methods, the accuracy increases as the separation between each class model increases with $\alpha$,  and MASE performs best among all the methods. This result is not surprising, considering the fact that MASE has better flexibility to model heterogeneous graphs. 
        The performance of OMNI is also excellent, but it is worth noting that OMNI employs a much larger number of parameters ($n \times d$) to represent each graph. JE is based on a model that can also represent each expected adjacency correctly, but the non-convexity and over-parametertization of the objective function can make the optimization problem challenging, 
        and the performance depends on the initial value, which is random, resulting in inferior performance on average. The identifiability constraints of MRDPG limit the type of graphs that this model can represent, and this method is never able to separate two of the classes correctly, resulting in a limited performance in practice.
        
        
        	 \begin{figure}
	     \centering
	     \includegraphics[width=0.7\textwidth]{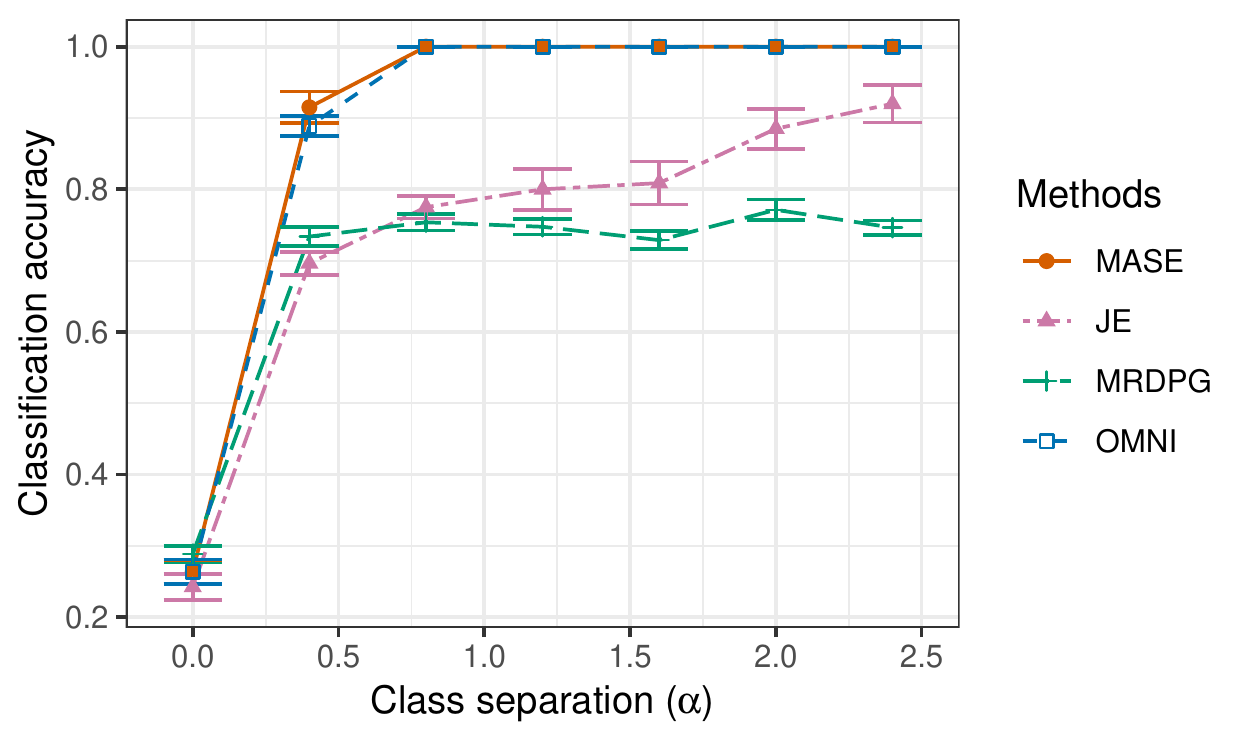}
	     \caption{Out-of-sample classification accuracy as a function of the difference between classes for different graph embedding methods. Graphs in the sample are distributed as a multilayer SBM with $n=256$ vertices, two communities, four different connectivity matrices that correspond to the class labels, and a training and test samples with $m=40$ graphs. As the separation between the classes increase, all methods improve performance, but MASE and OMNI show the most gains.}
	     \label{sim:class-acc}
	 \end{figure}

	 
        \subsubsection{Model estimation}
        We also compare the performance of different methods in terms of approximating the probability matrices of the model. For each method and each graph, we obtain an estimate $\widehat{\bP}^{(i)}$ of the generative probability matrix of the model $\bP^{(i)}$, and measure the model estimation error with the normalized mean squared error as
        \begin{equation*}
            L(\widehat{\bP}^{(i)}, \bP^{(i)}) = \frac{\|\widehat{\bP}^{(i)} - \bP^{(i)}\|_F}{\|\bP^{(i)}\|_F},
        \end{equation*}
        and report the average estimation error over all the graphs.
         Figure~\ref{fig:sim-P-Phat-err} reports the average Monte Carlo estimation error as a function of $\alpha$. Again, we observe that MASE has the best performance among all the methods. This is not surprising, considering that MASE is designed for this model, but it shows the limitations of the other methods. JE also shows an improvement in estimation as the separation between the classes increases, since this is the only other method that can represent correctly the four classes, but as in the classification task, the variance is larger. While OMNI can discriminate the graphs correctly, as previously observed, the error in estimating the generative probability matrix is large. MRDPG is again not able to succeed due to the model limitations.
         
         \begin{figure}
	     \centering
    	     \includegraphics[width=0.7\textwidth]{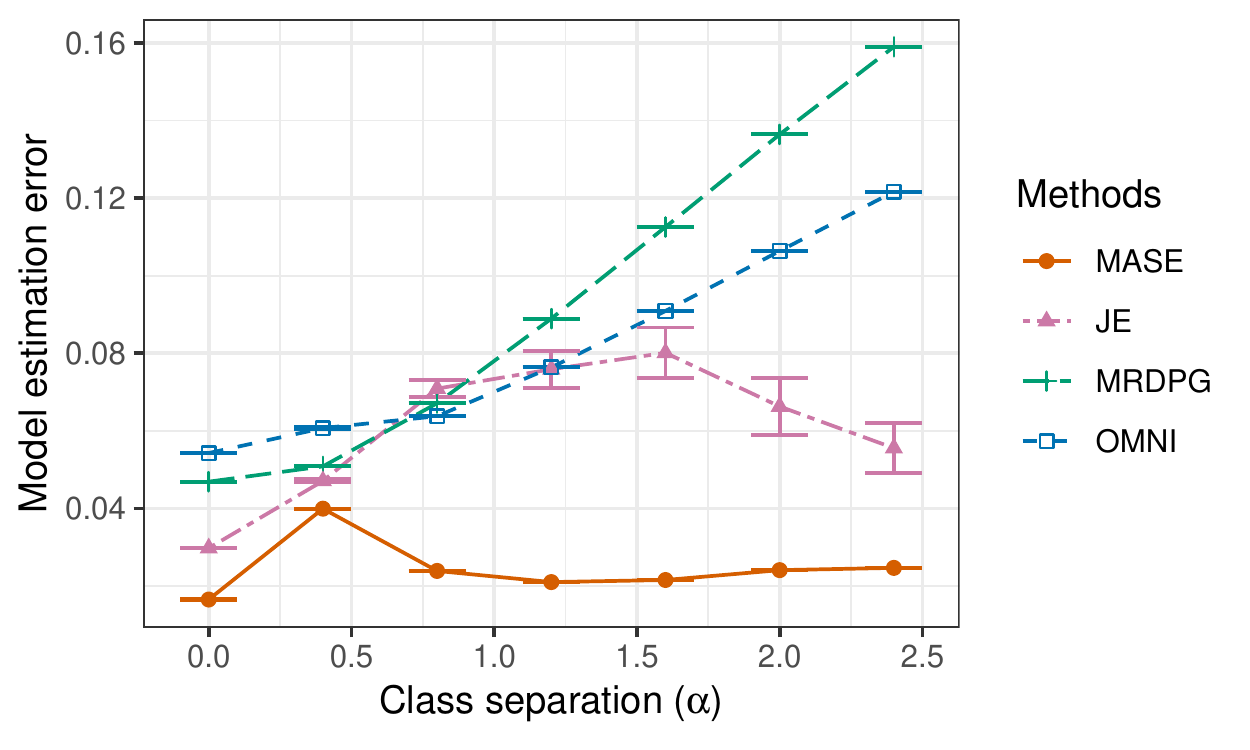}
	     \caption{Average normalized mean squared error of estimating the expected adjacency matrix of a sample of graphs using different embedding methods. Graphs in the sample are distributed as a multilayer SBM with $n=256$ vertices, two communities, four different classes of connectivity matrices, and a training and test samples with $m=40$ graphs. MASE and JE are the only methods that are flexible enough to capture the heterogeneous structure of the graphs, but MASE shows superior performance among these two. As the graphs become more different, the error of MRDPG and OMNI increases.}
	     \label{fig:sim-P-Phat-err}
	 \end{figure}

        \subsubsection{Community detection}
        We use the joint latent positions obtained by each method to perform community detection, by clustering the rows of each embedding using a gaussian mixture model (GMM). For MASE, we use the matrix $\widehat{\bV}$ as the estimated joint latent positions, while for JE and MRDPG we use the matrix $\widehat{\bH}$. For OMNI, we take the average of the individual latent positions of the graphs $\sum_{i=1}^m\widehat{\bX}^{(m,\text{OMNI})}$. In all cases, because the number of communities is known, we fit two clusters using GMM, and compare with the true communities of the multilayer SBM. 
        
        The average accuracy in clustering the communities is reported in Figure~\ref{fig:sim-community}. Spectral clustering methods require to have enough separation on the smallest non-zero eigenvalue. This is the case for MASE and OMNI, which show poor performance with small $\alpha$, but once the signal is strong enough, both methods perform excellent.
        MRDPG shows a good performance, even for small $\alpha$, which could be due to the optimization procedure employed by the method in fitting $\bV$. As in all the other tasks, JE shows a high variance in performance.

	 \begin{figure}
	     \centering
	     \includegraphics[width=0.7\textwidth]{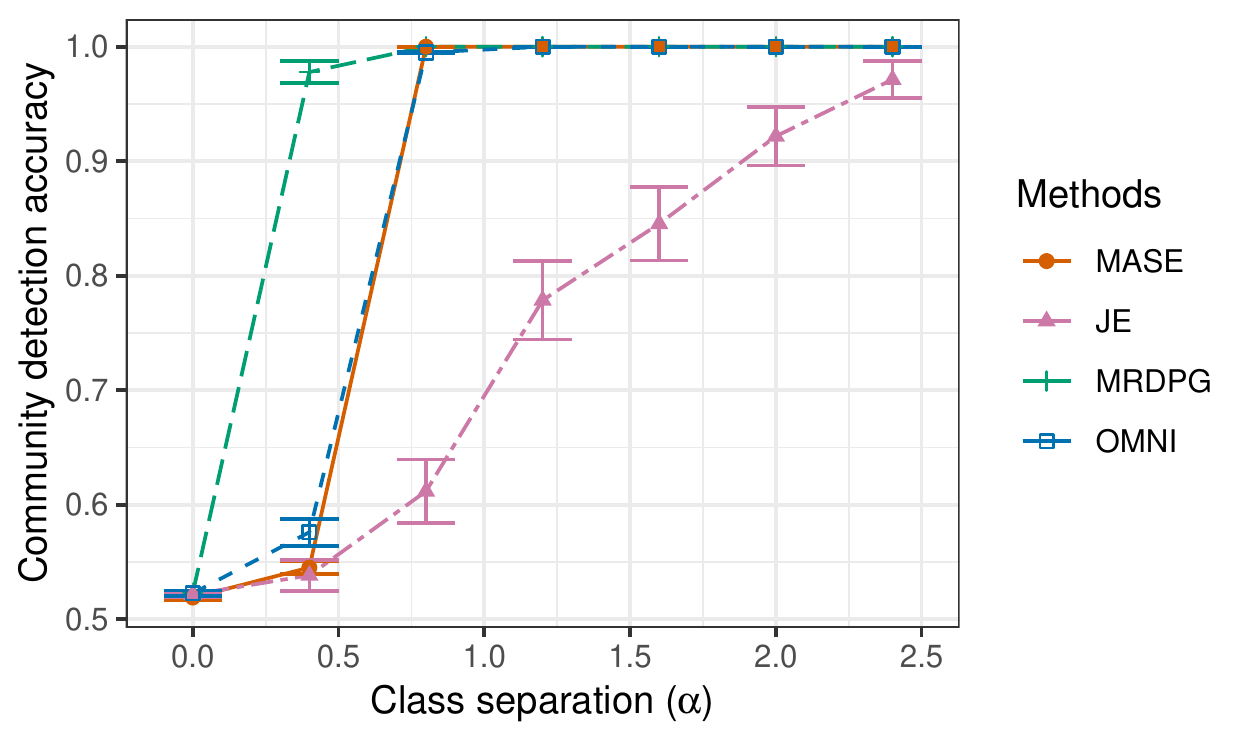}
	     \caption{Performance in community detection of a gaussian mixture clustering  applied to a common set of vertex latent positions estimated with different methods, for a sample of $m=40$ two-block multilayer SBM graphs, with $n=256$ vertices, and four classes of connectivity matrices. The class separation controls the magnitude of the smallest eigenvalue of the graphs, and as this increases, all methods show better accuracy. MRDPG, which is based on a non-convex optimization problem, performance the best for small $\alpha$, but as long as $\alpha$ is large enough, MASE and OMNI also show a great performance.}
	     \label{fig:sim-community}
	 \end{figure}
	 
	 \subsection{Graph hypothesis testing \label{sec:sim-testing}}

Our goal is to test the hypothesis that for a given pair of random adjacency matrices $\bA^{(1)}$ and $\bA^{(2)}$, the underlying probability matrices $\bP^{(1)}$ and $\bP^{(2)}$ are the same. Using the COSIE model, for any pair of matrices $\bP^{(1)}$ and $\bP^{(2)},$ there exists an embedding dimension $d$ and a matrix with orthonormal columns $\bV\in\real^{n\times d}$ such that $\bP^{(i)}=\bV \bR^{(i)} \bV^\top $. Therefore, our framework reduces the problem to testing the hypothesis $H_0: \bR^{(1)} = \bR^{(2)}$. We evaluate the performance of our method and compare it with the omnibus embedding (OMNI) method of \cite{Levin2017}, which is one of the few principled methods for this problem in latent space models.

	 We proceed by generating $\bA^{(k)} \sim \operatorname{Ber}(\bP^{(k)})$ with $\bP^{(k)} = \bZ^{(k)}\bB^{(k)}\bZ^{(k)^\top}$ a mixed membership SBM (MMSBM) with three communities \citep{Airoldi2007}, such that for each $i=1, \ldots, n$, the row that corresponds to vertex $i$ is distributed as $\bZ^{(k)}_{i\cdot} \sim \operatorname{Dir}(0.1, 3)$. The matrix $\bB^{(1)}\in\real^{3\times 3}$ of the first model is fixed as $\bB^{(1)} = 0.3 \bI + 0.1(\mathbf{1}\mathbf{1}^\top)$, where $\mathbf{1}$ is the  3-dimensional vector with all entries equal to one. We simulate two scenarios.
	 \begin{enumerate}
	     \item Same community assignments, but different connectivity matrices: Fix the block memberships in both graphs to be the same, $\bZ^{(1)} = \bZ^{(2)}$, and vary $\|\bB^{(1)}-\bB^{(2)}\|_F$ by increasing $\bB^{(2)}_{11}$ while keeping all the entries of $\bB^{(2)}$ equal to $\bB^{(1)}$.
	     \item Different community assignments for some of the vertices, same connectivity matrices: fix $\bB^{(2)}= \bB^{(1)}$, and sample the memberships of the first $t$ vertices independently in the two graphs, while keeping the other $n-t$ memberships equal: that is, $\{\bZ^{(1)}_{i\cdot}, \bZ^{(2)}_{j\cdot}: i \in[n], j\in[t]\}$ are i.i.d  Dir$((0.1, 3))$, and $\bZ^{(2)}_{k\cdot}=\bZ^{(1)}_{k\cdot}$ for $k>t$.
	 \end{enumerate}
The first scenario can be represented as a COSIE model with dimension $d=d_1=d_2=3$, where $d_i$ is the rank of the matrix $\bP^{(i)}$, but for the second one, to exactly represent these graphs in the COSIE model, we use $d_1=d_2=3$ and $d=5$. 
	 
	 To test the null hypothesis, we use the square Frobenius norm of the difference  between the score matrices $\|\widehat{\bR}^{(1)} - \widehat{\bR}^{(2)}\|_F^2$. A similar test statistic is used for OMNI by calculating the distance between the estimated latent positions $\|\widehat{\bX}^{(1,\text{OMNI})}-\widehat{\bX}^{(2,\text{OMNI})}\|_F^2$. For both MASE and OMNI, we estimate the exact distribution of the test statistic via Monte Carlo simulations using the correct expected value of each graph $\bP^{(1)}$ and $\bP^{(2)}$. For MASE, we also evaluate the performance of empirical p-values calculated using a parametric bootstrap approach, or the asymptotic null distribution of the score matrices, which are described below.
	 
	 \begin{enumerate}
	     \item[a)] A semiparametric bootstrap method introduced in \cite{Tang2014}  uses  estimates $\widetilde{\bP}^{(1)}$ and $\widetilde{\bP}^{(2)}$ (constructed by the ASE of $\bA^{(1)}$ and $\bA^{(2)}$) in place of the true matrices $\bP^{(1)}$ and $\bP^{(2)}$---the latter of which may not, of course, even been known in practice. A total of 1000 independent pairs of adjacency matrices $\widetilde{\bA}^{(1)}$ and $\widetilde{\bA}^{(2)}$ are generated by fixing the distribution of the two graphs to $\widetilde{\bP}^{(1)}$ for the first 500 pairs, and to $\widetilde{\bP}^{(2)}$ for the remaining pairs, and these pairs are used to approximate the empirical null distribution of the test statistic,  and calculate the p-value as the proportion of bootstrapped test statistic values that are smaller than the observed one.
	     
	     \item[b)]
	     When $\bA^{(1)}$ and $\bA^{(2)}$ are both sampled from $\bP^{(1)}$, Theorem~\ref{thm:COSIE_CLT} suggests that the distribution of the difference between the entries of the estimated score matrices  (after a proper orthogonal alignment) is approximated by a multivariate normal, with a mean proportional to the bias term $\bH^{(1)}$ and covariance depending on $\bSigma^{(1)}$.
	     Letting $\widehat{\bP}^{(1)}=\widehat{\bV}\widehat{\bR}^{(1)}\widehat{\bV}^\top$ be the estimate for $\bP^{(1)}$ obtained by MASE, an estimate $\widehat{\bSigma}^{(1)}$ of the asymptotic covariance in Theorem~\ref{thm:COSIE_CLT} can be derived from Equation~\eqref{eq:covarianceR} by plugin in $\widehat{\bV}$ and $\widehat{\bP}^{(1)}$. Neglecting the effect of the bias term $\bH^{(1)}$ (which vanishes as the number of graphs increases), we use a random variable $\by\sim \mathcal{N}(\mathbf{0}_r, {2}\widehat{\bSigma}^{(1)})$ to approximate the  null distribution of  $\operatorname{vec}(\bW(\widehat{\bR}^{(1)} - \widehat{\bR}^{(2)})\bW^\top)$ (with $\bW$ as defined in Theorem~\ref{thm:COSIE_CLT}). Therefore, the null distribution of the test statistic $\|\widehat{\bR}^{(1)} - \widehat{\bR}^{(2)}\|_F^2=\|\bW(\widehat{\bR}^{(1)} - \widehat{\bR}^{(2)})\bW^\top\|_F^2$ is  approximately distributed as  a generalized chi-square distribution (since it is a function of the squared entries of $\by$), or it can be estimated using Monte Carlo simulations of $\by$. The same process is repeated by using $\widehat\bP^{(2)}$, after which the empirical p-value can be estimated using a mixture of the empirical distributions obtained from $\widehat\bP^{(1)}$ and $\widehat\bP^{(2)}$.
	 \end{enumerate}

	 \begin{figure}
	     \centering
	     \includegraphics[width = \textwidth]{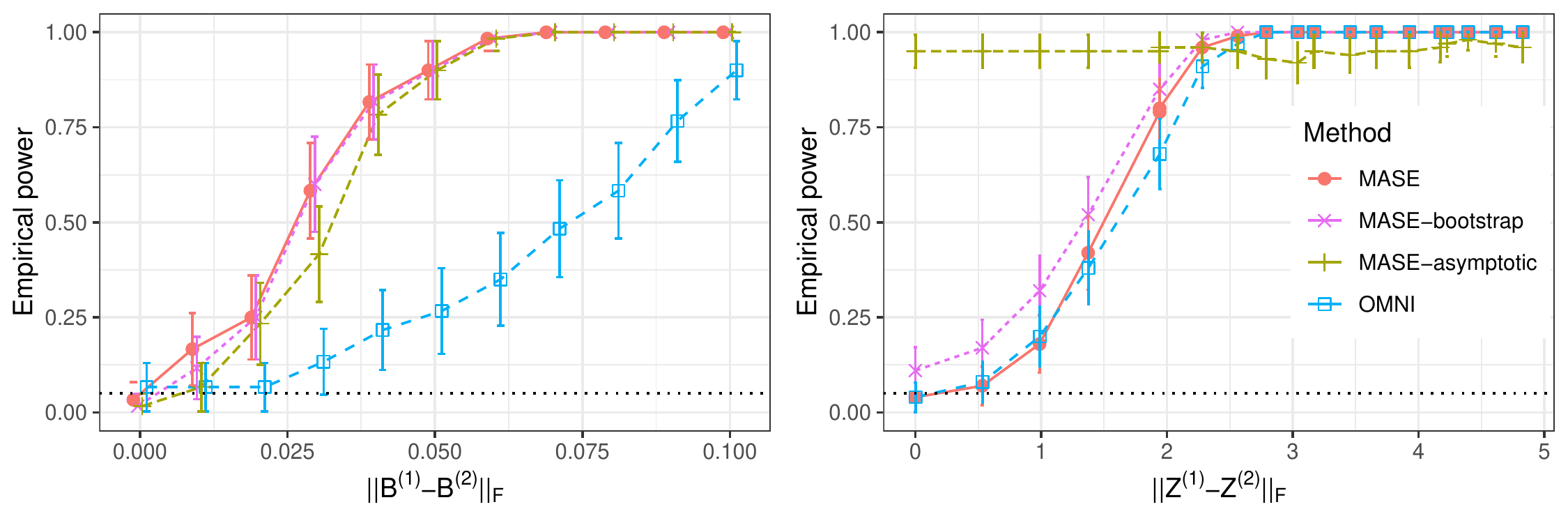}
	     \caption{Empirical power for rejecting the null hypothesis that two graphs have the same distribution as a function of the difference in their parameters at a $0.05$ significance level (black dotted line). Both graphs are sample from a mixed-membership SBM $\bA^{(i)}\sim\operatorname{MMSBM}(\bZ^{(i)}, \bB^{(i)})$. In the left panel, the community assignments $\bZ^{(1)}$ and $\bZ^{(2)}$ are the same, while the connection probability changes on the x-axis. In the right panel, $\bB^{(1)}=\bB^{(2)}$, while the community assignments of a few nodes change. Both methods improve their power as the difference between the graphs get larger, but MASE can effectively use the common subspace structure to outperform OMNI when the difference between the $\bB$ matrices is small. The p-values of MASE are estimated accurately usign a a parametric bootstrap, while the asymptotic distribution is only valid in the first scenario when the common invariant subspace is represented in both graphs.}
	     \label{fig:sim-testing}
	 \end{figure}

	Figure~\ref{fig:sim-testing} shows the result in terms of power for rejecting the null hypothesis that the two graphs are equal (the confidence level was chosen as 0.05). As the separation between null and alternative increases, the result from both methods approach full power when the p-value is calculated using the exact null distribution (MASE and OMNI). The first scenario shows an advantage for MASE, which exploits both the fact that the common subspace is well represented on each graph and that there are a smaller number of parameters to estimate than in the omnibus model. The second scenario is more challenging for MASE since the number of parameters is increased, but it still performs competitively. In both cases, the p-values obtained with the bootstrap method (MASE-bootstrap) perform similarly to the empirical power when using the true $\bP$ matrices.  
	
	The asymptotic distribution of the score matrices provides a computationally efficient way to estimate the p-values, but this approach relies on the validity of the assumptions of Theorem~\ref{thm:COSIE_CLT} and how small the magnitude of the bias term is, which depends on the estimation error of $\bV$. Having only two graphs to estimate the parameters of the model, there is no guarantee that  this bias term will be sufficiently small. Nevertheless, in the first scenario (Figure~\ref{fig:sim-testing}, left panel), the estimated power using the asymptotic distribution (MASE-asymptotic) provides an accurate result; here,  $\bV$ is well represented in both graphs, which helps in reducing its estimation error. In the second scenario (Figure~\ref{fig:sim-testing}, left panel), the dimension of $\bV$ $(d=5)$ is higher than the rank of the expectation of each graph ($d_1=d_2=3$), hence, the score matrices are not full rank, affecting the magnitude of the bias term   and  invalidating some conditions of Theorem~\ref{thm:COSIE_CLT}. As a result,  the estimated p-values with this method appear to be invalid and significantly inflated on this scenario.
	
	\section{Modeling the connectivity of brain networks\label{sec:data}}

	We  evaluate the ability of our method to characterize differences in brain connectivity between subjects using a set of graphs constructed from diffusion magnetic resonance imaging (dMRI). The data corresponds to the HNU1 study \citep{Zuo2014} which consists of dMRI records of 30 different healthy subjects that were scanned ten times each over a period of one month. Based on these recordings, we constructed 300 graphs (one per subject and scan) using the NeuroData's MRI to Graphs (NDMG) pipeline \citep{Kiar2018}. The vertices were registered to the CC200 atlas \citep{Craddock2012}, which identified 200 vertices in the brain. 
	
	We first apply our multiple adjacency spectral embedding method to the HNU1 data. To choose the dimension of the embedding for each individual graph, we use the automatic scree plot selection method of \cite{Zhu2006}, which chooses between 5 and 18 dimensions for each graph (see Figure~\ref{fig:hnu1-hist-d}). The joint model dimension for MASE is chosen based on the scree plot of the singular values of the matrix of concatenated adjacency spectral embeddings. We perform the method using both the scaled and unscaled versions of ASE. In both cases, we observe an elbow on the scree plot at $d=15$ (see Figure~\ref{fig:hnu1-screeplot}), and thus we selected this value for our model.
	
	\begin{figure}
	    \centering
	    \includegraphics[width=0.5\textwidth]{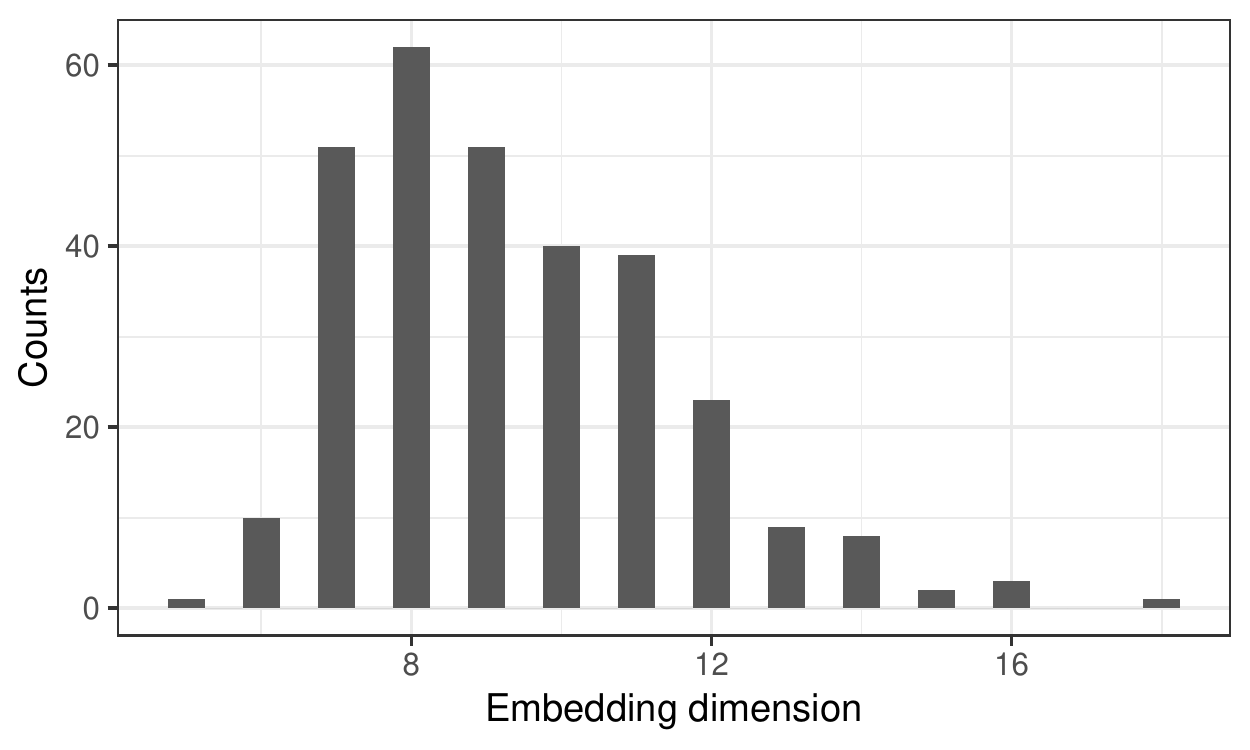}
	    \caption{Histogram of the embedding dimension selected by the method of \cite{Zhu2006} for each of the 300 graphs in the HNU1 data. The graphs are composed by 200 vertices, and the values of the embedding dimensions ranges between 5 and 18.}
	    \label{fig:hnu1-hist-d}
	\end{figure}
	
	\begin{figure}
	    \centering
	    \includegraphics[width=0.48\textwidth]{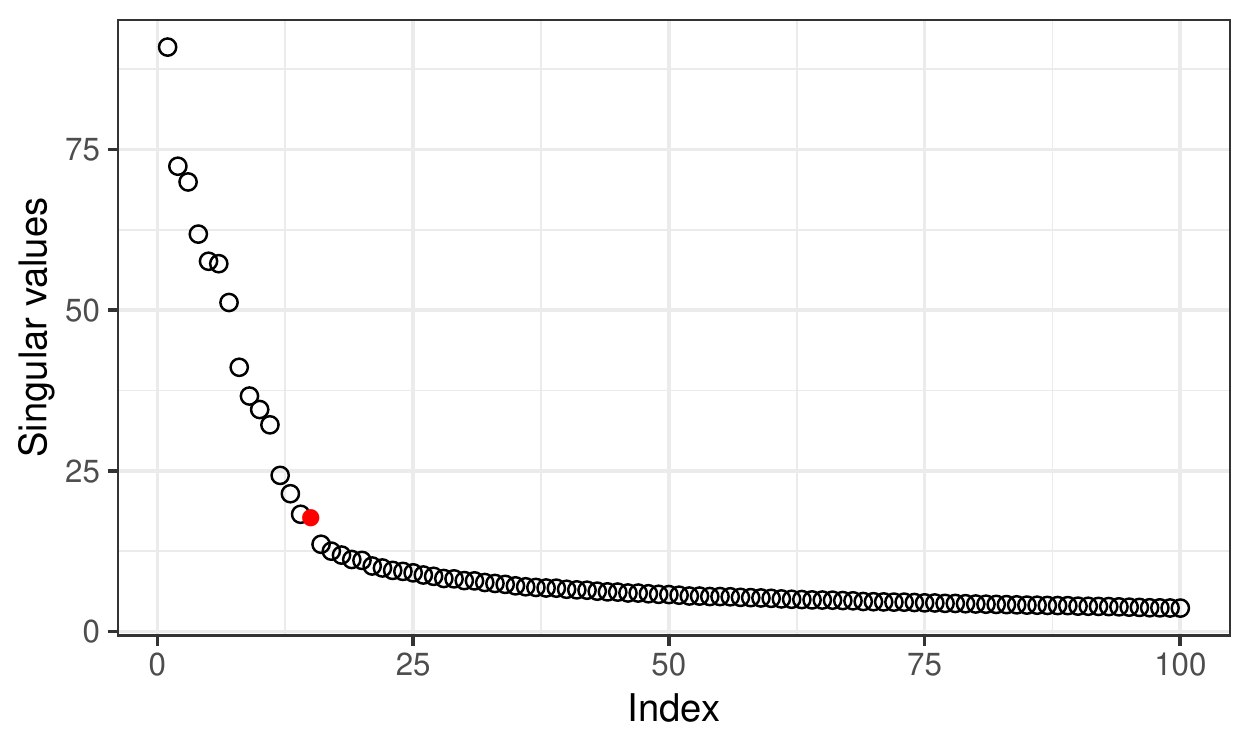}
	    \includegraphics[width=0.48\textwidth]{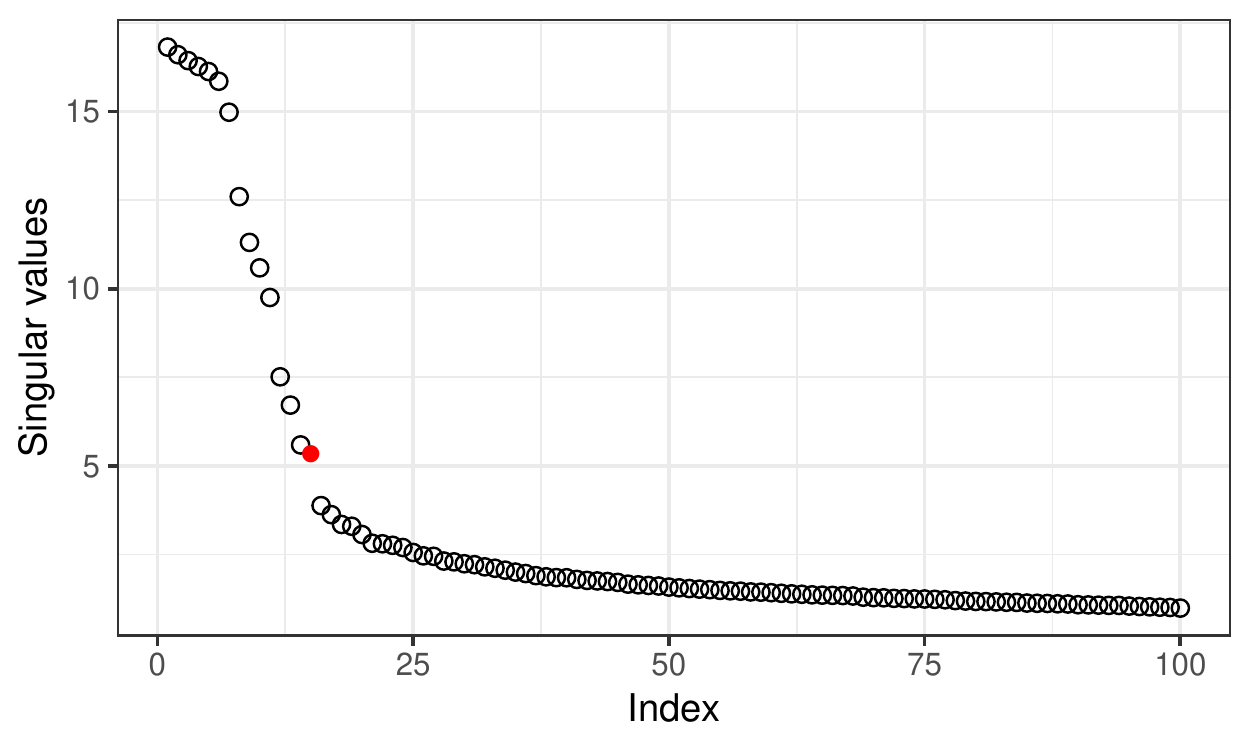}
	    \caption{Scree plot of the top singular values of the concatenated scaled (left) and unscaled (right) adjacency spectral embeddings of the HNU1 data graphs. The plots show the 100 largest  singular values ordered by magnitude. In both scaled and unscaled methods, we identified an elbow at $d=15$.}
	    \label{fig:hnu1-screeplot}
	\end{figure}
	
	When we apply the MASE method to the HNU1 data, we obtain a matrix of joint latent positions $\widehat{\bV}\in\mathbb{\bR}^{200\times 15}$ and a set of symmetric matrices $\{\widehat{\bR}^{(i)}\}_{i=1}^{300}$  of size $15\times 15$ for each individual graph, yielding 120 parameters to represent each graph. Figure~\ref{fig:hnu1-latentpos} shows a three-dimensional plot of the latent positions of the vertices for $\widehat{\bV}$. Most of the vertices of the CC200 atlas are labeled according to their spatial location as left or right hemisphere, and this structure is reflected in Figure~\ref{fig:hnu1-latentpos}---in fact, the plane $v_3=0$ appears to be a useful discriminant boundary between left and right hemisphere. This is additional fortifying evidence that the embedding obtained by MASE is anatomically meaningful. 
	
	\begin{figure}[tbh]
	    \centering
	    \includegraphics[width=0.6\textwidth]{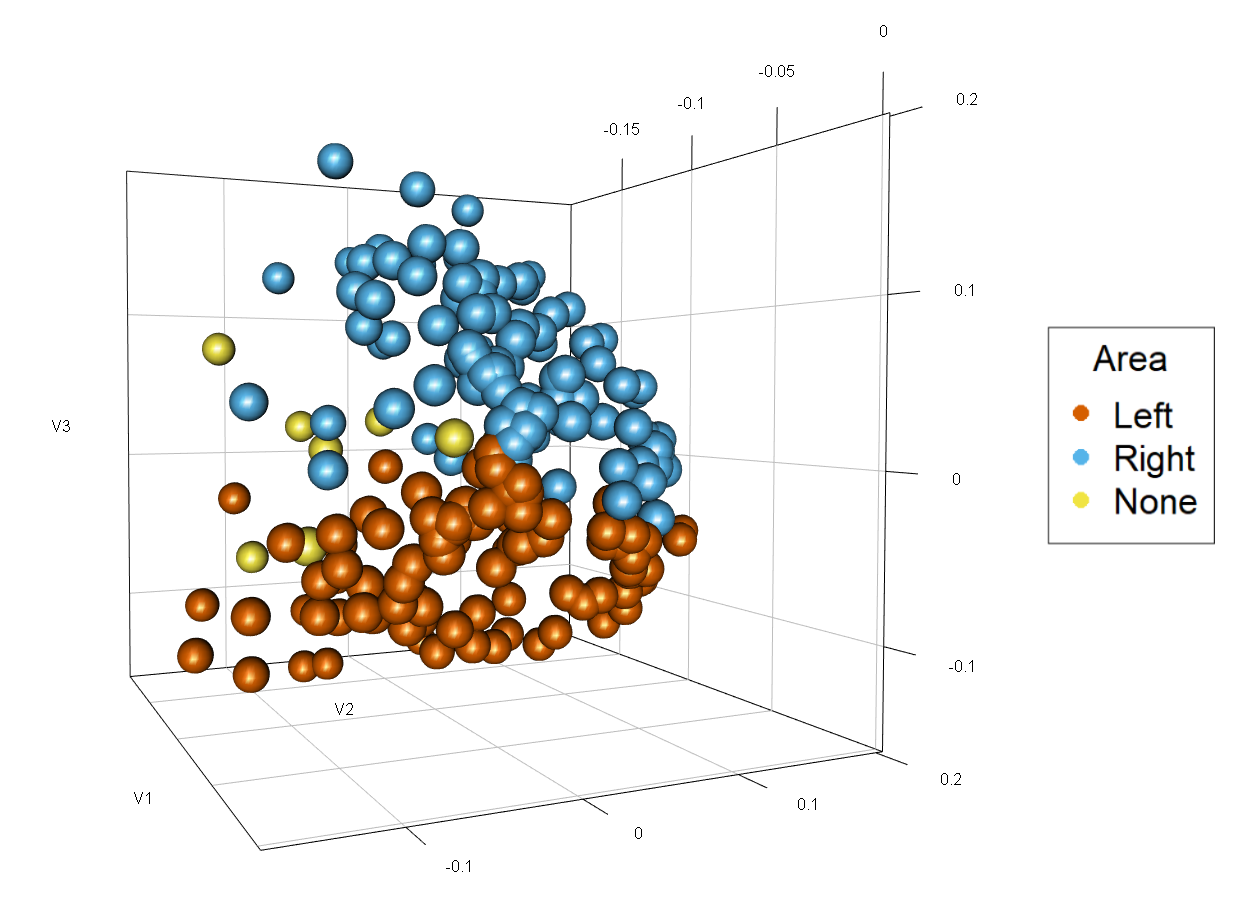}
	    \caption{Estimated latent positions obtained by MASE on the HNU1 data graphs with a 3-dimensional embedding.}
	    \label{fig:hnu1-latentpos}
	\end{figure}
	
	The individual graph parameters $\{\widehat{\bR}^{(i)}\}_{i=1}^{300}$  of dimension $15\times 15$ are difficult to interpret because their values are identifiable only with respect to $\widehat{\bV}$, but also because of the large dimensionality. Nevertheless, the matrix of Frobenius distances $\bD\in\mathbb{\bR}^{300\times 300}$ such that $\bD_{ij}=\|\widehat{\bR}^{(i)} - \widehat{\bR}^{(j)}\|_F$, is invariant to any rotations on $\widehat{\bV}$ according to Proposition~\ref{prop:identifiability}. This matrix is shown in Figure~\ref{fig:hnu1-D}, and reflects that graphs coming from the same individual tend to be more similar to each other than graphs of different individuals. We perform classical multidimensional scaling (CMDS) \citep{Borg2003} on the matrix $\bD$ with 5 dimensions for the scaling. Figure~\ref{fig:hnu1-mds} shows scatter plots of the positions discovered by MDS for a subset of the graphs corresponding to 5 subjects; these subjects were chosen because the distances between them represent a useful snapshot of the variability in the data. We observe that the points representing graphs of the same subject usually cluster together. The location of the points corresponding to subjects 12 and 14 suggests some similarity between their connectomes, while subject 13 shows a larger spread, and hence more variability in the brain network representations.

	\begin{figure}
	    \centering
	    \includegraphics[width=0.5\textwidth]{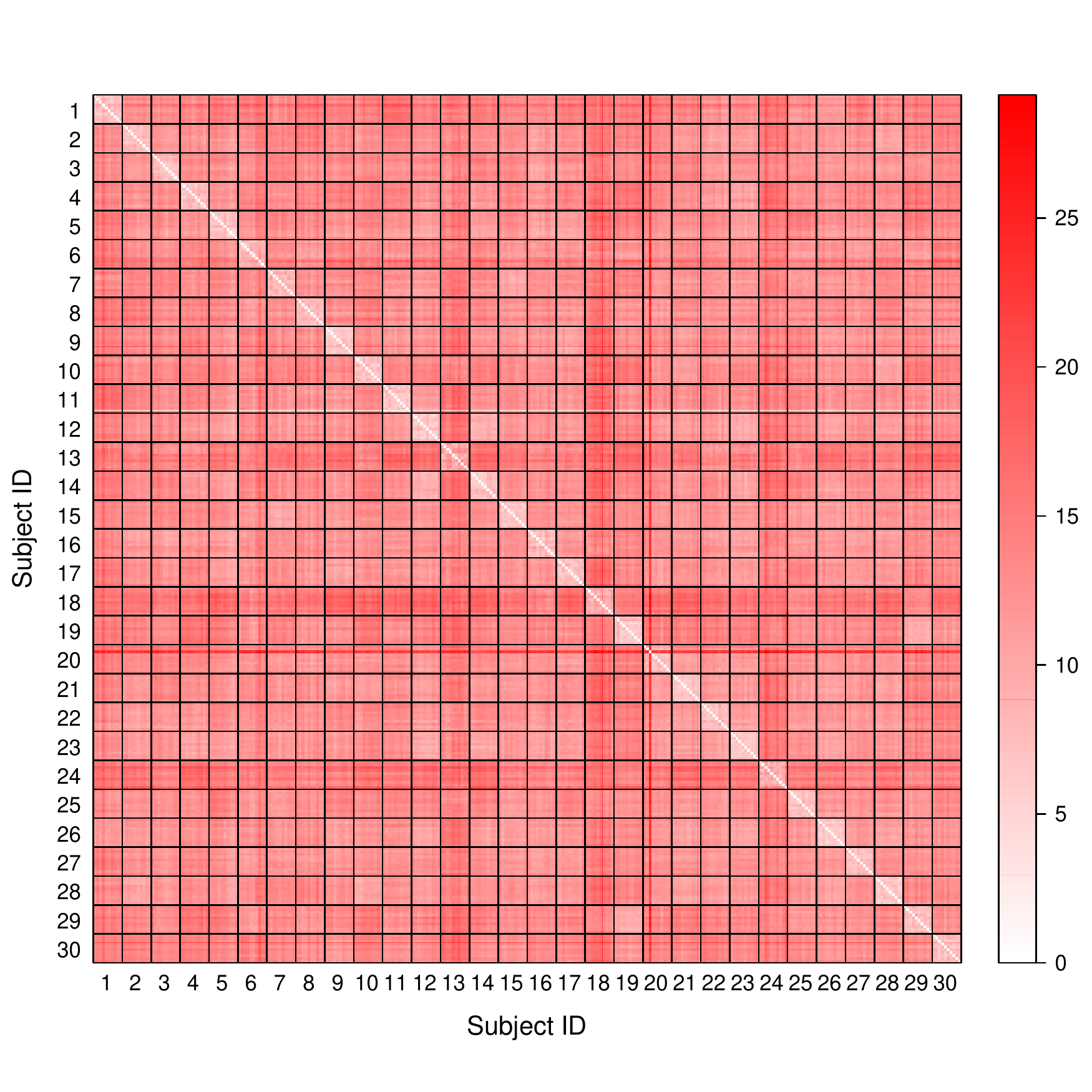}
	    \caption{Matrix of distances between the estimated score parameters.   Each entry of the matrix corresponds to the distance between the MASE estimated score matrices of a pair of graphs from the HNU1 data, composed by 30 subjects with 10 graphs each. Distances between graphs of the same subject are usually smaller than distances within subjects.}
	    \label{fig:hnu1-D}
	\end{figure}
	
	\begin{figure}[tbh]
	    \centering
	    \includegraphics[width=0.6\textwidth]{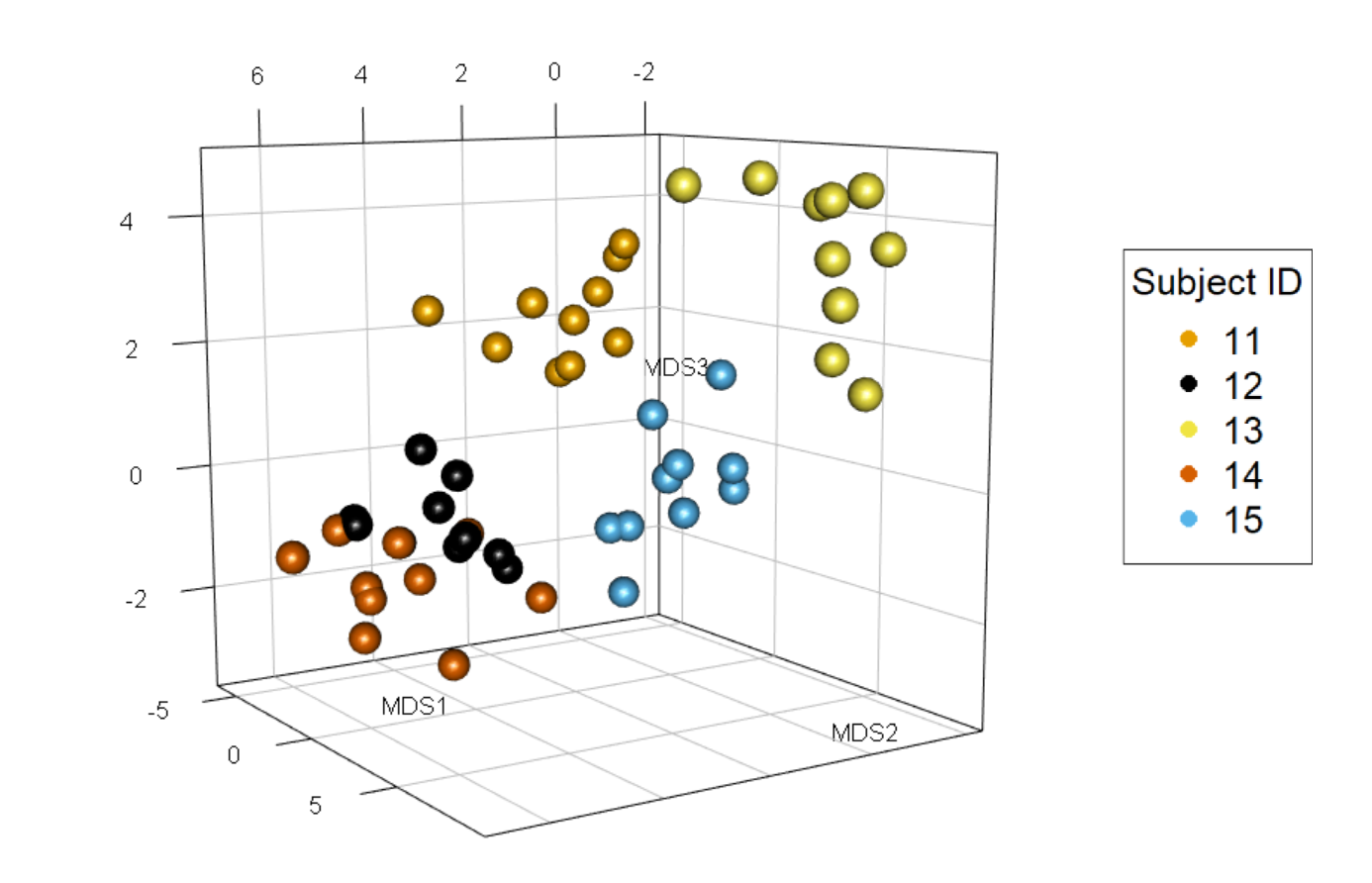}
	    \caption{Multidimensional scaling of the individual graph representations obtained by MASE for the HNU1 data. The plot shows the positions discovered by MDS for the graphs corresponding to 5 subjects of the data, showing that graphs corresponding to the same subject are more similar between each other.}
	    \label{fig:hnu1-mds}
	\end{figure}
	
	
	The ability of our method to distinguish between graphs of different subjects is further evaluated via classification and hypothesis testing.  In the classification analysis, we performed a comparison to other methods that are based on low-rank embeddings following a similar procedure to Section \ref{sec:sim-classif}, namely, the joint embedding of graphs of \cite{Wang2017} (JE), the multiple random dot product graphs of \cite{Nielsen2018} (MRDPG), and the omnibus embedding of \cite{Levin2017} (OMNI). Additionally, we compare with the common and individual structure explained (CISE) algorithm and its variant 2 (CISEv2), introduced in \cite{wang2019common}, which fit a graph embedding based on a logistic regression for the edges. 
	In a real data setting, low-rank models are only an approximation to the true generation mechanism, and a more parsimonious model that requires fewer parameters to approximate the data accurately is preferable. Hence, for model selection, we measure the accuracy as a function of the number of embedding dimensions, which controls the description length, defined as the total number of parameters used by the model. Figure~\ref{fig:hnu1-classification} shows the 10-fold cross-validated error of a 1-nearest neighbor classifier of the individual graph parameters obtained by the methods. Note that both MASE and OMNI are able to obtain an almost perfect classification, with only one graph being misclassified, in both cases corresponding to subject 20. This graph can be observed to be different to all the other graphs in Figure~\ref{fig:hnu1-D}. Although OMNI  can achieve good accuracy with only one embedding dimensions, the description length is much larger because OMNI requires $dn$ parameters for each graph, while MASE only uses $d\times d$ symmetric matrices. MASE achieves an optimal classification error with only $d=9$ dimensions; this classification analysis also offers a supervised way to choose the number of embedding dimensions for MASE suggesting that $9$ might be enough, but we keep the more conservative choice of $d=15$. Neither JE or MRDPG achieve an optimal classification error even with $d=20$ dimensions, and they are outperformed by MASE even when fixing the description length in all methods. CISE and CISEv2 are able to classify the graphs accurately, but require a larger number of parameters, and are computationally more demanding since are based on a non-convex optimization problem.
	This result suggests that MASE is not only accurate, but more parsimonious than other similar low-rank embeddings. 

	\begin{figure}
	    \centering
	    \includegraphics[width=\textwidth]{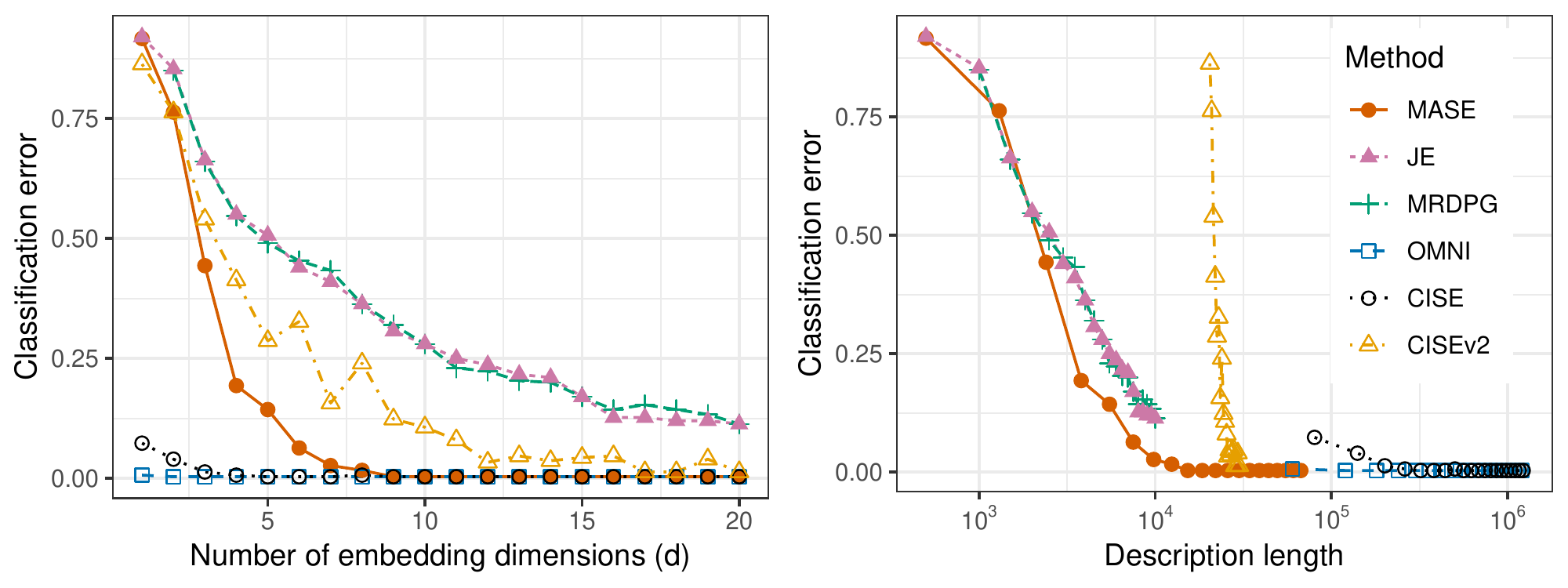}
	    \caption{Cross-validation error in classifying the subject labels of the HNU1 data. For each embedding estimated with a different method, a 1-NN classifier was trained using the distance between the individual graph embeddings, varying the number of embedding dimensions (left plot) which control the description length of the different embeddings (right plot). MASE, OMNI and CISE are the only methods that achieve perfect accuracy, but MASE requires much fewer parameters.}
	    \label{fig:hnu1-classification}
	\end{figure}

    The matrix of distance between subject parameters obtained by MASE can be used in combination with other distance-based methods to find differences and similarities between subjects, as the previous analyses show. However, a more principled approach to the identification of graph similarity is to measure, for instance, the extent to which differences between graphs can be attributed to random noise. In this spirit, we use the COSIE model to test the hypothesis that the expected edge connectivity of each pair of adjacency matrices is the same. That is, for each pair of graphs $\bA^{(i)}$ and $\bA^{(j)}$ we test the hypothesis $H_0: \bR^{(i)} = \bR^{(j)}$. We follow the same procedure described in Section~\ref{sec:sim-testing} 
    and used the parametric bootstrap method of \cite{Tang2014} and the asymptotic distribution of the score matrices to estimate the null distribution of the test statistic. Figure~\ref{fig:hnu1-pvalues} shows the $p$-values of the test performed for every pair of graphs. The diagonal blocks of the matrix correspond to graphs representing the same subject, and in most cases these p-values are large, especially when using the parametric bootstrap to estimate the null distribution. This suggests that in most cases we cannot reject the hypothesis that these graphs have the same distribution, which is reasonable given that these paired graphs represent brain scans of the same individual.
    Off-diagonal elements of the matrices in Figure~\ref{fig:hnu1-pvalues} represent the result of the equality test for a pair of graphs corresponding to different subjects, and exhibit small $p$-values in general. For some pairs of subjects, such as 12 and 14, the $p$-values are also high, suggesting some possible similarities in the brain connectivity of these two subjects, which is also observed on Figure~\ref{fig:hnu1-mds}. 
    The left panel in Figure~\ref{fig:hnu1-power} shows the percentage of rejections of the test for the $p$-values in the diagonal of the matrix, and the black dotted line represents the expected number of rejections under the null hypothesis. Note that both OMNI and MASE have slightly more rejections than what would be expected by chance under their corresponding models, especially for the p-values calculated using the asymptotic distribution (MASE-A). The right panel in Figure~\ref{fig:hnu1-power} shows the percentage of rejected tests for pairs of graphs corresponding to different subjects. Both MASE and OMNI reject a large portion of those tests.

	\begin{figure}
	    \centering
	    \begin{subfigure}{.48\textwidth}
	        \includegraphics[width=\textwidth]{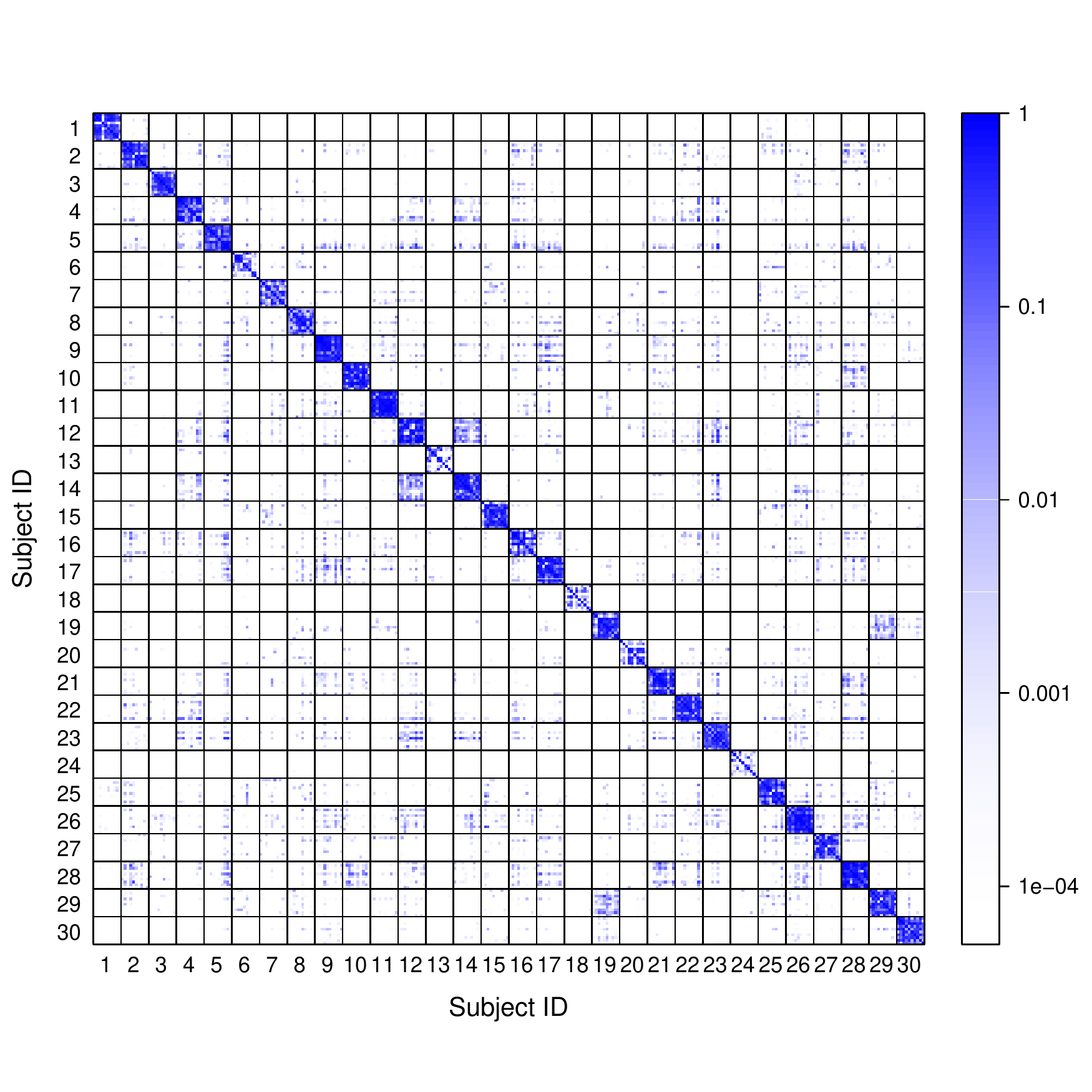}
	        \caption{Parametric bootstrap}
	    \end{subfigure}
	    \begin{subfigure}{.48\textwidth}
	        \includegraphics[width=\textwidth]{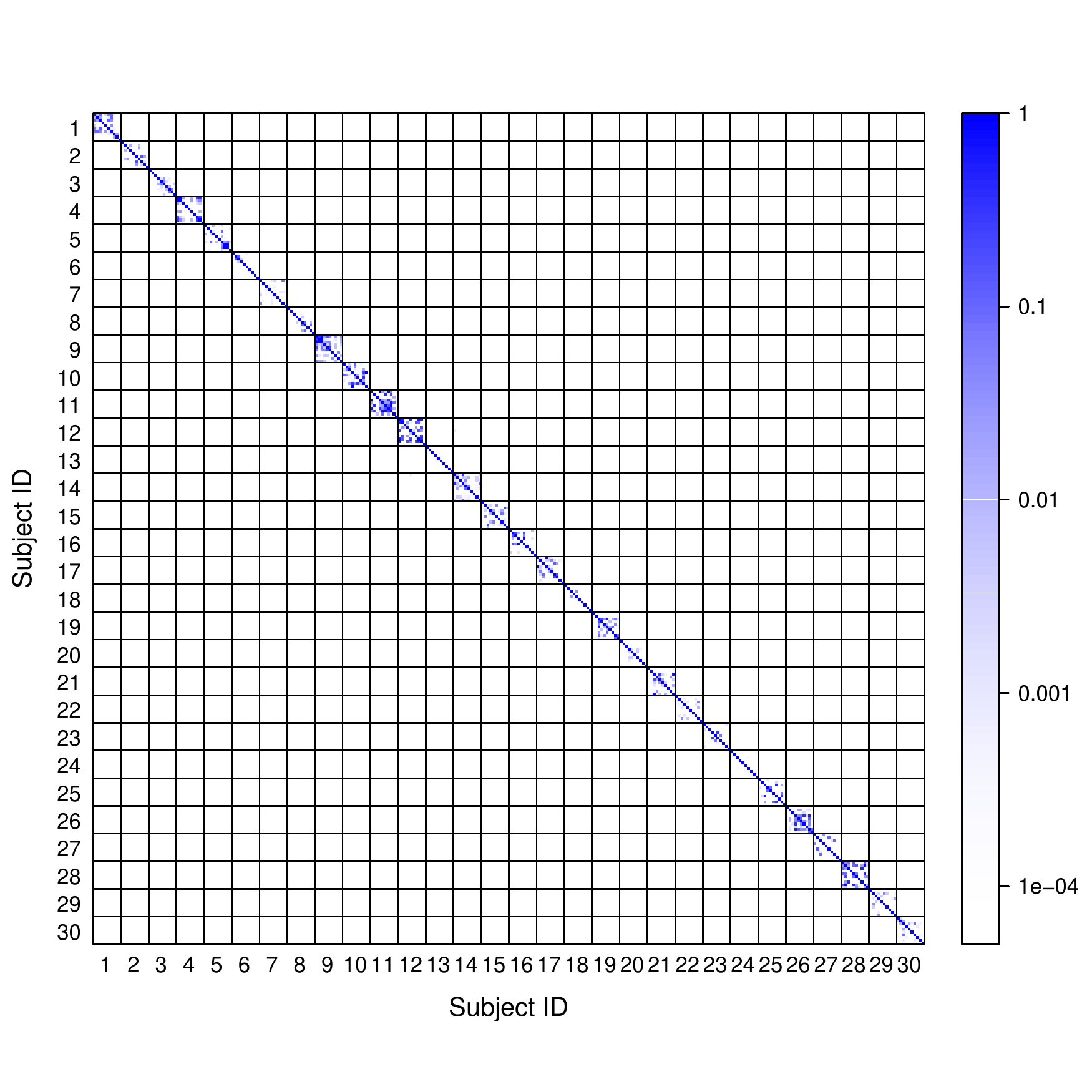}
	        \caption{Asymptotic distribution}
	    \end{subfigure}
	    \caption{Matrices of p-values for the hypothesis test that a pair of graphs has the same score matrix in the COSIE model. Each entry corresponds to the aforementioned test for a pair of the graphs in the HNU1 data. The test generally assigns small $p$-values to pairs of graphs corresponding to different subjects, and large $p$-values to same-subject pairs. }
	    \label{fig:hnu1-pvalues}
	\end{figure}

	\begin{figure}
	    \centering
	    \begin{subfigure}{0.48\textwidth}
            \includegraphics[width=\textwidth]{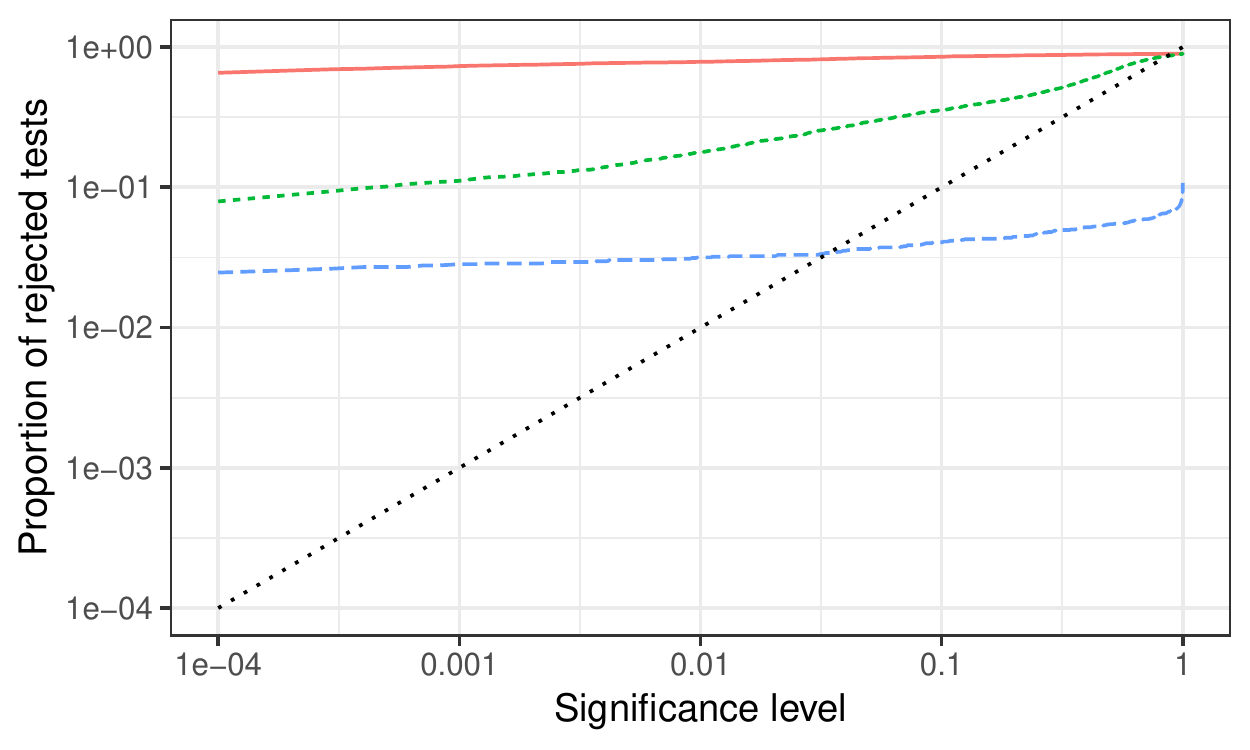}
            \caption{Same subject graph pairs}
	    \end{subfigure}
	    \begin{subfigure}{0.48\textwidth}
            \includegraphics[width=\textwidth]{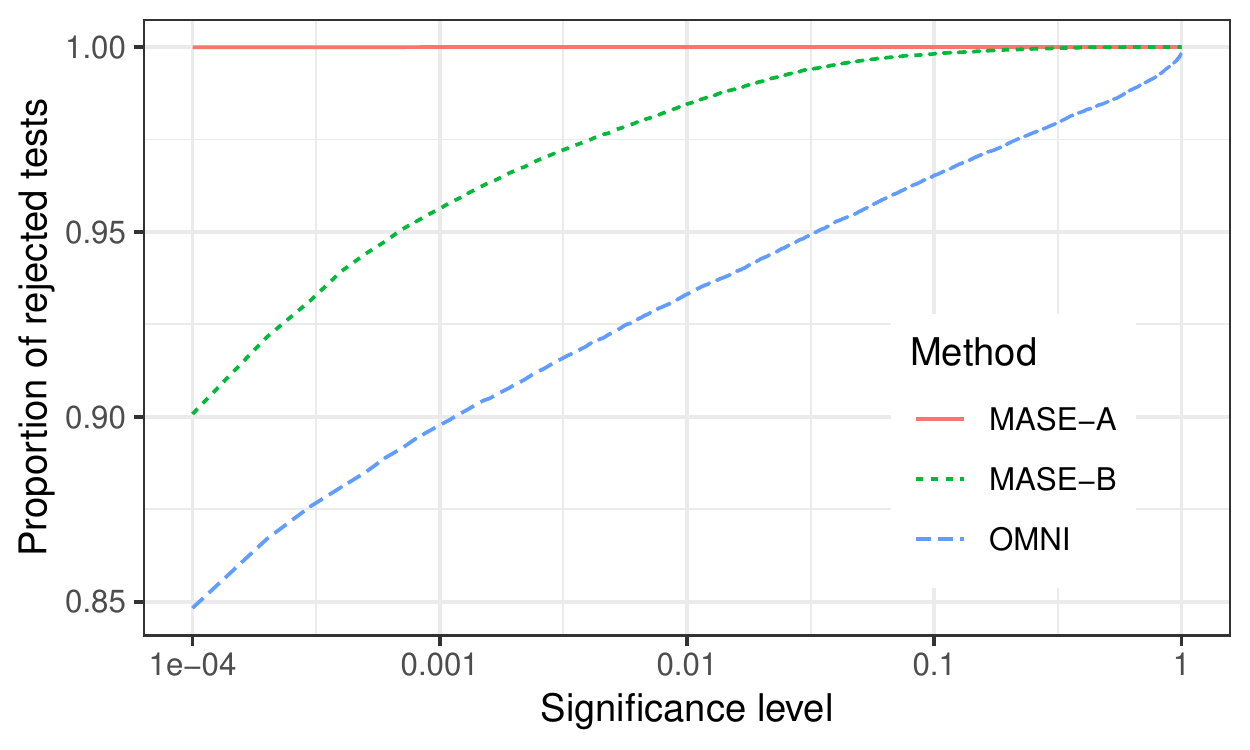}
            \caption{Different subject graph pairs} 
	    \end{subfigure}
	    \caption{Proportion of rejected test as a function of the significance level for pairwise equal distribution tests between pairs of graphs representing scans of the same subject (left) and different subjects (right). The black dotted line on the left figure corresponds to the identity, indicating the expected proportion of rejections assuming that the distribution of same-subject graphs is the same. The test constructed with MASE estimates either using the asymptotic distribution (MASE-A) or parametric bootstrap (MASE-B) to approximate the null is rejected more often than the OMNI test.}
	    \label{fig:hnu1-power}
	\end{figure}

	\section{Discussion\label{sec:concl}}	
The COSIE model presents a flexible, adaptable model for multiple graphs, one that encompasses a rich collection of independent-edge random graphs and yet retains identifiability and model parsimony. 
As described in the text, several single-graph models extend to the multiple-graph setting \citep{Airoldi2007,Zhang2014,Lyzinski2017}
within the COSIE framework. The presence of a common invariant subspace and the allowance for different score matrices guarantee that COSIE can be used to approximate real-world graph heterogeneity. The MASE procedure builds on the long history of spectral graph inference; it is intuitive, scalable, and consistent for the recovery of the common subspace in COSIE. The accurate inference of the common subspace, in turn, enables  to further determine the score parameters, which can then be deployed to discern distinctions between graphs and even subgraphs. Moreover, the MASE procedure enables graph eigenvalue estimation, community detection, and testing whether graphs arise from a common distribution.

Much work remains and includes the following: allowing the score matrices to grow in rank or capture important population-level properties such as variation in time or multilayer structure; to consider estimation and limit results with more relaxed sparsity and delocalization assumptions; and to investigate other spectral embeddings, such as the normalized or regularized Laplacian, which has been to shown to uncover different network structure \citep{Priebe2018} and provide consistent estimators in sparse networks \citep{Le2017}. Indeed, a rich array of open problems can begin to be addressed within the scope of the COSIE model.
Open \textsf{R} source code for MASE is available at \url{https://github.com/jesusdaniel/mase}, and in the Python \texttt{GraSPy} package at \url{https://neurodata.io/graspy}~\citep{Chung2019-tn}.

\section*{Acknowledgements}	
This research has been supported by the Lifelong Learning Machines (L2M) program of the Defence Advanced Research Projects Agency (DARPA) via contract number HR0011-18-2-0025. This work is also supported in part by the D3M program of DARPA,  NSF DMS award number 1902755 and funding from Microsoft Research. We would like to  thank Keith Levin and Elizaveta Levina for helpful discussions.

	\bibliographystyle{apalike}
	\bibliography{biblio2}

\begin{thebibliography}{}

\bibitem[Abbe, 2017]{Abbe2017}
Abbe, E. (2017).
\newblock {Community detection and stochastic block models: recent
  developments}.
\newblock {\em Journal of Machine Learning Research}, 18:1--86.

\bibitem[Airoldi et~al., 2007]{Airoldi2007}
Airoldi, E.~M., Blei, D.~M., Fienberg, S.~E., and Xing, E.~P. (2007).
\newblock {Mixed membership stochastic blockmodels}.
\newblock {\em Journal of Machine Learning Research}, 9:1981--2014.

\bibitem[Arroyo and Hou, 2016]{Arroyo2016}
Arroyo, J. and Hou, E. (2016).
\newblock {Efficient distributed estimation of inverse covariance matrices}.
\newblock In {\em 2016 IEEE Statistical Signal Processing Workshop (SSP)},
  pages 1--5. IEEE.

\bibitem[Arroyo~Reli{\'o}n et~al., 2019]{Arroyo-Relion2017}
Arroyo~Reli{\'o}n, J.~D., Kessler, D., Levina, E., Taylor, S.~F., et~al.
  (2019).
\newblock Network classification with applications to brain connectomics.
\newblock {\em The Annals of Applied Statistics}, 13(3):1648--1677.

\bibitem[Athreya et~al., 2017]{Athreya2017}
Athreya, A., Fishkind, D.~E., Levin, K., Lyzinski, V., Park, Y., Qin, Y.,
  Sussman, D.~L., Tang, M., Vogelstein, J.~T., and Priebe, C.~E. (2017).
\newblock {Statistical inference on random dot product graphs: a survey}.
\newblock {\em Journal of Machine Learning Research}, 18:1--92.

\bibitem[Athreya et~al., 2016]{Athreya2016}
Athreya, A., Lyzinski, V., Marchette, D.~J., Priebe, C.~E., Sussman, D.~L., and
  Tang, M. (2016).
\newblock {A limit theorem for scaled eigenvectors of random dot product
  graphs}.
\newblock {\em Sankhya A}, 78:1--18.

\bibitem[Bhatia, 1997]{Bhatia1997}
Bhatia, R. (1997).
\newblock {\em {Matrix Analysis}}, volume 169 of {\em Graduate Texts in
  Mathematics}.
\newblock Springer New York, New York, NY.

\bibitem[Bhattacharyya and Chatterjee, 2017]{Bhattacharyya2017}
Bhattacharyya, S. and Chatterjee, S. (2017).
\newblock {Spectral clustering for multiple disassociative sparse networks}.
\newblock {\em Working Paper}.

\bibitem[Bhattacharyya and Chatterjee, 2018]{Bhattacharyya2018}
Bhattacharyya, S. and Chatterjee, S. (2018).
\newblock {Spectral Clustering for Multiple Sparse Networks: I}.
\newblock {\em arXiv preprint arXiv:1805.10594}.

\bibitem[Bickel and Chen, 2009]{Bickel2009}
Bickel, P.~J. and Chen, A. (2009).
\newblock {A nonparametric view of network models and Newman–Girvan and other
  modularities}.
\newblock {\em Proceedings of the National Academy of Sciences},
  106(50):21068--21073.

\bibitem[Boccaletti et~al., 2014]{Boccaletti2014}
Boccaletti, S., Bianconi, G., Criado, R., del Genio, C.~I.,
  G{\'{o}}mez-Garde{\~{n}}es, J., Romance, M., Sendi{\~{n}}a-Nadal, I., Wang,
  Z., and Zanin, M. (2014).
\newblock {The structure and dynamics of multilayer networks}.
\newblock {\em Physics Reports}, 544(1):1--122.

\bibitem[Borg and Groenen, 2003]{Borg2003}
Borg, I. and Groenen, P. (2003).
\newblock {Modern Multidimensional Scaling: Theory and Applications}.
\newblock {\em Journal of Educational Measurement}, 40(3):277--280.

\bibitem[Bullmore and Sporns, 2009]{Bullmore2009}
Bullmore, E. and Sporns, O. (2009).
\newblock {Complex brain networks: graph theoretical analysis of structural and
  functional systems.}
\newblock {\em Nature reviews. Neuroscience}, 10(3):186--98.

\bibitem[Chung et~al., 2019]{Chung2019-tn}
Chung, J., Pedigo, B.~D., Bridgeford, E.~W., Varjavand, B.~K., and Vogelstein,
  J.~T. (2019).
\newblock {GraSPy: Graph Statistics in Python}.
\newblock {\em arXiv preprint arXiv:1904.05329}.

\bibitem[Craddock et~al., 2012]{Craddock2012}
Craddock, R.~C., James, G., Holtzheimer, P.~E., Hu, X.~P., and Mayberg, H.~S.
  (2012).
\newblock {A whole brain fMRI atlas generated via spatially constrained
  spectral clustering}.
\newblock {\em Human Brain Mapping}, 33(8):1914--1928.

\bibitem[Crainiceanu et~al., 2011]{Crainiceanu2011}
Crainiceanu, C.~M., Caffo, B.~S., Luo, S., Zipunnikov, V.~M., and Punjabi,
  N.~M. (2011).
\newblock {Population value decomposition, a framework for the analysis of
  image populations}.
\newblock {\em Journal of the American Statistical Association},
  106(495):775--790.

\bibitem[Durante and Dunson, 2018]{Durante2018}
Durante, D. and Dunson, D.~B. (2018).
\newblock {Bayesian inference and testing of group differences in brain
  networks}.
\newblock {\em Bayesian Analysis}, 13(1):29--58.

\bibitem[Durante et~al., 2017]{Durante2017}
Durante, D., Dunson, D.~B., and Vogelstein, J.~T. (2017).
\newblock {Nonparametric Bayes Modeling of Populations of Networks}.
\newblock {\em Journal of the American Statistical Association},
  112(520):1516--1530.

\bibitem[Erd\H{o}s and R{\'{e}}nyi, 1959]{Erdos1959}
Erd\H{o}s, P. and R{\'{e}}nyi, A. (1959).
\newblock {On random graphs}.
\newblock {\em Publ. Math. Debrecen}, 6:290--297.

\bibitem[Fan et~al., 2017]{Fan2017}
Fan, J., Wang, D., Wang, K., and Zhu, Z. (2017).
\newblock {Distributed Estimation of Principal Eigenspaces}.
\newblock {\em arXiv preprint arXiv:1702.06488}.

\bibitem[Gilbert, 1959]{Gilbert1959}
Gilbert, E.~N. (1959).
\newblock {Random graphs}.
\newblock {\em Ann. Math. Statist.}, 30(4):1141--1144.

\bibitem[Ginestet et~al., 2017]{Ginestet2017}
Ginestet, C.~E., Li, J., Balachandran, P., Rosenberg, S., and Kolaczyk, E.~D.
  (2017).
\newblock {Hypothesis testing for network data in functional neuroimaging}.
\newblock {\em Annals of Applied Statistics}, 11(2):725--750.

\bibitem[Girvan and Newman, 2002]{Girvan2002}
Girvan, M. and Newman, M. E.~J. (2002).
\newblock {Community structure in social and biological networks}.
\newblock {\em Proceedings of the National Academy of Sciences},
  99(12):7821--7826.

\bibitem[Goldenberg et~al., 2009]{Goldenberg2009}
Goldenberg, A., Zheng, A.~X., Fienberg, S.~E., and Airoldi, E.~M. (2009).
\newblock {A survey of statistical network models}.
\newblock {\em Foundations and Trends{\textregistered} in Machine Learning},
  2(2):129--233.

\bibitem[Han et~al., 2004]{Han2004}
Han, J. D.~J., Berlin, N., Hao, T., Goldberg, D.~S., Berriz, G.~F., Zhang,
  L.~V., Dupuy, D., Walhout, A. J.~M., Cusick, M.~E., Roth, F.~P., and Vidal,
  M. (2004).
\newblock {Evidence for dynamically organized modularity in the yeast
  protein-protein interaction network}.
\newblock {\em Nature}, 430(6995):88--93.

\bibitem[Han et~al., 2014]{Han}
Han, Q., Xu, K.~S., and Airoldi, E.~M. (2014).
\newblock {Consistent estimation of dynamic and multi-layer block models}.
\newblock In {\em Proceedings of the 32nd International Conference on Machine
  Learning - Volume 37}, pages 1511--1520. JMLR.org.

\bibitem[Hoff et~al., 2002]{Hoff2002}
Hoff, P.~D., Raftery, A.~E., and Handcock, M.~S. (2002).
\newblock {Latent space approaches to social network analysis}.
\newblock {\em Journal of the American Statistical Association},
  97(460):1090--1098.

\bibitem[Holland et~al., 1983]{Holland1983}
Holland, P.~W., Laskey, K.~B., and Leinhardt, S. (1983).
\newblock {Stochastic blockmodels: First steps}.
\newblock {\em Social Networks}, 5(2):109--137.

\bibitem[Holme and Saram{\"{a}}ki, 2012]{Holme2012}
Holme, P. and Saram{\"{a}}ki, J. (2012).
\newblock {Temporal networks}.
\newblock {\em Physics Reports}, 519(3):97--125.

\bibitem[Jaming and Matolcsi, 2015]{Jaming2015}
Jaming, P. and Matolcsi, M. (2015).
\newblock {On the existence of flat orthogonal matrices}.
\newblock In {\em Acta Mathematica Hungarica}, volume 147, pages 179--188.
  Springer Verlag.

\bibitem[Karrer and Newman, 2011]{Karrer2011a}
Karrer, B. and Newman, M.~E. (2011).
\newblock {Stochastic blockmodels and community structure in networks}.
\newblock {\em Physical Review E - Statistical, Nonlinear, and Soft Matter
  Physics}, 83(1):1--11.

\bibitem[Kiar et~al., 2018]{Kiar2018}
Kiar, G., Bridgeford, E., Roncal, W.~G., (CoRR), C. f.~R., Reproducibliity,
  Chandrashekhar, V., Mhembere, D., Ryman, S., Zuo, X.-N., Marguiles, D.~S.,
  Craddock, R.~C., Priebe, C.~E., Jung, R., Calhoun, V., Caffo, B., Burns, R.,
  Milham, M.~P., and Vogelstein, J. (2018).
\newblock {A high-throughput pipeline identifies robust connectomes but
  troublesome variability}.
\newblock {\em bioRxiv}.

\bibitem[Kim and Levina, 2019]{Kim2019}
Kim, Y. and Levina, E. (2019).
\newblock {Graph-aware linear mixed effects models for brain connectivity
  networks}.
\newblock {\em arXiv preprint arXiv:1903.02129}.

\bibitem[Kivel{\"{a}} et~al., 2014]{Kivela2014}
Kivel{\"{a}}, M., Arenas, A., Barthelemy, M., Gleeson, J.~P., Moreno, Y.,
  Porter, M.~A., and Estrada, E. (2014).
\newblock {Multilayer networks}.
\newblock {\em Journal of Complex Networks}, 2:203--271.

\bibitem[Kolaczyk, 2017]{Kolaczyk2017}
Kolaczyk, E.~D. (2017).
\newblock {\em {Topics at the Frontier of Statistics and Network Analysis}}.
\newblock Cambridge University Press, Cambridge.

\bibitem[Latouche et~al., 2011]{Latouche2011}
Latouche, P., Birmel{\'{e}}, E., and Ambroise, C. (2011).
\newblock {Overlapping stochastic block models with application to the French
  political blogosphere}.
\newblock {\em Annals of Applied Statistics}, 5(1):309--336.

\bibitem[Le et~al., 2017]{Le2017}
Le, C.~M., Levina, E., and Vershynin, R. (2017).
\newblock {Concentration and regularization of random graphs}.
\newblock {\em Random Structures and Algorithms}, 51(3):538--561.

\bibitem[Lee et~al., 2017]{Lee2017}
Lee, J.~D., Liu, Q., Sun, Y., and Taylor, J.~E. (2017).
\newblock {Communication-efficient sparse regression}.
\newblock {\em Journal of Machine Learning Research}, 18(1):115----144.

\bibitem[Levin et~al., 2017]{Levin2017}
Levin, K., Athreya, A., Tang, M., Lyzinski, V., and Priebe, C.~E. (2017).
\newblock {A central limit theorem for an omnibus embedding of multiple random
  dot product graphs}.
\newblock In {\em arXiv preprint arXiv:1705.09355}.

\bibitem[Li et~al., 2011]{Li2011}
Li, W., Liu, C.~C., Zhang, T., Li, H., Waterman, M.~S., and Zhou, X.~J. (2011).
\newblock {Integrative analysis of many weighted co-expression networks using
  tensor computation}.
\newblock {\em PLoS Computational Biology}, 7(6):e1001106.

\bibitem[Lyzinski et~al., 2014]{Lyzinski2014}
Lyzinski, V., Sussman, D.~L., Tang, M., Athreya, A., and Priebe, C.~E. (2014).
\newblock {Perfect clustering for stochastic blockmodel graphs via adjacency
  spectral embedding}.
\newblock {\em Electronic Journal of Statistics}, 8:2905--2922.

\bibitem[Lyzinski et~al., 2017]{Lyzinski2017}
Lyzinski, V., Tang, M., Athreya, A., Park, Y., and Priebe, C.~E. (2017).
\newblock {Community detection and classification in hierarchical stochastic
  blockmodels}.

\bibitem[Matias and Miele, 2017]{Matias2016}
Matias, C. and Miele, V. (2017).
\newblock {Statistical clustering of temporal networks through a dynamic
  stochastic block model}.
\newblock {\em Journal of the Royal Statistical Society. Series B: Statistical
  Methodology}, 79(4):1119--1141.

\bibitem[Mucha et~al., 2010]{Mucha2010}
Mucha, P.~J., Richardson, T., Macon, K., Porter, M.~A., and Onnela, J.~P.
  (2010).
\newblock {Community structure in time-dependent, multiscale, and multiplex
  networks}.
\newblock {\em Science}, 328(5980):876--878.

\bibitem[Nielsen and Witten, 2018]{Nielsen2018}
Nielsen, A.~M. and Witten, D. (2018).
\newblock {The Multiple Random Dot Product Graph Model}.
\newblock {\em arXiv preprint arXiv:1811.12172}.

\bibitem[Paul and Chen, 2020]{paul2020spectral}
Paul, S. and Chen, Y. (2020).
\newblock Spectral and matrix factorization methods for consistent community
  detection in multi-layer networks.
\newblock {\em The Annals of Statistics}, 48(1):230--250.

\bibitem[Pavlovi{\'{c}} et~al., 2019]{Pavlovic2019}
Pavlovi{\'{c}}, D.~M., Guillaume, B. R.~L., Towlson, E.~K., Kuek, N. M.~Y.,
  Afyouni, S., V{\'{e}}rtes, P.~E., Yeo, T. B.~T., Bullmore, E.~T., and
  Nichols, T.~E. (2019).
\newblock {Multi-Subject Stochastic Blockmodels for Adaptive Analysis of
  Individual Differences in Human Brain Network Cluster Structure}.
\newblock {\em bioRxiv}.

\bibitem[Peixoto, 2015]{Peixoto}
Peixoto, T.~P. (2015).
\newblock {Inferring the mesoscale structure of layered, edge-valued, and
  time-varying networks}.
\newblock {\em Physical Review E - Statistical, Nonlinear, and Soft Matter
  Physics}, 92(4).

\bibitem[Power et~al., 2011]{power2011functional}
Power, J.~D., Cohen, A.~L., Nelson, S.~M., Wig, G.~S., Barnes, K.~A., Church,
  J.~A., Vogel, A.~C., Laumann, T.~O., Miezin, F.~M., and Schlaggar, B.~L.
  (2011).
\newblock Functional network organization of the human brain.
\newblock {\em Neuron}, 72(4):665--678.

\bibitem[Priebe et~al., 2019]{Priebe2018}
Priebe, C.~E., Park, Y., Vogelstein, J.~T., Conroy, J.~M., Lyzinski, V., Tang,
  M., Athreya, A., Cape, J., and Bridgeford, E. (2019).
\newblock {On a two-truths phenomenon in spectral graph clustering}.
\newblock {\em Proceedings of the National Academy of Sciences},
  116(13):5995--6000.

\bibitem[Rohe et~al., 2011]{Rohe2011}
Rohe, K., Chatterjee, S., and Yu, B. (2011).
\newblock {Spectral clustering and the high-dimensional stochastic blockmodel}.
\newblock {\em Annals of Statistics}, 39(4):1878--1915.

\bibitem[Rubin-Delanchy et~al., 2017]{Rubin-Delanchy2017}
Rubin-Delanchy, P., Priebe, C.~E., Tang, M., and Cape, J. (2017).
\newblock {A statistical interpretation of spectral embedding: the generalised
  random dot product graph}.
\newblock {\em arXiv preprint arXiv:1709.05506}.

\bibitem[Sussman et~al., 2014]{Sussman2014}
Sussman, D.~L., {Minh Tang}, Priebe, C.~E., Tang, M., and Priebe, C.~E. (2014).
\newblock {Consistent latent position estimation and vertex classification for
  random dot product graphs}.
\newblock {\em IEEE Transactions on Pattern Analysis and Machine Intelligence},
  36(1):48--57.

\bibitem[Sussman et~al., 2012]{Sussman2012}
Sussman, D.~L., Tang, M., Fishkind, D.~E., and Priebe, C.~E. (2012).
\newblock {A consistent adjacency spectral embedding for stochastic blockmodel
  graphs}.
\newblock {\em Journal of the American Statistical Association},
  107(499):1119--1128.

\bibitem[Szell et~al., 2010]{Szell2010}
Szell, M., Lambiotte, R., and Thurner, S. (2010).
\newblock {Multirelational organization of large-scale social networks in an
  online world}.
\newblock {\em Proceedings of the National Academy of Sciences},
  107(31):13636----13641.

\bibitem[Tang, 2018]{Tang2018b}
Tang, M. (2018).
\newblock {The eigenvalues of stochastic blockmodel graphs}.
\newblock {\em arXiv preprint arXiv:1803.11551}.

\bibitem[Tang et~al., 2017a]{Tang2017}
Tang, M., Athreya, A., Sussman, D.~L., Lyzinski, V., and Priebe, C.~E. (2017a).
\newblock {A nonparametric two-sample hypothesis testing problem for random
  graphs}.
\newblock {\em Bernoulli}, 23(3):1599--1630.

\bibitem[Tang et~al., 2017b]{Tang2014}
Tang, M., Athreya, A., Sussman, D.~L., Lyzinski, V., and Priebe, C.~E. (2017b).
\newblock {A semiparametric two-sample hypothesis testing problem for random
  graphs}.
\newblock {\em Journal of Computational and Graphical Statistics},
  26(2):344----354.

\bibitem[Tang et~al., 2018]{Tang2018}
Tang, R., Ketcha, M., Badea, A., Calabrese, E.~D., Margulies, D.~S.,
  Vogelstein, J.~T., Priebe, C.~E., and Sussman, D.~L. (2018).
\newblock {Connectome Smoothing via Low-rank Approximations}.
\newblock {\em IEEE Transactions on Medical Imaging (accepted for
  publication)}.

\bibitem[Vershynin, 2010]{Vershynin2010}
Vershynin, R. (2010).
\newblock {Introduction to the non-asymptotic analysis of random matrices}.
\newblock {\em arXiv preprint arXiv:1011.3027}.

\bibitem[Wang et~al., 2019a]{wang2019common}
Wang, L., Zhang, Z., and Dunson, D. (2019a).
\newblock Common and individual structure of brain networks.
\newblock {\em The Annals of Applied Statistics}, 13(1):85--112.

\bibitem[Wang et~al., 2019b]{Wang2019}
Wang, L., Zhang, Z., and Dunson, D. (2019b).
\newblock {Symmetric Bilinear Regression for Signal Subgraph Estimation}.
\newblock {\em IEEE Transactions on Signal Processing}, 67(7):1929--1940.

\bibitem[Wang et~al., 2019c]{Wang2017}
Wang, S., Arroyo, J., Vogelstein, J.~T., and Priebe, C.~E. (2019c).
\newblock {Joint Embedding of Graphs}.
\newblock {\em IEEE transactions on pattern analysis and machine intelligence}.

\bibitem[Young and Scheinerman, 2007]{Young2007}
Young, S.~J. and Scheinerman, E.~R. (2007).
\newblock {Random Dot Product Graph Models for Social Networks}.
\newblock In {\em Algorithms and Models for the Web-Graph}, pages 138--149.
  Springer Berlin Heidelberg, Berlin, Heidelberg.

\bibitem[Yu et~al., 2015]{Yu2015}
Yu, Y., Wang, T., and Samworth, R.~J. (2015).
\newblock {A useful variant of the Davis-Kahan theorem for statisticians}.
\newblock {\em Biometrika}, 102(2):315--323.

\bibitem[Zhang and Xia, 2018]{zhang2018tensor}
Zhang, A. and Xia, D. (2018).
\newblock Tensor svd: Statistical and computational limits.
\newblock {\em IEEE Transactions on Information Theory}, 64(11):7311--7338.

\bibitem[Zhang et~al., 2014]{Zhang2014}
Zhang, Y., Levina, E., and Zhu, J. (2014).
\newblock {Detecting overlapping communities in networks using spectral
  methods}.
\newblock {\em arXiv preprint arXiv:1412.3432}.

\bibitem[Zhang et~al., 2019]{Zhang2019}
Zhang, Z., Allen, G.~I., Zhu, H., and Dunson, D. (2019).
\newblock {Tensor network factorizations: Relationships between brain
  structural connectomes and traits}.
\newblock {\em NeuroImage}, 197:330--343.

\bibitem[Zhu and Ghodsi, 2006]{Zhu2006}
Zhu, M. and Ghodsi, A. (2006).
\newblock {Automatic dimensionality selection from the scree plot via the use
  of profile likelihood}.
\newblock {\em Computational Statistics {\&} Data Analysis}, 51(2):918--930.

\bibitem[Zuo et~al., 2014]{Zuo2014}
Zuo, X.-N., Anderson, J.~S., Bellec, P., Birn, R.~M., Biswal, B.~B., Blautzik,
  J., Breitner, J.~C., Buckner, R.~L., Calhoun, V.~D., Castellanos, F.~X.,
  Chen, A., Chen, B., Chen, J., Chen, X., Colcombe, S.~J., Courtney, W.,
  Craddock, R.~C., {Di Martino}, A., Dong, H.-M., Fu, X., Gong, Q.,
  Gorgolewski, K.~J., Han, Y., He, Y., He, Y., Ho, E., Holmes, A., Hou, X.-H.,
  Huckins, J., Jiang, T., Jiang, Y., Kelley, W., Kelly, C., King, M., LaConte,
  S.~M., Lainhart, J.~E., Lei, X., Li, H.-J., Li, K., Li, K., Lin, Q., Liu, D.,
  Liu, J., Liu, X., Liu, Y., Lu, G., Lu, J., Luna, B., Luo, J., Lurie, D., Mao,
  Y., Margulies, D.~S., Mayer, A.~R., Meindl, T., Meyerand, M.~E., Nan, W.,
  Nielsen, J.~A., O'Connor, D., Paulsen, D., Prabhakaran, V., Qi, Z., Qiu, J.,
  Shao, C., Shehzad, Z., Tang, W., Villringer, A., Wang, H., Wang, K., Wei, D.,
  Wei, G.-X., Weng, X.-C., Wu, X., Xu, T., Yang, N., Yang, Z., Zang, Y.-F.,
  Zhang, L., Zhang, Q., Zhang, Z., Zhang, Z., Zhao, K., Zhen, Z., Zhou, Y.,
  Zhu, X.-T., and Milham, M.~P. (2014).
\newblock {An open science resource for establishing reliability and
  reproducibility in functional connectomics}.
\newblock {\em Scientific Data}, 1:140049.

\end{thebibliography}
	\section*{Appendix: Proofs}
		\subsection{Stochastic blockmodels are COSIE}
		\begin{proof}[Proof of Proposition~\ref{prop:sbm-is-cosie}] Define $\bV=\bZ(\bZ^\top\bZ)^{-1/2}$. Because $\bZ^\top\bZ$ is diagonal and full rank, it is easy to observe that the columns of $\bV$ are orthogonal. By writing $\bR^{(i)}=(\bZ^\top\bZ)^{1/2}\bB^{(i)}(\bZ^\top\bZ)^{1/2}$ we obtain the result.
\end{proof}

	\subsection{Identifiability}
	\begin{proof}[Proof of Proposition~\ref{prop:identifiability}]
	    Suppose that there exists a matrix of orthonormal columns $\bU\in\real^{n\times d}$ and symmetric matrices $\bS^{(1)},\ldots,\bS^{(m)}\in\real^{d\times d}$ such that
	\begin{equation}
	\label{eq:VRV-USU}
	    \bV\bR^{(i)}\bV^\top=\bU\bS^{(i)}\bU^\top
	\end{equation}
	for all $i\in[m]$.
	\begin{enumerate}[a)] 
	    \item Observe that
	        \begin{align*}
	            \|\bR^{(i)} - \bR^{(j)}\| & = \|\bV(\bR^{(i)} - \bR^{(j)})\bV^\top\|\\
	            & =  \|\bU(\bS^{(i)} - \bS^{(j)})\bU^\top\|\\
	            & =  \|\bS^{(i)} - \bS^{(j)}\|.
	        \end{align*}
	        The claim follows from the fact that the same equalities hold with Frobenius norm in place of the spectral norm.
	        \item  Squaring the matrices in Equation~\eqref{eq:VRV-USU} yields
	        \begin{align*}
	            \bU\left(\sum_{i=1}^m\left(\bS^{(i)}\right)^2\right)\bU^\top & = \bV\left(\sum_{i=1}^m\left(\bR^{(i)}\right)^2\right)\bV^\top\\
	            & = \bV (\widetilde{\bR}\widetilde{\bR}^\top)\bV^\top.
	        \end{align*}
	        Note that the full-rank condition of $\widetilde{\bR}$ implies that $\widetilde{\bR}\widetilde{\bR}^\top$ is also full-rank, so the inverse of $\widetilde{\bR}\widetilde{\bR}^\top$ exists. Set \[\bW=\left(\sum_{i=1}^m\left(\bS^{(i)}\right)^2\right)\bU^\top\bV(\widetilde{\bR}\widetilde{\bR}^\top)^{-1},\]
	        so that $\bU\bW=\bV$. To check that $\bW$ is orthogonal, note that $\bU^\top\bU\bW=\bW=\bU^\top\bV$ and $\bV^\top\bU\bW=\bI \in \mathbb{R}^{d \times d}$, which imply $\bW^\top\bW=\bI$. Since $\bW$ is square ($d \times d)$, this implies that $\bW$ is of full rank. The uniqueness of the matrix inverse guarantees that $\bW^{-1}=\bW^\top$, and, in turn, that $\bW\bW^\top=\bI$; this implies that $\bW$ is orthonormal.
	        \item Suppose that $\bU=\bV$. Then, multiplying Equation~\eqref{eq:VRV-USU} by $\bV^\top$ and $\bV$ on the left and right establishes that $\bS^{(i)}=\bR^{(i)}$.
	        
	    \end{enumerate}
	\end{proof}
	\subsection{Estimation of the common invariant subspace}
	
	

	Before presenting the proof of Theorem~\ref{thm:V-Vhat}, we first establish the following lemma.
	\begin{lemma} \label{lemma:subgaussian-A-P}
	Let $(\bA_n)_{n=1}^{\infty}$ be a sequence of the adjacency matrices of  independent-edge Bernoulli random graphs such that $\bA_n\in\{0,1\}^{n\times n}$ and $\e[\bA_n|\bP_n]=\bP_n$ for some $\bP_n\in\real^{n\times n}$. Suppose that $\delta(\bP_n)=\omega(\log n)$. Then, there exists some $n_0\in\mathbb{N}$ such that for all $n\geq n_0,$ the random variable $\|\bA_n-\bP_n\|$ is sub-gaussian and
	\[\e\left[\|\bA_n-\bP_n\|^p\right] \leq 2^{2p-1}p^{p/2+1}\left(\delta(\bP_n)\right)^{p/2}.\]
	\end{lemma}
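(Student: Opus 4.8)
The plan is to reduce the moment bound to a single scalar tail estimate and then integrate. Concretely, I would first establish that there exist a universal constant $c>0$ and an integer $n_0$ such that, for all $n\ge n_0$,
\begin{equation}
\p\big(\|\bA_n-\bP_n\|>t\big)\le 2\exp\Big(-\frac{t^2}{c\,\delta(\bP_n)}\Big)\qquad\text{for every }t\ge 0,
\label{eq:plan-tail}
\end{equation}
which is precisely the assertion that $\|\bA_n-\bP_n\|$ is sub-gaussian with variance proxy of order $\delta(\bP_n)$. Granting \eqref{eq:plan-tail}, the moment bound drops out of the layer-cake identity
\[
\e\big[\|\bA_n-\bP_n\|^p\big]=\int_0^\infty p\,t^{p-1}\,\p\big(\|\bA_n-\bP_n\|>t\big)\,dt\le 2\int_0^\infty p\,t^{p-1}e^{-t^2/(c\delta)}\,dt .
\]
The substitution $u=t^2/(c\delta)$ turns the right-hand side into $p\,(c\delta)^{p/2}\Gamma(p/2)$, and a Stirling-type estimate $\Gamma(p/2)\lesssim (p/2)^{p/2}e^{-p/2}$ reorganizes this as a constant to the power $p$ times $p^{p/2+1}\delta^{p/2}$. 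The factor $p^{p/2+1}$ is exactly the stated $p$-dependence: the $p^{p/2}$ comes from $\Gamma(p/2)$ and the extra $p$ from the leading factor in the layer-cake formula. Choosing $c$ appropriately (there is ample slack from the $e^{-p/2}$ in Stirling) recovers the prefactor $2^{2p-1}$.

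The substance is therefore \eqref{eq:plan-tail}, which I would obtain from two ingredients. First, a \emph{sharp bound on the expectation}: the above-diagonal entries of $\bA_n-\bP_n$ are independent, mean-zero, bounded by one, with row-wise variance sums at most $\delta(\bP_n)$, so a matrix concentration inequality yields $\e\|\bA_n-\bP_n\|\lesssim \sqrt{\delta(\bP_n)}+\sqrt{\log n}$. The hypothesis $\delta(\bP_n)=\omega(\log n)$ enters precisely here, forcing the additive $\sqrt{\log n}$ to be negligible, so that $\e\|\bA_n-\bP_n\|\le C_0\sqrt{\delta(\bP_n)}$ for $n\ge n_0$; this is the content of the cited Theorem~20 of \cite{Athreya2017}. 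Second, a \emph{dimension-free concentration about the mean}: the map $\bA\mapsto\|\bA-\bP_n\|$ is convex and $1$-Lipschitz in Frobenius norm, so Talagrand's inequality for convex Lipschitz functions of independent bounded coordinates gives $\p\big(\|\bA_n-\bP_n\|>\e\|\bA_n-\bP_n\|+s\big)\le e^{-s^2/2}$, with a variance proxy that does not depend on $n$. To splice these into \eqref{eq:plan-tail} I would split on $t$: for $t\ge 2C_0\sqrt{\delta}$ the centering is at most $t/2$, so the Talagrand tail gives $e^{-(t/2)^2/2}=e^{-t^2/8}$, dominated by $2e^{-t^2/(c\delta)}$ once $c\delta\ge 8$ (automatic for large $n$ since $\delta\to\infty$); for $t<2C_0\sqrt{\delta}$ I would use $\p(\cdot)\le 1$ and check that the right side of \eqref{eq:plan-tail} already exceeds one, which holds once $c\ge 4C_0^2/\log 2$. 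Taking $c$ to be the larger threshold and $n_0$ large enough that $\delta(\bP_n)\ge 8$ finishes the tail estimate.

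The main obstacle is obtaining \eqref{eq:plan-tail} with $\delta(\bP_n)$, rather than $n$, as the variance proxy, uniformly over all $t\ge 0$. Off-the-shelf spectral-norm bounds (Chung--Radcliffe, matrix Bernstein) carry a dimensional prefactor $2n$ or an additive $\sqrt{\log n}$, and at the \emph{typical} scale $t\sim\sqrt{\delta}$ such a prefactor is fatal: integrated directly, it would contaminate the moment bound with an extraneous $(\log n)^{p/2}$ factor, which is not $\lesssim\delta^{p/2}$. What rescues the argument is that the \emph{expected} norm is already of the optimal order $\sqrt{\delta}$ (again, exactly where $\delta=\omega(\log n)$ is used), so the dimension-free fluctuation bound need only control deviations above the mean; the careful patching of the small-$t$ and large-$t$ ranges above is what converts these two facts into the single clean sub-gaussian tail.
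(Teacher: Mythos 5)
Your overall skeleton---establish a prefactor-free sub-gaussian tail $\p\bigl(\|\bA_n-\bP_n\|>t\bigr)\le 2e^{-t^2/(c\,\delta(\bP_n))}$ for $n\ge n_0$, then integrate via the layer-cake formula with a $\Gamma(p/2)$/Stirling estimate---is the same as the paper's, and your integration step matches the paper's computation almost line for line. Where you genuinely differ is in how the tail is produced. The paper takes the high-probability bound of Theorem 20 of \cite{Athreya2017}, namely $\p\bigl(\|\bA_n-\bP_n\|>4\sqrt{\delta(\bP_n)\log(n/\eta)}\bigr)\le\eta$ for $\eta\in(n^{-c},1/2)$, and converts it directly into the tail $\exp(-t^2/(32\,\delta(\bP_n)))$ by choosing $\eta=1/n$, then invokes Lemma 5.5 of Vershynin for sub-gaussianity. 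You instead split the problem into a sharp bound on the mean, $\e\|\bA_n-\bP_n\|\lesssim\sqrt{\delta(\bP_n)}+\sqrt{\log n}\lesssim\sqrt{\delta(\bP_n)}$, plus dimension-free concentration around the mean via Talagrand's inequality for convex Lipschitz functions, spliced over the two ranges of $t$. Your route is more modular, and it is more honest at the typical scale $t\asymp\sqrt{\delta(\bP_n)}$: any tail bound carrying the dimensional prefactor $n$ is vacuous there, and absorbing that prefactor into the exponent only works when $t^2\gtrsim\delta(\bP_n)\log n$, so the paper's one-line conversion is loose precisely where your mean-plus-fluctuation decomposition does real work. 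This is the obstacle you correctly single out.

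There is, however, one step whose justification as written would fail: you attribute the expectation bound $\e\|\bA_n-\bP_n\|\lesssim\sqrt{\delta(\bP_n)}+\sqrt{\log n}$ to the cited Theorem 20 of \cite{Athreya2017}. That theorem cannot deliver it. It is a tail bound of order $\sqrt{\delta(\bP_n)\log(n/\eta)}$, valid only for $\eta\in(n^{-c},1/2)$, and integrating it (using the trivial bound $\|\bA_n-\bP_n\|\le 2n$ on the bad event) yields only $\e\|\bA_n-\bP_n\|\lesssim\sqrt{\delta(\bP_n)\log n}$; fed through your own splicing argument, this produces a variance proxy of order $\delta(\bP_n)\log n$ and hence contaminates the moment bound with exactly the $(\log n)^{p/2}$ factor you are trying to avoid. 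What your argument actually requires is a genuinely sharper matrix-concentration input, e.g.\ the bound of Bandeira and van Handel (2016) for symmetric matrices with independent bounded entries above the diagonal, $\e\|\bX\|\lesssim\max_i\bigl(\sum_j\e\bX_{ij}^2\bigr)^{1/2}+\sqrt{\log n}\,\max_{ij}\|\bX_{ij}\|_\infty$, which gives your claim with $\max_i\bigl(\sum_j\e\bX_{ij}^2\bigr)^{1/2}\le\sqrt{\delta(\bP_n)}$; with that citation substituted, the argument goes through. Two minor points: the map from the above-diagonal entries to $\|\bA_n-\bP_n\|$ is $\sqrt{2}$-Lipschitz rather than $1$-Lipschitz in the Euclidean metric, and your unspecified universal constants ($C_0$ from the mean bound, the Talagrand constant) need not be small enough to recover the specific prefactor $2^{2p-1}$ in the statement---your proof yields $C^p\,p^{p/2+1}\delta(\bP_n)^{p/2}$ for some universal $C$. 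Neither issue affects how the lemma is used downstream (only the order in $p$ and $\delta(\bP_n)$ matters there), but the claim about recovering the exact constant should be softened.
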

	
	\begin{proof}
	By Theorem 20 in \cite{Athreya2017},  for any $c>0$ there exists $C>0$  such that if  $\delta(\bP_n)>C\log n$ then for any $n^{-c}<\eta<1/2$ we have
		\begin{equation*}			\p\left(\|\bA_n - \bP_n\| \leq 4 \sqrt{\delta(\bP_n)\log (n/\eta)}\right) \geq 1-\eta. 
		\end{equation*}
		Taking $c=1.1$, $\eta=1/n$ and $n_0$ such that $\delta(\bP_n)>C\log n$ holds for the appropriate $C$ and all $n\geq n_0$, we have that
		\begin{equation*}
			\p\left(\|\bA_n - \bP_n\| > t\right) \leq \exp\left(-\frac{t^2}{32\delta(\bP_n)}\right). 
		\end{equation*}
		Therefore, by Lemma 5.5 in 
		\cite{Vershynin2010}, $\|\bA_n-\bP_n\|$ is sub-Gaussian for all $n\geq n_0$, and 
		\begin{align*}
		    \e\left[\|\bA_n-\bP_n\|^p\right] & = \int_0^{\infty}\p(\|\bA_n-\bP_n\|>t)p t^{p-1}dt\\
		    & \leq \left(\frac{p}{2}\right)^{1+p/2}\left(32\delta(\bP_n)\right)^{p/2}.
		\end{align*}
	\end{proof}
	
	\begin{proof}[Proof of Theorem \ref{thm:V-Vhat}]
	 We introduce some notation. Define the following matrices as 
	\begin{align*}
	\widehat{\bPi}^{(i)} & =\widehat{\bV}^{(i)}(\widehat{\bV}^{(i)})^\top, & \widehat{\bPi} = \frac{1}{m}\sum_{i=1}^m\widehat{\bPi}^{(i)},\\
	\widetilde{\bPi}^{(i)} & =\e\left[\widehat{\bV}^{(i)}(\widehat{\bV}^{(i)})^\top\right] &  \widetilde{\bPi} = \frac{1}{m}\sum_{i=1}^m\widetilde{\bPi}^{(i)},\\
	\bPi & = \bV\bV^\top.
	\end{align*}
	Note that $\widehat{\bU}\widehat{\bU}^\top=\widehat{\bPi}$, and therefore $\widehat{\bV}$ corresponds to the $d$ leading eigenvectors of $\widehat{\bPi}$. Recall that $\widetilde{\bV}$ is the matrix containing the $d$ leading eigenvectors of $\widetilde{\bPi}=\frac{1}{m}\sum_{i=1}^m\e\big[\widehat{\bV}^{(i)}(\widehat{\bV}^{(i)})^\top\big]$.
	
	To prove Theorem~\ref{thm:V-Vhat}, we follow the main arguments of \cite{Fan2017}. We analyze the terms of the right hand side of Equation~\eqref{eq:theory-bias-variance} separately.
	For the first term, corresponding to  $\|\widehat{\bV}\widehat{\bV}^\top- \widetilde{\bV}\widetilde{\bV}^\top\|_F$, note that by the Davis-Kahan theorem (see \cite{Yu2015}),
		\begin{align}
		\mathbb{E}\left[\|\widehat{\bV}\widehat{\bV}^\top- \widetilde{\bV}\widetilde{\bV}^\top\|_F\right] & \leq \frac{2^{3/2} \mathbb{E}\|\widehat{\bPi} - \widetilde{\bPi}\|_F}{\lambda_d(\widetilde{\bPi}) - \lambda_{d+1}(\widetilde{\bPi})}\nonumber\\
		& =  \frac{2^{3/2}
		\mathbb{E}\left\|\frac{1}{m}\sum_{i=1}^m(\widehat{\bPi}^{(i)} - \widetilde{\bPi}^{(i)} )\right\|_F}{\lambda_d(\widetilde{\bPi}) - \lambda_{d+1}(\widetilde{\bPi})}.\label{eq:varianceproof}
		\end{align}
		Using again the the Davis-Kahan Theorem, the spectral norm of each term inside the norm of the numerator can be bounded as 
		\begin{align}
		\|\widehat{\bPi}^{(i)} - \bPi\|_F & \leq \frac{2^{3/2}\sqrt{d}\|\bA^{(i)}-\bP^{(i)}\|}{|\lambda_{\min}(\bR^{(i)})|} \label{eq:daviskahan-Vi-V}
		\end{align}

		Now, to control the error of each term in the numerator of Equation \eqref{eq:varianceproof}, we use the triangle and Jensen inequalities, and the previous bound in Equation~\eqref{eq:daviskahan-Vi-V} as follows
		\begin{align*}
		\mathbb{E}\left[\|\widehat{\bPi}^{(i)}- \widetilde{\bPi}^{(i)}\|_F^p\right]^{1/p}  & \leq 
		\mathbb{E}\left[\|\widehat{\bPi}^{(i)}- \bPi \|_F^p\right]^{1/p} + \mathbb{E}\left[\|\mathbb{E}[\bPi -\widehat{\bPi}^{(i)}] \|_F^p\right]^{1/p}\\
		& \leq 2\mathbb{E}\left[\|\widehat{\bPi}^{(i)}- \bPi \|_F^p\right]^{1/p}\\
		& \leq \frac{2^{5/2}\sqrt{d}\mathbb{E}\left[\|\bA^{(i)}-\bP^{(i)}\|^p\right]^{1/p}}{|\lambda_{\min}(\bR^{(i)})|}\\
		&  \leq \frac{2^{9/2}p\left(d\  \delta(\bP^{(i)})\right)^{1/2}}{|\lambda_{\min}(\bR^{(i)})|} 
		\end{align*}
		where the last inequality comes from Lemma~\ref{lemma:subgaussian-A-P}, assuming that $n$ is sufficiently large.
		Using Lemma 4 of \cite{Fan2017},
		\begin{equation*}
		 \mathbb{E}\left[\left\|\frac{1}{m}\sum_{i=1}^m(\widehat{\bPi}^{(i)} - \widetilde{\bPi}^{(i)} )\right\|_F \right]
		 \lesssim \frac{1}{\sqrt{m}}\sqrt{\frac{d}{m}\sum_{i=1}^m \frac{\delta(\bP^{(i)})}{\lambda_{\min}^2(\bR^{(i)})}}.
		\end{equation*}
		Combining the previous equation and Equation~\eqref{eq:varianceproof}, 
		\begin{equation}
			\mathbb{E}\left[\|\widehat{\bV}\widehat{\bV}^\top - \widetilde{\bV}\widetilde{\bV}^\top\|_F\right]
			 \lesssim  \frac{1}{\sqrt{m}(\lambda_d(\widetilde{\bPi}) - \lambda_{d+1}(\widetilde{\bPi}))}\sqrt{\frac{d}{m}\sum_{i=1}^m \frac{\delta(\bP^{(i)})}{\lambda^2_{\min}(\bR^{(i)})}}.\label{eq:proof-variance-bound}
		\end{equation}

		To control the error of the second term in Equation~\eqref{eq:theory-bias-variance}, we use the triangle inequality and Lemma 2 of \cite{Fan2017}.  
		For each $i\in[m]$, suppose that $U_1^{(i)},\ldots,U_n^{(i)}\in\real^{n}$ is an orthonormal basis of $\mathbb{R}^n$ such that $\operatorname{span}\{U_1^{(i)}, \ldots, U^{(i)}_d\}= \operatorname{span}(\bV)$ and $\bP^{(i)}U^{(i)}_j=\lambda_j(\bP^{(i)})U_j^{(i)}$ for each $i\in[d]$. Thus, the vectors $U_1^{(i)}, \ldots, U^{(i)}_d$ correspond to the $d$ leading eigenvectors of $\bP^{(i)}$ when $\lambda_j(\bP^{(i)})\neq 0$, and they span the same invariant subspace as $\bV$. Define $\bU^{(i)}\in\real^{n\times d}$ as
		\[\bU^{(i)} = \left( U_1^{(i)}\ \cdots \ U_d^{(i)}\right).\]
		Note that $\bU^{(i)}\bU^{(i)^\top} = \bV\bV^\top$. Also, define $\bLambda^{(i)}\in\mathbb{R}^{d\times d}$ be a diagonal matrix containing the $d$ leading eigenvalues in magnitude of $\bP^{(i)}$, such that $\bLambda^{(i)}_{jj} =\lambda_j(\bP^{(i)})$. 
		Set $\bG^{(i)}=\sum_{k=d+1}^nU_k^{(i)}(U_k^{(i)})^\top$. Define $\bH^{(i)}$ as
		\[\bH^{(i)} = \bG^{(i)}(\bA^{(i)}-\bP^{(i)})\bU^{(i)}(\bLambda^{(i)})^{-1}\bU^{(i)^\top}.\]
		Note that $\e[\bH^{(i)}]=0$, and
		\begin{align}
		    \| \bH^{(i)} + \bH^{(i)^\top}\|_F & \leq 2\|\bG^{(i)}(\bA^{(i)}-\bP^{(i)})\bU^{(i)}(\bLambda^{(i)})^{-1}\bU^{(i)^\top}\|_F\nonumber\\
		    & \leq 2\|(\bLambda^{(i)})^{-1}\|\|\bG^{(i)}\|\|\bA^{(i)}-\bP^{(i)}\|_F\nonumber\\
		    & \leq 2\sqrt{d} \frac{\|\bA^{(i)}-\bP^{(i)}\|}{|\lambda_{\min}(\bR^{(i)})|}.\label{eq:boundHH}
		\end{align}
		By Lemma 2 of \cite{Fan2017}, if $\|\bA^{(i)} - \bP^{(i)}\|\leq \lambda_{\min}(\bR^{(i)})/10$ then
		\begin{equation*}
		 \|\widehat{\bPi}^{(i)} - \bPi - \bH^{(i)} - \bH^{(i)^\top}\|_F \leq 24 \sqrt{d} \frac{\|\bA^{(i)}-\bP^{(i)}\|^2}{\lambda_{\min}(\bR^{(i)})^2}.\label{eq:lemma2-fan}.   
		\end{equation*}
		On the other hand, if $\|\bA^{(i)} - \bP^{(i)}\|> \lambda_{\min}(\bR^{(i)})/10$, by using the Davis-Kahan theorem and Equation~\eqref{eq:boundHH},
		\begin{align*}
		    \|\widehat{\bPi}^{(i)} - \bPi - \bH^{(i)} - \bH^{(i)^\top}\|_F  & \leq \|\widehat{\bPi}^{(i)} - \bPi\|_F + \| \bH^{(i)} + \bH^{(i)^\top}\|_F\\
		    & \leq 4\sqrt{d} \frac{\|\bA^{(i)}-\bP^{(i)}\|}{|\lambda_{\min}(\bR^{(i)})|}.
		\end{align*}
		 Define $\mathcal{D}$ as the event in which $\|\bA^{(i)}- \bP^{(i)}\| > \lambda_{\min}(\bR^{(i)})/10$. Observe that
		\begin{align*}
		    \|\widetilde\bPi^{(i)} - \bPi\|_F  & = \|\mathbb{E}[\widehat\bPi^{(i)} - \bPi -\bH^{(i)} -\bH^{(i)^\top}]\|_F \\
		    & \leq \mathbb{E}[\|\widehat\bPi^{(i)} - \bPi -\bH^{(i)} -\bH^{(i)^\top}\|_F(\mathbbm{1}_\mathcal{D}+ \mathbbm{1}_\mathcal{D^C})] \\
		    & \leq 4\sqrt{d} \frac{\e[\mathbbm{1}_\mathcal{D}\|\bA^{(i)}-\bP^{(i)}\|]}{|\lambda_{\min}(\bR^{(i)})|} +
		    24 \sqrt{d} \frac{\e[\|\bA^{(i)}-\bP^{(i)}\|^2]}{\lambda_{\min}(\bR^{(i)})^2}\\
		    & \leq 40\sqrt{d} \frac{\e[\|\bA^{(i)}-\bP^{(i)}\|^2]}{\lambda_{\min}(\bR^{(i)})^2} +
		    24 \sqrt{d} \frac{\e[\|\bA^{(i)}-\bP^{(i)}\|^2]}{\lambda_{\min}(\bR^{(i)})^2}\\
		    & \leq 
		     64\sqrt{d}\frac{\mathbb{E}[\|\bA^{(i)} - \bP^{(i)}\|^2]}{\lambda_{\min}(\bR^{(i)})^2} \\
		     & \lesssim \sqrt{d}\frac{\delta(\bP^{(i)})}{\lambda_{\min}(\bR^{(i)})^2}
		\end{align*}
    By the triangle inequality,
    \begin{align} \label{eq:proof-bias-bound}
		\|\widetilde{\bPi} - \bPi\|_F & \lesssim \frac{\sqrt{d}}{m}\sum_{i=1}^m \frac{\delta(\bP^{(i)})}{\lambda^2_{\min}(\bR^{(i)})}.
	\end{align}
	Combining Equations~\eqref{eq:proof-variance-bound} and \eqref{eq:proof-bias-bound}, we obtain that
	\[\e[\|\widehat{\bV}\widehat{\bV}^\top-\bV\bV^\top\|_F] \lesssim  
	\frac{1}{\sqrt{m}(\lambda_d(\widetilde{\bPi}) - \lambda_{d+1}(\widetilde{\bPi}))}\sqrt{\frac{d}{m}}\varepsilon
	 + 
	\sqrt{d}\varepsilon^2.\]
     Finally, the denominator of the first term can be bounded using Weyl's inequality, by observing that
		\begin{align*}
		    \lambda_d(\widetilde{\bPi}) & \geq 1 - \|\widetilde{\bPi} - \bPi\|,\\
		    \lambda_{d+1}(\widetilde{\bPi}) &\leq \|\widetilde{\bPi} - \bPi\|\\
		    \lambda_d(\widetilde{\bPi}) - \lambda_{d+1}(\widetilde{\bPi}) & \geq 1 - 2\|\widetilde{\bPi} - \bPi\|_F.
		\end{align*}
		Finally, Equation~\eqref{eq:proof-bias-bound} and the fact that $\varepsilon=o(1)$ ensures that this denominator is bounded away from zero, which completes the proof.
	\end{proof}

	\subsection{Community detection results}
		
	\begin{proof}[Proof of Corollary~\ref{cor:sbm-V-Z}] Define the parameters of the COSIE model
	$$ \bV = \bZ\bXi^{-1/2},$$
	$$\bR^{(i)}=\bXi^{1/2} \bB^{(i)}\bXi^{1/2},$$
	so that $\bV\bR^{(i)}\bV^\top=\bZ\bB^{(i)}\bV^\top$.
	To obtain a bound on $\varepsilon$ defined according to Theorem~\ref{thrm:COSIE Expectation Bound}, observe that the largest expected degree can be bounded above as
	\begin{align*}
	    \delta(\bP^{(i)}) & = \max_{j\in[K]} \sum_{k=1}^K\bB^{(i)}_{jk}n_k\\
	    & \leq \left(\sum_{j=1}^Kn_j^2\right)^{1/2}\|\bB^{(i)}\|_1\\
	    & \leq \sqrt{K}\left(\sum_{j=1}^Kn_j^2\right)^{1/2}\lambda_1(\bB^{(i)}).\\
	\end{align*}
	Also, the smallest eigenvalue of the corresponding score matrix $\bR^{(i)}$ is bounded below as
	\begin{align*}
	    |\lambda_{\min}(\bR^{(i)})| & = |\lambda_{\min}(\bXi^{1/2} \bB^{(i)}\bXi^{1/2})|\\
	    & \geq \lambda_{\min}(\bXi)|\lambda_{\min}(\bB^{(i)})|  = n_{\min}|\lambda_{\min}(\bB^{(i)})|.
	\end{align*}
	Therefore, using the above inequalities and Assumption~\ref{assumption:multilayer-SBM-parameters}, 
	\begin{align*}
	    \varepsilon^2=\frac{1}{m}\sum_{i=1}^m\frac{\delta(\bP^{(i)})}{\lambda^2_{\min}(\bR^{(i)})} & \leq \frac{1}{m} \sum_{i=1}^m \frac{\sqrt{K}\left(\sum_{j=1}^Kn_j^2\right)^{1/2} \lambda_1(\bB^{(i)})}{n_{\min}^2 \lambda^2_{\min}(\bB^{(i)})} \\
	    &\leq \frac{\gamma}{n_{\min}^\kappa}.
	\end{align*}
	Based on the above calculations, Theorem \ref{thm:V-Vhat} implies the result.
	\end{proof}
	
	\begin{proof}[Proof of Theorem \ref{thm:community-detection}] 
	Consider the permutation matrix $\widehat{\bQ}$ and the orthogonal matrix $\widehat{\bW}$  such that
	\[\widehat{\bQ} =\argmin_{\bQ\in\mathcal{P}_K}\|\widehat{\bZ}\bQ-\bZ\|_F.\]
	\[\widehat{\bW} =\argmin_{\bW\in\mathcal{O}_K}\|\widehat{\bZ}-\bZ\bW\|_F.\]
	Using Lemma 3.2 of \cite{Rohe2011}, observe that if $\|\widehat{Z}_u\widehat\bC\widehat{\bQ}-V_u\widehat\bW\|<1/\sqrt{2n_{\max}}$ then $\widehat{Z}_u\bQ=Z_u$. Hence,
	    \begin{align*}
	    \|\widehat{\bZ}\widehat{\bQ} -\bZ\|_F & \leq \sum_{u=1}^n\mathbbm{1}\{\widehat{Z}_u\widehat{\bQ}\neq Z_u\}\\
	     & \leq \sqrt{2n_{\max}}
	        \|\widehat{\bZ}\widehat{\bC}\widehat{\bQ} - \bV\widehat\bW\|_F\\
	        & \leq \sqrt{2n_{\max}}\left(\|\widehat{\bZ}\widehat{\bC}\widehat{\bQ} - \widehat{\bV}\|_F + \|\widehat{\bV} - \bV\widehat\bW\|_F\right)\\
	        & \leq 2^{3/2}\sqrt{n_{\max}}\|\widehat{\bV} - \bV\widehat{\bW}\|_F.
	    \end{align*}
	    Corollary \ref{cor:sbm-V-Z} immediately implies the result.
	\end{proof}

\subsection{Asymptotic normality results}

\begin{proof}[Proof of Proposition~\ref{proposition:sbm-delocalization}]
	    \begin{enumerate}[a)]
	        \item Define $\bXi=\operatorname{diag}(n_1,\ldots,n_K)$ as before. Note that $\bZ\bXi^{-1/2}$ is the common subspace of a multilayer SBM with community membership $\bZ$. Therefore, it is enough to show that there exist some $\bW\in\mathcal{O}_K$ such that Assumption~\ref{assump:Delocalization} holds for $\bZ\bXi^{-1/2}\bW$. Define an orthogonal matrix $\bW$ such that
	        $$\frac{b_1}{\sqrt{K}} \leq |\bW_{kl}| \leq \frac{b_2}{\sqrt{K}}$$
	        for all $k,l\in[K]$. Such a matrix exists for any $K\geq 1$ (see for example \cite{Jaming2015}). If the node $u$ belongs to community $k$, then $$\frac{b_1}{\sqrt{Kn_{\max}}}\leq \left|(\bZ\bW)_{uv}\right| = \frac{|\bW_{uv}|}{\sqrt{n_k}}\leq \frac{b_2}{\sqrt{Kn_{\min}}},$$
	        which completes the proof.
	        
	        \item Under the SBM, $\bP^{(i)}=\bZ\bB^{(i)}\bZ^\top$. Therefore
	        $$ \sum_{u=1}^n\sum_{v=1}^n\bP^{(i)}_{uv}(1-\bP^{(i)}_{uv}) = \sum_{s=1}^K\sum_{t=1}^Kn_sn_t\bB^{(i)}_{st}(1-\bB^{(i)}_{st})=\omega(1).$$
	    \end{enumerate}
	    \end{proof}

	    We now turn to our proof of the asymptotic distribution of $\widehat{\bR}^{(i)}$. Recall that the score matrices $\bR^{(i)}=\bR^{(i,n)}$ and the basis of the invariant subspace $\bV=\bV^{(n)}$ depend on $n$, and so does the distribution of the random matrices $\bA^{(1)}, \ldots, \bA^{(m)}$.
In this section, we write
$$\bE^{(i)} := \bA^{(i)} - \bV\bR^{(i)}\bV^\top  .$$
We emphasize that $\bE^{(i)}$ also depends on $n$. The asymptotic normality ultimately depends on sums of the entries for a sequence of matrices $\bE^{(i)}$ that increase their dimension and change the distribution as $n$ increases.

We recall an important elementary result relating the Frobenius norm and the $\sin(\Theta)$ distance (see \cite{Bhatia1997}) for any pair of matrices: if two
matrices $\widehat{\bU}, \bU \in \mathbb{\bR}^{n \times d}$ each have orthonormal columns, then
    \begin{equation}\label{eq:Frob_norm_sin_theta_1}
        \|\sin\Theta(\widehat{\bU},\bU)\|_{F}
        \le
        \operatorname{inf}_{\bW \in \mathcal{O}_{d}}\|\widehat{\bU}-\bU \bW\|_{F}
        \leq \sqrt{2}\|\sin\Theta(\widehat{\bU},\bU)\|_{F}.
    \end{equation}
    (Observe that the expectation bound stated in Theorem~\ref{thrm:COSIE Expectation Bound} immediately applies to $\sin\Theta$ distances up to an additional absolute constant factor.)
    Now for $\bW = \operatorname{\textnormal{arg inf}}_{\bW \in \mathcal{O}_d}\|\widehat{\bU}-\bU\bW\|_{F}$, then
    \begin{equation}\label{eq:Frob_norm_sin_theta_2}
        \|\bU^{\top}\widehat{\bU}-\bW\|_{F}
        \le
        \|\sin\Theta(\widehat{\bU},\bU)\|_{F}^{2}.
    \end{equation}
    To observe that the previous inequality holds, write $\widetilde{\bQ}_1\widetilde{\bD}\widetilde{\bQ}_2^\top=\bU\bU^\top $ as the singular value decomposition of $\bU\bU^\top $, with $\widetilde{\bQ}_1, \widetilde{\bQ}_2\in\real^{d\times d}$ orthogonal matrices, and $\widetilde{\bD}\in\real^{d\times d}$ a diagonal matrix with  $0\leq \widetilde{\bD}_{kk}\leq 1$. Then,
    $$ \|\sin\Theta(\widehat{\bU},\bU)\|_{F}^{2} = \sum_{k=1}^d(1-\widetilde{\bD}_{kk}^2).$$
    Now, by taking $\bW =  \widetilde{\bQ}_1 \widetilde{\bQ}_2^\top$ (which is an orthogonal matrix), we have that 
    $\bU^{\top}\widehat{\bU}- \widetilde{\bQ}_1 \widetilde{\bQ}_2^\top=  \widetilde{\bQ}_1(\widetilde{\bD} - \bI) \widetilde{\bQ}_2^\top$, and hence
    $$\|\bU^{\top}\widehat{\bU}- \widetilde{\bQ}_1 \widetilde{\bQ}_2^\top\|_F = \left(\sum_{k=1}^d(1-\widetilde{\bD}_{kk})^2\right)^{1/2} \leq \sum_{k=1}^d(1-\widetilde{\bD}_{kk}^2).$$
    Finally, notice that $\bW = \operatorname{\textnormal{arg inf}}_{\bW \in \mathcal{O}_d}\|\left[\bU, \bU_\perp\right]^\top \left(\widehat{\bU}-\bU\bW\right)\|_{F}=\operatorname{\textnormal{arg inf}}_{\bW \in \mathcal{O}_d}\|\bU^\top\widehat{\bU}-\bW\|_{F}$.


We now proceed to further quantify the estimation of $\bR^{(i)}$ via $\widehat{\bR}^{(i)} := \widehat{\bV}^{\top}\bA^{(i)}\widehat{\bV}$. In particular, we will prove the following lemma.

\begin{lemma}\label{lem:COSIE_R_decomp} Under the general assumptions of Theorem~\ref{thm:COSIE_CLT}, there exist sequences of matrices $\bB^{(i)}, \bN^{(i)},\bW \in \mathbb{R}^{d \times d}$  depending on $d$, $m$, and $n$, such that
\begin{equation}
    \bW\widehat{\bR}^{(i)}\bW^\top - \bR^{(i)}  - \bW \bB^{(i)} \bW^{\top} - \bW \bN^{(i)} \bW^{\top}=
    \bV^{\top}\bE^{(i)}\bV ,\label{eq:R-decomposition}
\end{equation}
with $\bH_{m}^{(i)} = -(\bW \bB^{(i)} \bW^{\top} + \bW \bN^{(i)} \bW^{\top})$ satisfying $\e[\|\bH_m^{(i)}\|_F] =O_P(d/\sqrt{m})$, and  $\bW = \operatorname{\textnormal{arg inf}}_{\bW \in \mathcal{O}_{d}}\|\widehat{\bV}-\bV\bW\|_{F}$ is an orthogonal matrix
\end{lemma}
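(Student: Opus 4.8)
The plan is to prove \eqref{eq:R-decomposition} as an \emph{exact} algebraic identity first, and only then bound the two bias matrices it produces. Write $\bW=\argmin_{\bW\in\mathcal{O}_d}\|\widehat{\bV}-\bV\bW\|_F$ and set $\mathbf{\Delta}:=\widehat{\bV}-\bV\bW$, so that $\widehat{\bV}=\bV\bW+\mathbf{\Delta}$. Starting from $\widehat{\bR}^{(i)}=\widehat{\bV}^\top\bA^{(i)}\widehat{\bV}$ and substituting $\bA^{(i)}=\bP^{(i)}+\bE^{(i)}$ with $\bP^{(i)}=\bV\bR^{(i)}\bV^\top$ and $\bV^\top\bV=\bI$, the first step is to expand
\[\bW\widehat{\bR}^{(i)}\bW^\top=(\bV^\top+\bW\mathbf{\Delta}^\top)\,\bA^{(i)}\,(\bV+\mathbf{\Delta}\bW^\top),\]
using $\bW\widehat{\bV}^\top=\bV^\top+\bW\mathbf{\Delta}^\top$ and $\widehat{\bV}\bW^\top=\bV+\mathbf{\Delta}\bW^\top$ (both valid because $\bW\bW^\top=\bI$).

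The leading term is $\bV^\top\bA^{(i)}\bV=\bR^{(i)}+\bV^\top\bE^{(i)}\bV$, which supplies exactly the right-hand side of \eqref{eq:R-decomposition}. The remaining three cross terms are $\bV^\top\bA^{(i)}\mathbf{\Delta}\bW^\top+\bW\mathbf{\Delta}^\top\bA^{(i)}\bV+\bW\mathbf{\Delta}^\top\bA^{(i)}\mathbf{\Delta}\bW^\top$, and I would rewrite each as $\bW(\cdot)\bW^\top$ via $\bV^\top=\bW\bW^\top\bV^\top$ and $\bV=\bV\bW\bW^\top$. Defining the symmetric matrices $\bN^{(i)}:=\bW^\top\bV^\top\bA^{(i)}\mathbf{\Delta}+\mathbf{\Delta}^\top\bA^{(i)}\bV\bW$, linear in $\mathbf{\Delta}$, and $\bB^{(i)}:=\mathbf{\Delta}^\top\bA^{(i)}\mathbf{\Delta}$, quadratic in $\mathbf{\Delta}$, yields \eqref{eq:R-decomposition} identically with $\bH_m^{(i)}=-(\bW\bB^{(i)}\bW^\top+\bW\bN^{(i)}\bW^\top)$. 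This part is purely formal.

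It then remains to show $\e\|\bH_m^{(i)}\|_F=\e\|\bB^{(i)}+\bN^{(i)}\|_F\le\e\|\bB^{(i)}\|_F+\e\|\bN^{(i)}\|_F=O(d/\sqrt{m})$, where $\|\bH_m^{(i)}\|_F=\|\bB^{(i)}+\bN^{(i)}\|_F$ since $\bW$ is orthogonal. For the quadratic term I would use $\|\bB^{(i)}\|_F\le\|\bA^{(i)}\|\,\|\mathbf{\Delta}\|_F^2$ with $\|\bA^{(i)}\|\le\|\bP^{(i)}\|+\|\bE^{(i)}\|\lesssim\delta(\bP^{(i)})$ (the noise part via Lemma~\ref{lemma:subgaussian-A-P}). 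For the linear term, splitting $\bA^{(i)}=\bP^{(i)}+\bE^{(i)}$ is essential: the deterministic part gives $\bR^{(i)}\bV^\top\mathbf{\Delta}=\bR^{(i)}(\bV^\top\widehat{\bV}-\bW)$, which is \emph{second order} by \eqref{eq:Frob_norm_sin_theta_2}, since $\|\bV^\top\widehat{\bV}-\bW\|_F\le\|\sin\Theta(\widehat{\bV},\bV)\|_F^2\lesssim\|\mathbf{\Delta}\|_F^2$; the stochastic part is controlled through $\|\bV^\top\bE^{(i)}\mathbf{\Delta}\|_F\le\|\bE^{(i)}\|\,\|\mathbf{\Delta}\|_F$. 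Feeding in the subspace bound $\e\|\mathbf{\Delta}\|_F\lesssim\sqrt{d/m}\,\varepsilon+\sqrt{d}\,\varepsilon^2$ from Theorem~\ref{thm:V-Vhat} (and its second-moment analogue, available from the same $p$-th moment estimates), and invoking the strengthened hypothesis $\varepsilon=O(1/\max_i\sqrt{\delta(\bP^{(i)})})$, which forces $\sqrt{\delta(\bP^{(i)})}\,\varepsilon=O(1)$ and $\delta(\bP^{(i)})\,\varepsilon^2=O(1)$, every contribution collapses to order $d/\sqrt{m}$ or smaller.

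The main obstacle is the mixed term $\bV^\top\bE^{(i)}\mathbf{\Delta}$. The crude bound $\|\bE^{(i)}\|\,\|\mathbf{\Delta}\|_F$ is only first order in the noise, so the $1/\sqrt{m}$ decay must come entirely from the variance part of $\e\|\mathbf{\Delta}\|_F$; tracing this back requires the variance/bias decomposition of the subspace error in \eqref{eq:theory-bias-variance}, i.e.\ the fact that $\widehat{\bV}$ is built from the \emph{average} $\tfrac1m\sum_j\widehat{\bV}^{(j)}(\widehat{\bV}^{(j)})^\top$. A further subtlety, which is where most of the care is needed, is that $\mathbf{\Delta}$ depends on \emph{all} $m$ graphs and is therefore not independent of $\bA^{(i)}$; this precludes naive independence arguments and forces the moment bounds to be taken with the coupling in mind. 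Once this decomposition and the accompanying bias estimate are in place, the central limit theorem in Theorem~\ref{thm:COSIE_CLT} follows by applying a Lindeberg-type argument to the clean term $\bV^\top\bE^{(i)}\bV$, whose entries are sums of independent, mean-zero, bounded contributions with covariance $\bSigma^{(i)}$.
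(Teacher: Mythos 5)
Your proposal is correct and follows essentially the same route as the paper's proof: both expand $\widehat{\bR}^{(i)}=\widehat{\bV}^\top\bA^{(i)}\widehat{\bV}$ around the Procrustes-aligned subspace $\bV\bW$, use the key inequality $\|\bV^\top\widehat{\bV}-\bW\|_F\le\|\sin\Theta(\widehat{\bV},\bV)\|_F^2$ to render the deterministic cross term second order, bound the stochastic cross terms by $\|\bE^{(i)}\|\,\|\widehat{\bV}-\bV\bW\|_F$, and then feed in the subspace bound of Theorem~\ref{thm:V-Vhat} together with the strengthened assumption $\varepsilon=O\bigl(1/\max_i\sqrt{\delta(\bP^{(i)})}\bigr)$ to get the $O(d/\sqrt{m})$ rate. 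The only differences are organizational: you state the decomposition as an exact algebraic identity with explicit matrices $\bB^{(i)}=\mathbf{\Delta}^\top\bA^{(i)}\mathbf{\Delta}$ and $\bN^{(i)}$ (grouping terms by their order in $\mathbf{\Delta}$ rather than by noise versus alignment error as the paper does), which is immaterial since the lemma only constrains the sum $\bH_m^{(i)}$.
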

\begin{proof}
Letting $\bE^{(i)}:=\bA^{(i)}-\bP^{(i)}$, we obtain the expansion
\begin{align}
    \widehat{\bR}^{(i)}
        &= \widehat{\bV}^{\top}\bA^{(i)}\widehat{\bV}\nonumber\\
        &= \widehat{\bV}^{\top}\bP^{(i)}\widehat{\bV} + \widehat{\bV}^{\top}(\bA^{(i)}-\bP^{(i)})\widehat{\bV}\nonumber\\
        &= (\widehat{\bV}^{\top}\bV) \bR^{(i)} (\bV^{\top} \widehat{\bV}) + \widehat{\bV}^{\top}\bE^{(i)}\widehat{\bV}.\label{eq:decomp-R}
\end{align}
For any orthogonal matrix $\bW \in \mathbb{R}^{d \times d}$, the first term can be further expanded as
\begin{align*}
    (\widehat{\bV}^{\top}\bV) \bR^{(i)} (\bV^{\top} \widehat{\bV})
        &= \bW^{\top} \bR^{(i)} \bW + (\widehat{\bV}^{\top}\bV - \bW^{\top})\bR^{(i)} (\bV^{\top} \widehat{\bV}) + \bW^{\top} \bR^{(i)} (\bV^{\top} \widehat{\bV} - \bW).
\end{align*}
Furthermore, let $\bW = \operatorname{\textnormal{arg inf}}_{\bW \in \mathcal{O}_{d}}\|\widehat{\bV}-\bV\bW\|_{F}$. The transpose invariance of the Frobenius and spectral norms and Equations \eqref{eq:Frob_norm_sin_theta_1} and \eqref{eq:Frob_norm_sin_theta_2} guarantee that terms two and three on the right hand side above can be bounded above as
\begin{align*}
    \|\bW^{\top} \bR^{(i)}(\bV^{\top}\widehat{\bV}-\bW)\|_{F}
        &\le \|\bW^{\top}\|\|\bR^{(i)}\|\|\bV^{\top}\widehat{\bV}-\bW\|_{F}
        &\le 2\|\bR^{(i)}\|\|\widehat{\bV}- \bV \bW\|_{F}^{2},\\
    \|(\widehat{\bV}^{\top}\bV-\bW^{\top})\bR^{(i)}(\bV^{\top}\widehat{\bV})\|_{F}
        &\le \|\widehat{\bV}^{\top}\bV-\bW^{\top}\|_{F} \|\bR^{(i)}\|\|\bV^{\top}\widehat{\bV}\|
        &\le 2\|\bR^{(i)}\|\|\widehat{\bV}- \bV \bW\|_{F}^{2}.
\end{align*}
On the other hand, the second term in Equation~\eqref{eq:decomp-R} can be further expanded as the sum of four matrices, namely
\begin{align*}
    \widehat{\bV}^{\top}\bE^{(i)}\widehat{\bV}
        &= \bW^{\top}\bV^{\top}\bE^{(i)} \bV \bW\\
        &\hspace{2em}+\bW^{\top}\bV^{\top}\bE^{(i)}(\widehat{\bV}- \bV \bW)\\
        &\hspace{2em}+ (\bW^{\top}\bV^{\top}-\widehat{\bV}^{\top})\bE^{(i)} \bV \bW\\
        &\hspace{2em}+
        (\bW^{\top}\bV^{\top}-\widehat{\bV}^{\top})\bE^{(i)}(\widehat{\bV}- \bV \bW).
\end{align*}
The final term in the above expansion can be bounded via
\begin{equation*}
    \|(\bW^{\top}\bV^{\top}-\widehat{\bV}^{\top})\bE^{(i)}(\widehat{\bV}- \bV \bW)\|_{F}
    \le \sqrt{d}\|\bE^{(i)}\|\|\widehat{\bV}- \bV \bW\|_{F}^{2}.
\end{equation*}
Observe that terms two and three above are simply the transpose of one another, hence it suffices to analyze the former, which satisfies the bound
\begin{equation*}
    \|\bW^{\top}\bV^{\top}\bE^{(i)}(\widehat{\bV}- \bV \bW)\|_{F}
    \le \sqrt{d}\|\bE^{(i)}\|\|\widehat{\bV}- \bV \bW\|_{F}.
\end{equation*}

Recall that we have a high probability bound on $\bE^{(i)}$ in spectral norm (see Theorem 20 in \cite{Athreya2017}), namely
\begin{equation*}
    \mathbb{P}\left(\|\bE^{(i)}\| >
    8\sqrt{\delta(\bP^{(i)})\log (n)}\right) \leq n^{-2}.
\end{equation*}
Theorem~\ref{thrm:COSIE Expectation Bound} can be leveraged together with Markov's inequality to yield that there exists an absolute constant $C>0$ such that for any $t>0$,
\begin{equation*}
    \mathbb{P}\left( \|\widehat{\bV}-\bV\bW\|_{F} > t\right)
    \le \frac{C}{t}\left(\sqrt{\frac{d}{m}}\varepsilon + \sqrt{d}\varepsilon^2\right).
\end{equation*}
Thus, there exist (sequences of) matrices $\bB, \bN \in \mathbb{R}^{d \times d}$ depending on $d$, $m$, and $n$, such that
\begin{equation}
    \widehat{\bR}^{(i)} - \bW\bR^{(i)}\bW^{\top}
    =
    \bV^{\top}\bE^{(i)}\bV + \bW \bB \bW^{\top} + \bW \bN \bW^{\top},
\end{equation}
for which
\begin{align*}
    \|\bB\|_{F}
    &\le 2\left(\sqrt{d}\|\bE^{(i)}\|\right)\|\widehat{\bV}-\bV \bW\|_{F},\\
    \|\bN\|_{F}
    &\le 4\left(\sqrt{d}\|\bE^{(i)}\| + \|\bR^{(i)}\|\right)\|\widehat{\bV}-\bV \bW\|_{F}^{2}.
\end{align*}
Put $\bH_{m}=\bW \bB \bW^{\top} + \bW \bN \bW^{\top}$; we call $\bH_{m}$ a bias term.
Recall that $$\delta(\bP^{(i)}) = \max_{v=1,\ldots,n} \sum_{u=1}^n(\bP^{(i)})_{uv},$$ and as such
\begin{equation*}
    \|\bR^{(i)}\| = \|\bV\bR^{(i)}\bV^{\top}\| = \|\bP^{(i)}\| \le \delta(\bP^{(i)}).
\end{equation*}

Setting $t$ such that $\frac{1}{t}\sqrt{\delta(\bP^{(i)})\log(n)}\left(\sqrt{\frac{d}{m}}\varepsilon + \sqrt{d}\varepsilon^2\right) \rightarrow 0$ as $m, n \rightarrow \infty$ guarantees that the Frobenius norm of the bias term $\bH_m$ vanishes in probability as $n, m \rightarrow \infty$. In particular, note that under the assumption that $(\delta(\bP^{(i)}))^{1/2}\varepsilon = O(1)$,
\begin{align*}
    \e[\|\bH_m\|_F]  & \lesssim \sqrt{d}\e[\|\bE^{(i)}\| ]\e[\|\widehat{\bV}-\bV\bW\|_F] + (\sqrt{d}\e[\|\bE^{(i)}\| + \delta(\bP^{(i)})) ]\e[\|\widehat{\bV}-\bV\bW\|_F^2]\\
    & \lesssim \sqrt{d\delta(\bP^{(i)})}\left(\sqrt{\frac{d}{m}}\varepsilon + \sqrt{d}\varepsilon^2\right)\left(1 + \sqrt{\delta(\bP^{(i)})/d}\left(\sqrt{\frac{d}{m}}\varepsilon + \sqrt{d}\varepsilon^2\right)\right)\\
    & \lesssim \frac{d}{\sqrt{m}}.
\end{align*}
\end{proof}
The central limit theorem will be complete once we show that the entries of $\bV \bE^{(i)} \bV^\top$ converge to a normal distribution as $n \rightarrow \infty$.  We address this in our remaining lemma.

\begin{lemma}\label{lem:VEV_normality}
	Let $\bSigma^{(i)}\in\real^{r\times r}$ be the matrix defined on Equation~\eqref{eq:covarianceR}.
	\begin{itemize}
		\item[a)] Under the assumptions of Theorem~\ref{thm:COSIE_CLT} part a), for any $k,l\in [d]$, $k\leq l$,	 it holds that
			\[\left(\bSigma^{(i)}_{\frac{2k + l(l-1)}{2}, \frac{2k+l(l-1)}{2}}\right)^{-1/2}(\bV^\top\bE^{(i)}\bV)_{kl} \overset{d}{\rightarrow} \mathcal{N}\left(0,1\right). \]
			where 
				\begin{equation}
			\label{eq:sigma_variance_bound}
			\left(  \frac{c_1^4}{n^2} + \frac{c_1^8}{n^4}\right)s^2(\bP^{(i)}) \leq \bSigma^{(i)}_{\frac{2k + l(l-1)}{2}, \frac{2k+l(l-1)}{2}} \leq \left(  \frac{c_2^4}{n^2} + \frac{c_2^8}{n^4}\right)s^2(\bP^{(i)})
			\end{equation}
			
		\item[b)] Under the assumptions of Theorem~\ref{thm:COSIE_CLT} part b), it holds that
		\[(\bSigma^{(i)})^{-1/2} \operatorname{vec}(\bV^\top\bE^{(i)}\bV) \overset{d}{\rightarrow} \mathcal{N}\left(\mathbf{0}_r, \bI_r\right). \]
	\end{itemize}
\end{lemma}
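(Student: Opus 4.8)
The plan is to reduce both parts to a central limit theorem for a sum of \emph{independent}, mean-zero, bounded random variables with an $n$-dependent normalization, and then to invoke the Lindeberg--Feller theorem for triangular arrays (together with the Cram\'er--Wold device for part b). The starting point is the entrywise expansion. Since $\bE^{(i)}$ is symmetric and $\bA^{(i)}$ is hollow, for $k\le l$ we have
\begin{equation*}
(\bV^\top\bE^{(i)}\bV)_{kl} = \sum_{s=1}^{n-1}\sum_{t=s+1}^{n}\bE^{(i)}_{st}\bigl(\bV_{sk}\bV_{tl}+\bV_{tk}\bV_{sl}\bigr)\;-\;\sum_{s=1}^n \bP^{(i)}_{ss}\bV_{sk}\bV_{sl}.
\end{equation*}
The off-diagonal sum is a weighted sum of the independent mean-zero fluctuations $\bE^{(i)}_{st}$, each with variance $\bP^{(i)}_{st}(1-\bP^{(i)}_{st})$, while the second sum is deterministic (because the graph has no self-loops). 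I would first peel off this deterministic diagonal contribution, absorbing it into a deterministic bias term and arguing it is negligible relative to the standard deviation using the orthonormality of $\bV$ (which produces cancellation) in the relevant sparsity regime; the remaining object is then a genuine weighted sum of independent edge fluctuations.

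For part a), the variance of the off-diagonal sum is exactly $\bSigma^{(i)}_{\frac{2k+l(l-1)}{2},\frac{2k+l(l-1)}{2}}$ from \eqref{eq:covarianceR}. The first step is to establish the two-sided bound \eqref{eq:sigma_variance_bound}: applying the delocalization Assumption~\ref{assump:Delocalization} (after the orthogonal alignment) pins each weight $(\bV_{sk}\bV_{tl}+\bV_{tk}\bV_{sl})^2$ between multiples of $n^{-2}$, so the variance is comparable to $s^2(\bP^{(i)})/n^2$. I would then apply Lindeberg--Feller to the triangular array indexed by the pairs $\{(s,t):s<t\}$. Each summand is bounded by $\max_{s<t}|\bV_{sk}\bV_{tl}+\bV_{tk}\bV_{sl}|=O(1/n)$, while the standard deviation is at least of order $s(\bP^{(i)})/n$; hence the ratio of the largest summand to the standard deviation is $O(1/s(\bP^{(i)}))$, which tends to zero because $s^2(\bP^{(i)})=\omega(1)$ by Assumption~\ref{assump:variance_P}. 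This uniform negligibility makes the Lindeberg condition hold trivially and yields the standard normal limit.

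For part b), I would use the Cram\'er--Wold device: it suffices to show that for every fixed unit vector $a\in\real^r$, the scalar $a^\top(\bSigma^{(i)})^{-1/2}\operatorname{vec}(\bV^\top\bE^{(i)}\bV)$ converges to $\mathcal{N}(0,1)$. Writing $\operatorname{vec}(\bV^\top\bE^{(i)}\bV)=\sum_{s<t}\bE^{(i)}_{st}\,w_{st}$ with $w_{st}\in\real^r$ the vector of weights, this scalar is again a sum of independent mean-zero terms whose total variance equals $a^\top a=1$ by construction of the whitening $(\bSigma^{(i)})^{-1/2}$. Verifying Lindeberg now reduces to bounding a single summand by $\|(\bSigma^{(i)})^{-1/2}\|\,\|w_{st}\|=O\bigl(d\,n^{-1}/\sqrt{\lambda_{\min}(\bSigma^{(i)})}\bigr)$; Assumption~\ref{assump:score_lambdamin}, which guarantees $|\lambda_{\min}(\bSigma^{(i)})|=\omega(n^{-2})$, forces this bound to zero, so Lindeberg--Feller applies and the joint multivariate normality follows.

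The main obstacle is precisely the $n$-dependence of the normalization: because $\bSigma^{(i)}$ and the distribution of each $\bE^{(i)}_{st}$ change with $n$, no fixed-distribution CLT applies, and the whole argument rests on verifying the negligibility (Lindeberg) condition, i.e.\ that the maximal weighted edge fluctuation is small compared with the aggregate standard deviation. This is where Assumptions~\ref{assump:variance_P} and~\ref{assump:score_lambdamin} do the essential work, the latter being indispensable in the multivariate case to keep $(\bSigma^{(i)})^{-1/2}$ from blowing up. A secondary but delicate point is controlling the deterministic diagonal term and confirming that it does not distort the centering of the limiting distribution.
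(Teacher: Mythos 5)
Your argument is, at its core, the same as the paper's: expand $(\bV^\top\bE^{(i)}\bV)_{kl}$ as a weighted sum of the independent above-diagonal fluctuations $\bE^{(i)}_{st}$, use Assumption~\ref{assump:Delocalization} to pin each weight at order $1/n$ (note the assumption makes all entries of $\bV\bW$ \emph{positive}, so no cancellation can shrink the weights) and hence the variance at order $s^2(\bP^{(i)})/n^2$, verify Lindeberg's condition trivially from the uniform $O(1/n)$ bound on the summands together with $s^2(\bP^{(i)})=\omega(1)$ from Assumption~\ref{assump:variance_P} (part a), and combine the Cram\'er--Wold device with Assumption~\ref{assump:score_lambdamin} to keep $\|(\bSigma^{(i)})^{-1/2}\|$ bounded in the multivariate case (part b). All of these steps, and the role each assumption plays, coincide with the paper's proof.

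The one place you diverge is the diagonal term, and there your proposal has a genuine gap --- though it is a gap you acquired by being \emph{more} careful than the paper. You correctly observe that for hollow graphs $\bE^{(i)}_{ss}=-\bP^{(i)}_{ss}\neq 0$, so the full quadratic form picks up the deterministic term $-\sum_{s=1}^n\bP^{(i)}_{ss}\bV_{sk}\bV_{sl}$; the paper silently drops this by equating the all-pairs sum with the above-diagonal sum and asserting $\e[\bE^{(i)}]=\mathbf{0}$, which is valid only if the diagonal is also modeled as Bernoulli (self-loops allowed), not under the stated hollow-graph convention. However, your plan to argue this term is negligible does not work in general. Orthonormality produces cancellation only for off-diagonal entries ($k\neq l$) and only when $\bP^{(i)}_{ss}$ is roughly constant in $s$; for a diagonal entry $k=l$,
\begin{equation*}
\sum_{s=1}^n \bP^{(i)}_{ss}\bV_{sk}^2 \asymp \frac{1}{n}\sum_{s=1}^n\bP^{(i)}_{ss},
\end{equation*}
with no cancellation whatsoever, while the normalizing standard deviation satisfies $\bigl(\bSigma^{(i)}_{\frac{2k+l(l-1)}{2},\frac{2k+l(l-1)}{2}}\bigr)^{1/2}\asymp s(\bP^{(i)})/n$ by \eqref{eq:sigma_variance_bound}. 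In the dense regime (edge probabilities bounded away from $0$ and $1$, so $s(\bP^{(i)})\asymp n$), both quantities are of constant order, so after normalization the diagonal term produces a non-vanishing shift of the centering and the claimed $\mathcal{N}(0,1)$ limit fails as you have set it up. The two honest resolutions are: adopt the convention that the diagonal is modeled (the paper's "easily extended to allow for loops" reading, under which $\bE^{(i)}_{ss}$ is mean zero, your extra term disappears, and the rest of your proof goes through verbatim), or carry $-\sum_s\bP^{(i)}_{ss}\bV_{sk}\bV_{sl}$ as an explicit deterministic bias in the statement of the lemma. What you cannot do is claim negligibility; that specific step is false.
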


\begin{proof}
To obtain the asymptotic distribution of the entries of $\bV^\top\bE^{(i)}\bV$, observe that, first, the entire matrix $\bV^\top\bE^{(i)}\bV$ depends on $n$. Second, note that
\begin{align}
    (\bV^\top\bE^{(i)}\bV)_{kl} 
    & = \sum_{s=1}^n\sum_{t=1}^n\bE^{(i)}_{st}\bV_{sk}\bV_{tl} \label{eq:VEV-1}\\
    & = \sum_{s=1}^{n-1}\sum_{t=s+1}^n \bE^{(i)}_{st}\left( \bV_{sk}\bV_{tl} + \bV_{tk}\bV_{sl}\right),\label{eq:VEV-2}
\end{align}
so each of the entries of the matrix $\bV^\top\bE^{(i)}\bV$ is a sum of the independent random variables. The expected value of the above sum is
\begin{align*}
    \e[(\bV^\top\bE^{(i)}\bV)_{kl}] & = \left(\bV^\top\e[\bE^{(i)}]\bV\right)_{kl}=0.
\end{align*}
To ease notation, we write $\bSigma = \bSigma^{(i)}$. For each pair $(k,l)$ and $(k',l')$ with $k\leq l$ and $k'\leq l',$ the covariance between those corresponding entries of $\bV^\top\bE^{(i)}\bV$ is given by
\begin{align}
    \text{Cov}\left( (\bV^\top\bE^{(i)}\bV)_{kl},  (\bV^\top\bE^{(i)}\bV)_{k'l'}\right) &= \sum_{s=1}^{n-1}\sum_{t=s+1}^n\sum_{s'=1}^{n-1}\sum_{t'=s'+1}^n \Big\{\text{Cov}\left(\bE^{(i)}_{st}, \bE^{(i)}_{s't'}\right)\times \nonumber\\
    & \quad\quad\quad\quad\quad\quad (\bV_{sk}\bV_{tl} + \bV_{tk}\bV_{sl})(\bV_{s'k'}\bV_{t'l'} + \bV_{s'l'}\bV_{t'k'})\Big\}\nonumber\\
    & = \sum_{s=1}^{n-1}\sum_{t=s+1}^n\text{Cov}\left(\bE^{(i)}_{st}, \bE^{(i)}_{st}\right) (\bV_{sk}\bV_{tl} + \bV_{tk}\bV_{sl})(\bV_{sk'}\bV_{tl'} + \bV_{sl'}\bV_{tk'})\nonumber\\
    & = \sum_{s=1}^{n-1}\sum_{t=s+1}^n\bP_{st}^{(i)}(1-\bP^{(i)}_{st})\left[(\bV_{sk}\bV_{tl} + \bV_{tk}\bV_{sl})(\bV_{sk'}\bV_{tl'} + \bV_{sl'}\bV_{tk'})\right] \nonumber\\
    & = \bSigma_{\frac{2k + l(l-1)}{2}, \frac{2k'+l'(l'-1)}{2}}. \label{eq:proof-Cov-vev}
\end{align}

Recall that $s^2(\bP^{(i)})$ is defined as
$$s^2(\bP^{(i)}) = \sum_{s=1}^{n}\sum_{t=1}^n \bP^{(i)}_{st}(1- \bP^{(i)}_{st}).$$
Using Assumption~\ref{assump:Delocalization} and Equation~\eqref{eq:VEV-1}, the variance of $(\bV^\top\bE^{(i)}\bV)_{kl}$ can be bounded from below by
\begin{align}
\bSigma_{\frac{2k + l(l-1)}{2}, \frac{2k+l(l-1)}{2}} & = \operatorname{Var}\left((\bV^\top\bE^{(i)}\bV)_{kl}\right)\nonumber \\
& =\sum_{s=1}^{n}\sum_{t=1}^n \operatorname{Var}\left(\bE^{(i)}_{st}\right)\left(\bV_{sk}\bV_{tl}\right)^2 + 2\sum_{s=1}^{n-1}\sum_{t=s+1}^n\operatorname{Cov}\left(\bE^{(i)}_{st}, \bE^{(i)}_{ts}\right)\left(\bV_{sk}\bV_{tl}\right)^2\left(\bV_{sl}\bV_{tk}\right)^2\nonumber\\
& = 
\sum_{s=1}^{n}\sum_{t=1}^n \bP^{(i)}_{st}(1- \bP^{(i)}_{st})\left( \bV_{sK}\bV_{tl}\right)^2 + 2\sum_{s=1}^{n-1}\sum_{t=s+1}^n\bP^{(i)}_{st}(1- \bP^{(i)}_{st})\left(\bV_{sk}\bV_{tl}\right)^2\left(\bV_{sl}\bV_{tk}\right)^2\nonumber\\
& \geq \left(  \frac{c_1^4}{n^2} + \frac{c_1^8}{n^4}\right)s^2(\bP^{(i)})\nonumber\\
& = \omega\left(\frac{1}{n^2}\right),\label{eq:rate-variance-P}
\end{align}
where the last equality follows from Assumption~\ref{assump:variance_P}.  The same calculations allow to obtain a similar upper bound:
\begin{equation*}
    \bSigma_{\frac{2k + l(l-1)}{2}, \frac{2k+l(l-1)}{2}} \leq \left(\frac{c_2^4}{n^2} + \frac{c_2^8}{n^4}\right)s^2(\bP^{(i)}).
\end{equation*}
This  shows Equation~\eqref{eq:sigma_variance_bound}.

To prove part a), define $\bF^{(i,n)}$ as
$$ \bF_{st}^{(i,n)}:= \bE^{(i)}_{st}\left( \bV_{sk}\bV_{tl} + \bV_{tk}\bV_{sl}\right),$$
which is  bounded above, because of Assumption~\ref{assump:Delocalization}, by
\begin{equation}
    \left|\bF_{st}^{(i,n)}\right| \leq \frac{2c_2^2}{n}. \label{eq:bound-E_st}
\end{equation}
Equations~\eqref{eq:bound-E_st} and \eqref{eq:rate-variance-P} imply that for any $\epsilon>0$ and sufficiently large $n$, 
$$\left|\bF_{st}^{(i,n)}\right| < \epsilon \left[\operatorname{Var}\left((\bV^\top\bE^{(i)}\bV)_{kl}\right)\right]^{1/2},$$
for all $s,t$ with $1 \leq s \leq n, 1 \leq t \leq n$. Hence, for all $n$ sufficiently large, 
\begin{equation*}
    \frac{1}{\operatorname{Var}\left((\bV^\top\bE^{(i)}\bV)_{kl}\right)}\sum_{s=1}^{n-1}\sum_{t=s+1}^n\e\left[|\bF_{st}^{(i,n)}|^2\mathbbm{1}\left\{|\bF_{st}^{(i,n)}| > \epsilon \left[\operatorname{Var}\left((\bV^\top\bE^{(i)}\bV)_{kl}\right)\right]^{1/2}\right\}\right]=0.
\end{equation*} 
This shows that the Lindeberg's condition is satisfied, and by the Lindeberg-Feller Central Limit Theorem applied to Equation~\eqref{eq:VEV-2}, we have that
\[\left(\bSigma_{\frac{2k + l(l-1)}{2}, \frac{2k+l(l-1)}{2}}\right)^{-1/2}(\bV^\top\bE^{(i)}\bV)_{kl} \overset{d}{\rightarrow} \mathcal{N}\left(0,1\right). \]

To prove part b), let $\bs\in\real^r$ be an arbitrary vector. We show that for any $\bs$,
\begin{equation}
\bs^\top \bSigma^{-1/2} \text{vec}(\bV^\top\bE^{(i)}\bV) \overset{d}{\rightarrow} \bs^\top\mathcal{N}(0, \bI_r)=\mathcal{N}(0, \bs^\top\bs).   
\label{eq:asymp-normality-bS}
\end{equation}
The above quantity can be expressed as a sum of independent random variables
\begin{align*}
    \bs^\top \bSigma^{-1/2} \text{vec}(\bV^\top\bE^{(i)}\bV)    & = \sum_{k=1}^d\sum_{l=k}^d\bs_{\frac{2k+l(l-1)}{2}}\left\{ \sum_{k'=1}^d\sum_{l'=k'}^d\bSigma^{-1/2}_{\frac{2k+l(l-1)}{2}, \frac{2k'+l'(l'-1)}{2}}\left[(\bV^\top\bE^{(i)}\bV)_{k'l'}\right] \right\}\\
    & = \sum_{s=1}^{n-1}\sum_{t=s+1}^n\bE^{(i)}_{st}\sum_{k=1}^d\sum_{l=k}^d\bs_{\frac{2k+l(l-1)}{2}} \sum_{k'=1}^d\sum_{l'=k'}^d\bSigma^{-1/2}_{\frac{2k+l(l-1)}{2}, \frac{2k'+l'(l'-1)}{2}}\left(\bV_{sk'}\bV_{tl'} + \bV_{tk'}\bV_{sl'}\right).
\end{align*}
Define
$$ \bG_{st}^{(i,n)}:=\bE^{(i)}_{st}\sum_{k=1}^d\sum_{l=k}^d\bs_{\frac{2k+l(l-1)}{2}} \sum_{k'=1}^d\sum_{l'=k'}^d\bSigma^{-\frac{1}{2}}_{\frac{2k+l(l-1)}{2}, \frac{2k'+l'(l'-1)}{2}}\left(\bV_{sk'}\bV_{tl'} + \bV_{tk'}\bV_{sl'}\right).$$
The magnitude of each of these summands can be bounded above as
\begin{align}
    |\bG_{st}^{(i,n)}|   &\leq   |\bE^{(i)}_{st}|  \sum_{k=1}^d\sum_{l=k}^d\left|\bs_{\frac{2k+l(l-1)}{2}}\right| \sum_{k'=1}^d\sum_{l'=k'}^d \left|\bSigma^{-\frac{1}{2}}_{\frac{2k+l(l-1)}{2}, \frac{2k'+l'(l'-1)}{2}}\right| \left|\bV_{sk'}\bV_{tl'} + \bV_{tk'}\bV_{sl'}\right|\nonumber\\
    & \leq \frac{2c_2^2}{n} \sum_{k=1}^d\sum_{l=k}^d\left|\bs_{\frac{2k+l(l-1)}{2}}\right| \sum_{k'=1}^d\sum_{l'=k'}^d \left|\bSigma^{-\frac{1}{2}}_{\frac{2k+l(l-1)}{2}, \frac{2k'+l'(l'-1)}{2}}\right| \nonumber\\
    & \leq \frac{2c_2^2}{n} \left\|\bs\right\|_2
    \left\{\sum_{k=1}^d\sum_{l=k}^d \left(\sum_{k'=1}^d\sum_{l'=k'}^d \left|\bSigma^{-\frac{1}{2}}_{\frac{2k+l(l-1)}{2}, \frac{2k'+l'(l'-1)}{2}}\right|\right)^2\right\}^{\frac{1}{2}} \nonumber\\
    & \leq \frac{2c_2^2}{n} \left\|\bs\right\|_2
    \left\{\sum_{k=1}^d\sum_{l=k}^d \sum_{k'=1}^d\sum_{l'=k'}^d r\left|\bSigma^{-\frac{1}{2}}_{\frac{2k+l(l-1)}{2}, \frac{2k'+l'(l'-1)}{2}}\right|^2\right\}^{\frac{1}{2}} \nonumber\\
    & = \frac{2\sqrt{r}c_2^2}{n} \left\|\bs\right\|_2\left\|\bSigma^{-1/2}\right\|_F \nonumber\\
     & \leq \frac{2\sqrt{r}c_2^2}{n} \left\|\bs\right\|_2 (\sqrt{r}|\lambda_{\max}(\bSigma^{-1/2})|\nonumber\\
    & \leq  \frac{2rc_2^2\left\|\bs\right\|_2 }{n|\lambda_{\min}(\bSigma)|^{1/2}} = o( \left\|\bs\right\|_2),
    \label{eq:proof-Fbound}
\end{align}
where the last equation follows by Assumption~\ref{assump:score_lambdamin}. The bilinearity of the covariance and Equation~\eqref{eq:proof-Cov-vev} give that
\begin{equation}
    \operatorname{Var}\left(\bs^\top  \bSigma^{-1/2} \text{vec}(\bV^\top \bE^{(i)}\bV)\right) = \bs^\top \bs.
    \label{eq:proof-variance_sumS}
\end{equation}
Equations~\eqref{eq:proof-variance_sumS} and \eqref{eq:proof-Fbound} imply that for any $\epsilon>0$ and sufficiently large $n$, 
$$\left|\bG_{st}^{(i,n)}\right| < \epsilon \operatorname{Var}\left(\bs^\top  \bSigma^{1/2} \text{vec}(\bV^\top \bE^{(i)}\bV)\right)^{1/2}  = \epsilon\|\bs\|_2,$$
uniformly for all $s,t$ with $1 \leq s \leq n, 1 \leq t \leq n$. Hence, for all $n$ sufficiently large, 
\begin{equation*}
    \frac{1}{\|\bs\|_2^2}\sum_{s=1}^{n-1}\sum_{t=s+1}^n\e\left[|\bG_{st}^{(i,n)}|^2\mathbbm{1}\left\{|\bG_{st}^{(i,n)}| > \epsilon \|\bs\|_2\right\}\right]=0.
\end{equation*} 
This shows that Lindeberg's condition is satisfied for all $n$ sufficiently large, and by the Lindeberg-Feller Central Limit Theorem, we have that the convergence in distribution given by Equation~\eqref{eq:asymp-normality-bS} holds. Since this convergence is valid for any $\bs\in\real^r$, the  Cram\'er-Wold theorem implies that
\[\bSigma^{-1/2}\operatorname{vec}(\bV^\top \bE^{(i)}\bV) \overset{d}{\rightarrow} \mathcal{N}\left(\mathbf{0}_r,\bI_r\right). \]

\end{proof}

\begin{proof}[Proof of Theorem~\ref{thm:COSIE_CLT}]
The results follow by combining the decomposition given in the left hand side of Equation~\ref{eq:R-decomposition} in Lemma~\ref{lem:COSIE_R_decomp} and the asymptotic distribution for the right hand side obtained in Lemma~\ref{lem:VEV_normality}.
\end{proof}

\begin{proof}[Proof of Corollary~\ref{corollary:variance-VEV}]
By Lemma~\ref{lem:COSIE_R_decomp}, there exist a sequence of orthogonal matrices $\bW$ such that
$$\widehat{\bR}^{(i)} - \bW^\top \bR^{(i)} \bW+\bH_m^{(i)} = \bW\bV^\top \bE^{(i)}\bV\bW.$$
The term of the right hand side converges to a multivariate normal distribution by Lemma~\ref{lem:VEV_normality}, and using the same arguments as in Equation~\eqref{eq:rate-variance-P}, the variance of its entry $k,l\in[d], k\leq l,$ can be bounded from above as
\begin{align*}
    \operatorname{Var}( [(\bV\bW)^\top \bE^{(i)}(\bV\bW)]_{kl}) = &\sum_{s=1}^{n}\sum_{t=1}^n \operatorname\bP^{(i)}_{st}(1- \bP^{(i)}_{st})\left((\bV\bW)_{sk}(\bV\bW)_{tl}\right)^2 +\\
    & 2\sum_{s=1}^{n-1}\sum_{t=s+1}^n\bP^{(i)}_{st}(1- \bP^{(i)}_{st})\left((\bV\bW)_{sk}(\bV\bW)_{tl}\right)^2\left((\bV\bW)_{sl}(\bV\bW)_{tk}\right)^2\nonumber\\
    & \leq \max_{s,k}\left((\bV\bW)_{sk}^4 + 2(\bV\bW)_{sk}^8\right)s^2(\bP^{(i)}).
\end{align*}
Finally, observe that $|(\bV\bW)_{sk}| \leq \|\bV_{s\cdot}\|_2\|\bW_{\cdot k}\|_2 \leq \sqrt{\frac{{d}}{n}}$, which completes the proof.
\end{proof}

\end{document}